\documentclass[11pt,a4paper,reqno]{amsart}
\usepackage{amsthm,amsmath,amsfonts,amssymb,amsxtra,appendix,bookmark,euscript,delarray,dsfont}
\usepackage{enumerate}
\usepackage{color,graphicx}
\usepackage{tikz}
\usepackage[latin1]{inputenc}

\theoremstyle{plain}
\newtheorem{theorem}{Theorem}
\newtheorem{lemma}[theorem]{Lemma}
\newtheorem*{lemma*}{Lemma}

\newtheorem{proposition}[theorem]{Proposition}

\newtheorem*{conjecture*}{Conjecture}

\theoremstyle{definition}

\theoremstyle{remark}
\newtheorem{remark}{Remark}


\DeclareMathOperator{\Tr}{Tr}

\DeclareMathOperator{\loc}{loc}
\DeclareMathOperator{\Span}{span}


\def\leqslant{\le}
\def\bq{\begin{eqnarray}}
\def\eq{\end{eqnarray}}
\def\bqq{\begin{eqnarray*}}
\def\eqq{\end{eqnarray*}}
\def\nn{\nonumber}

\def\eps{\varepsilon}

\newcommand{\norm}[1]{\left\lVert #1 \right\rVert}
\renewcommand{\phi}{\varphi}
\newcommand\1{{\ensuremath {\mathds 1} }}

\def\GN{\mathrm{GN}}
\def\LT{\mathrm{LT}}
\def\BLT{\mathrm{BLT}}

\def\sym{\mathrm{sym}}

\def\C{\mathbb{C}}
\def\N{\mathbb{N}}
\def\R{\mathbb{R}}

\def\cD{\mathcal{D}}
\def\cF{\mathcal{F}}
\def\cH{\mathcal{H}}

\def\V {\mathcal{V}}

\def\cN{\mathcal{N}}

\newcommand{\bDelta}{{\mbox{$\triangle$}\hspace{-8.0pt}\scalebox{0.8}{$\triangle$}}}
\allowdisplaybreaks


\title[Fractional Hardy-Lieb-Thirring and related inequalities]{Fractional Hardy-Lieb-Thirring and related inequalities for interacting systems}

\author[D. Lundholm]{Douglas Lundholm}
\address{KTH Royal Institute of Technology, Sweden} 
\email{dogge@math.kth.se,}

\author[P.T. Nam]{Phan Th\`anh Nam}
\address{Institute of Science and Technology Austria, Austria} 
\email{pnam@ist.ac.at}

\author[F. Portmann]{ Fabian Portmann}
\address{University of Copenhagen, Denmark} 
\email{fabian@math.ku.dk}


\begin{document}
\date{August, 2015}

\begin{abstract}
	We prove analogues of the Lieb-Thirring and Hardy-Lieb-Thirring 
	inequalities for many-body quantum systems with fractional kinetic 
	operators and homogeneous interaction potentials, where no anti-symmetry 
	on the wave functions is assumed. 
	These many-body inequalities imply interesting one-body interpolation 
	inequalities, and we show that the corresponding one- and 
	many-body inequalities are actually equivalent in certain cases.
\end{abstract}

\maketitle

\section{Introduction}

The uncertainty principle and the exclusion principle are two of the most 
important concepts of quantum mechanics. 
In 1975, Lieb and Thirring \cite{LieThi-75,LieThi-76} gave an elegant 
combination of these principles in a semi-classical lower bound on the 
kinetic energy of fermionic systems. 
They showed that there exists a constant $C_{\rm LT}>0$ depending only on 
the dimension $d \ge 1$ such that the inequality
\begin{align} \label{eq:LT-fermion}
	\left\langle \Psi, \sum_{i=1}^N -\Delta_{i} \Psi \right\rangle 
	\ge C_{\LT} \int_{\R^d} \rho_\Psi(x)^{1+2/d} \,dx
\end{align}
holds true for every function $\Psi\in H^1((\R^d)^N)$ and for all 
$N \in \N$, provided that $\Psi$ is normalized and anti-symmetric, 
namely $\|\Psi\|_{L^2(\R^{dN})}=1$ and 
\begin{align} \label{eq:anti-symmetry}
	\Psi(x_1,\dots,x_i,\dots,x_j,\dots,x_N)= - \Psi(x_1,\dots,x_j,\dots,x_i,\dots,x_N),\quad \forall i\ne j.
\end{align}
The left hand side of \eqref{eq:LT-fermion}
is the expectation value of the kinetic energy
operator for $N$ particles, and
for every $N$-body wave function $\Psi \in L^2((\R^d)^N)$, 
its one-body density is defined by 
$$
	\rho_\Psi(x) := \sum_{j=1}^N \int_{\R^{d(N-1)}} |\Psi(x_1,\dots,x_{j-1},x,x_{j+1},\dots,x_N)|^2 \prod\limits_{i\ne j} dx_i.
$$
Note that $\int_{Q}\rho_{\Psi}$ can be interpreted as the expected number of 
particles to be found on a subset $Q \subset \R^d$ in the probability 
distribution given by $|\Psi|^2$. In particular, $\int_{\R^d}\rho_{\Psi} = N$.

The Lieb-Thirring inequality can be seen as a many-body generalization of the Gagliardo-Nirenberg inequality
\begin{align} \label{eq:GN}
	\left( \int_{\R^d} |\nabla u(x)|^{2} dx \right) 
	\left( \int_{\R^d} |u(x)|^2 dx \right)^{2/d} 
	\ge C_{\GN} \int_{\R^d} |u(x)|^{2 (1+2/d)} dx,
\end{align}
for $u\in H^1(\R^d)$. Note that for $d\geq3$, the Gagliardo-Nirenberg inequality \eqref{eq:GN} is a consequence of 
Sobolev's inequality
\begin{align}\label{eq:Sobolev}
	\| \nabla u \|_{L^2(\R^d)} \ge C_{\rm S} \|u \|_{L^{2d/(d-2)}(\R^d)}
\end{align}
and the H\"older interpolation inequality for $L^p$-spaces. 
Moreover, 
Sobolev's inequality can actually be obtained from Hardy's inequality
\begin{align}\label{eq:Hardy}
	\| \nabla u \|_{L^2(\R^d)}^2 \ge \frac{(d-2)^2}{4} \int_{\R^d} \frac{|u(x)|^2}{|x|^2}\,dx,
	\quad d > 2,
\end{align}
by a symmetric-decreasing rearrangement argument
(see, e.g., \cite[Sec.~4]{FraSei-08}).
 
All of the 
inequalities \eqref{eq:GN}-\eqref{eq:Sobolev}-\eqref{eq:Hardy} are 
quantitative formulations of the uncertainty principle. 
On the other hand, the anti-symmetry
\eqref{eq:anti-symmetry}, which is crucial for the 
Lieb-Thirring inequality \eqref{eq:LT-fermion} to hold, 
corresponds to {Pauli's exclusion principle} for fermions. 
In fact, inequality \eqref{eq:LT-fermion} fails to apply to the 
product wave function 
$$\Psi(x_1,x_2,\dots,x_N)=u(x_1)u(x_2)\cdots u(x_N)=:u^{\otimes N}(x_1,x_2,\dots,x_N),$$
which is a typical state of 
bosons\footnote{In general, bosonic wave functions satisfy 
\eqref{eq:anti-symmetry} with a plus instead of a minus sign.}. 
In this case $\rho_{u^{\otimes N}}(x) = N |u(x)|^2$ and we only have
the weaker inequality
\begin{align}\label{eq:LT-fermion-uN}
	\left\langle u^{\otimes N}, \Big(\sum_{i=1}^N -\Delta_{i} \Big) u^{\otimes N} \right\rangle 
	\ge C N^{-2/d} \int_{\R^d} \rho_{u^{\otimes N}}(x)^{1+2/d}dx,
\end{align}
which is, however, {\em equivalent} to the Gagliardo-Nirenberg inequality 
\eqref{eq:GN}.

The discovery of Lieb and Thirring goes back to the stability of matter problem 
(see \cite{LieSei-10} for a pedagogical introduction to this subject). 
It is often straightforward to derive the finiteness of the ground state energy 
of quantum systems 
from a formulation of the uncertainty principle such as 
\eqref{eq:GN}, \eqref{eq:Sobolev} or \eqref{eq:Hardy}.
However, the fact that 
the energy does not diverge faster than proportionally to the number of particles 
--- that is, stability in a thermodynamic sense --- 
is much more subtle and for this the exclusion principle is crucial. 
It was Dyson and Lenard \cite{DysLen-67,DysLen-68} who first proved 
thermodynamic stability for fermionic Coulomb systems, 
and their proof is based on 
a \emph{local} formulation of the exclusion principle,
which is a relatively weak consequence of \eqref{eq:anti-symmetry}. 
Later Lieb and Thirring \cite{LieThi-75} gave a much shorter proof of the 
stability of matter using their more powerful inequality \eqref{eq:LT-fermion}.

Recently, Lundholm and Solovej \cite{LunSol-13} realized that the local 
exclusion principle in the original work of 
Dyson and Lenard \cite{DysLen-67,DysLen-68}, 
when combined with local formulations of the uncertainty principle, 
actually implies 
the Lieb-Thirring inequality \eqref{eq:LT-fermion}. 
From this point of view, they derived Lieb-Thirring inequalities for anyons, 
two-dimensional particles which do not satisfy the full anti-symmetry \eqref{eq:anti-symmetry}
but still fulfill a fractional exclusion. 
The same approach was also employed to prove Lieb-Thirring inequalities for 
fractional statistics particles in one dimension by the same authors 
\cite{LunSol-14}, as well as for fermions with certain point interactions 
by Frank and Seiringer \cite{FraSei-12}. 

Following the spirit in \cite{LunSol-13}, Lundholm, Portmann and Solovej \cite{LunPorSol-14} 
found that Lieb-Thirring type inequalities still hold true for particles 
without any symmetry assumptions 
--- and therefore in particular for bosons ---
provided that the exclusion principle is replaced by a sufficiently strong 
repulsive interaction between particles. 
For example, they proved that there exists a constant $C>0$ depending only 
on the dimension $d\ge 1$ such that for every normalized function 
$\Psi\in H^1((\R^d)^N)$ and all $N \in \N$,
\begin{align} \label{eq:LT-boson}
	\left\langle \Psi, \left( \sum_{i=1}^N -\Delta_{i} + \sum_{1\le i<j \le N} \frac{1}{|x_i-x_j|^{2}} \right) \Psi \right\rangle 
	\ge C \int_{\R^3} \rho_{\Psi}(x)^{1+2/d} \,dx.
\end{align}

\noindent The appearance of the inverse-square interaction in \eqref{eq:LT-boson} 
is natural as it makes  all terms in the inequality scale 
in the same way.
\medskip

The aims of our paper are threefold. 
\smallskip

$\bullet$ We generalize the Lieb-Thirring inequality \eqref{eq:LT-boson} to the 
fractional kinetic operator $(-\Delta)^s$ for an arbitrary power $s>0$, 
with matching interaction $|x-y|^{-2s}$. 
The non-local property of $(-\Delta)^s$ for non-integer $s$
makes the inequality more involved. 
Nevertheless, the fermionic analogue of this inequality 
(without the interaction term) has been known for a long time in the context
of relativistic stability \cite{Daubechies-83}. 
For the interacting bosonic version
we will follow the strategy of \cite{LunPorSol-14}, using local uncertainty 
and exclusion, but we also develop several new tools.
In particular, we will introduce a new covering lemma 
which provides an elegant way to combine the local uncertainty and exclusion 
into a single bound.
\smallskip

$\bullet$ We prove a stronger version of the Lieb-Thirring inequality 
\eqref{eq:LT-boson} with the kinetic operator replaced by
$(-\Delta)^s-\mathcal{C}_{d,s}|x|^{-2s}$ and with
the interaction $|x-y|^{-2s}$, 
for all $0<s<d/2$. Here $\mathcal{C}_{d,s}$ is the optimal constant in the 
Hardy inequality \cite{Herbst-77}
$$(-\Delta)^s-\mathcal{C}_{d,s}|x|^{-2s} \ge 0.$$
Our result can be seen as a bosonic analogue to the Hardy-Lieb-Thirring 
inequality for fermions found by Ekholm, Frank, Lieb and Seiringer 
\cite{EkhFra-06,FraLieSei-07,Frank-09}. 

\smallskip

$\bullet$ Just as the Lieb-Thirring inequality \eqref{eq:LT-fermion}
implies the one-body interpolation inequality \eqref{eq:GN}, the same 
will be shown to be true for these generalized many-body inequalities.
For instance, our bosonic Hardy-Lieb-Thirring inequality implies the
one-body interpolation inequality
\begin{multline*}
	\left\langle u, \Big((-\Delta)^s - \mathcal{C}_{d,s} |x|^{-2s} \Big)  u \right\rangle^{1-2s/d} 
	\left( \iint_{\R^d \times \R^d} \frac{|u(x)|^2|u(y)|^2}{|x-y|^{2s}}\,dxdy \right)^{2s/d} \\
	\ge  C \int_{\R^d} |u(x)|^{2(1+2s/d)}\,dx,
\end{multline*}
for $u \in H^s(\R^d)$ and $0<s<d/2$.
Moreover,
we prove the {\em equivalence} between the (bosonic) 
Lieb-Thirring/Hardy-Lieb-Thirring inequalities and the corresponding 
one-body interpolation inequalities when $0<s\le 1$.
Since one-body interpolation inequalities have been studied actively for a 
long time, we believe that this equivalence could inspire many new 
directions to the many-body theory.  

\smallskip

In the next section our results will be presented in detail and an
outline of the rest of the paper given.

\smallskip

\noindent\textbf{Acknowledgment.}
We thank Jan Philip Solovej, Robert Seiringer and Vladimir Maz'ya for helpful discussions, 
as well as Rupert Frank and the anonymous referee for useful comments.
Part of this work has been carried out during a visit at the Institut Mittag-Leffler (Stockholm).
D.L. acknowledges financial support by the grant KAW 2010.0063 from the Knut and Alice Wallenberg Foundation
and the Swedish Research Council grant no. 2013-4734.
P.T.N. is supported by 
the People Programme (Marie Curie Actions) of the European Union's Seventh Framework 
Programme (FP7/2007-2013) under REA grant agreement no. 291734. 
F.P. acknowledges support from the ERC project no. 321029 ``The mathematics of the structure of matter".

\section{Main results}
\subsection{Fractional Lieb-Thirring inequality} \label{ssec:FLT}
Our first aim of the present paper
is to generalize \eqref{eq:LT-boson} to the fractional kinetic operator 
$(-\Delta)^s$ for an arbitrary power $s>0$, 
and with a matching interaction $|x-y|^{-2s}$.
The operator $(-\Delta)^s$ is defined as the multiplication operator $|p|^{2s}$ 
in Fourier space, namely
$$
	\left[(-\Delta)^s f\right]^\wedge\!(p)= |p|^{2s} \widehat{f}(p), 
	\quad \widehat{f}(p):= \frac{1}{(2\pi)^{d/2}}\int_{\R^d}f(x)e^{-ip\cdot x}\,dx.
$$
The associated space $H^s(\R^d)$ is a Hilbert space with norm
$$
	\|u\|_{H^s(\R^d)}^2 := \|u\|_{L^2(\R^d)}^2 + \|u\|_{\dot H^s(\R^d)}^2,
	\qquad
	\|u\|_{\dot H^s(\R^d)}^2 := \langle u, (-\Delta)^s u\rangle,
$$
and the addition of a positive interaction potential is to be understood 
as the sum of non-negative forms.

Our first result is the following
\begin{theorem}[Fractional Lieb-Thirring inequality]\label{thm:LT_frac}
	For all $d\ge 1$ and $s>0$, there exists a constant
	$C>0$ depending only on $d$ and $s$ such that for all $N\in \N$ 
	and for every $L^2$-normalized function 
	$\Psi\in H^s(\R^{dN})$,
	\begin{align}\label{eq:LT_frac} 
		\left\langle \Psi, \left( \sum_{i=1}^N (-\Delta_{i})^s 
		+ \sum_{1\le i<j \le N} \frac{1}{|x_i-x_j|^{2s}} \right) \Psi \right\rangle 
		\ge C \int_{\R^d} \rho_{\Psi}(x)^{1+2s/d}\,dx.
	\end{align}
\end{theorem}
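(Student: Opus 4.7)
The strategy is to extend the local approach of Lundholm-Solovej and Lundholm-Portmann-Solovej from $s=1$ to general $s>0$ by pairing a \emph{local uncertainty principle} with a \emph{local exclusion principle} and summing them via an adaptive covering of $\R^d$ into cubes. On each cube $Q$ of the covering, the goal is to bound $\int_Q \rho_\Psi^{1+2s/d}$ by either a localized kinetic contribution or a localized interaction contribution, and then to add up these cube-by-cube bounds into the global inequality \eqref{eq:LT_frac}.

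For the local uncertainty, I would aim for an inequality of the form
\[
\int_Q \rho_\Psi^{1+2s/d}\, dx \;\le\; C_1\, T_Q(\Psi)\, N_Q^{2s/d} \;+\; C_2\, \ell(Q)^{-2s}\, N_Q^{1+2s/d},
\]
where $\ell(Q)$ is the side length, $N_Q:=\int_Q \rho_\Psi$, and $T_Q(\Psi)$ is a suitably localized fractional kinetic form. For $0<s<1$ this should be extractable from the Gagliardo representation
\[
\langle u,(-\Delta)^s u\rangle \;\simeq\; \iint_{\R^d\times\R^d} \frac{|u(x)-u(y)|^2}{|x-y|^{d+2s}}\, dx\,dy
\]
restricted to $Q\times Q$, combined with a fractional Sobolev/Poincar\'e embedding on $Q$ and a Hoffmann-Ostenhof-type reduction of the many-body kinetic energy to $\langle \sqrt{\rho_\Psi},(-\Delta)^s \sqrt{\rho_\Psi}\rangle$. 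For $s\ge 1$, a modification of this step (via spectral localization or iteration in the exponent $s$) should yield the analogous bound.

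For the local exclusion, the pointwise estimate $|x_i-x_j|^{-2s}\ge (\sqrt{d}\,\ell(Q))^{-2s}$ for $x_i,x_j\in Q$ together with the elementary variance bound $\langle n_Q(n_Q-1)\rangle\ge (N_Q-1)_+^2$ gives
\[
\Big\langle \Psi,\sum_{\substack{1\le i<j\le N\\ x_i,x_j\in Q}} \frac{1}{|x_i-x_j|^{2s}}\Psi\Big\rangle \;\ge\; c\, \ell(Q)^{-2s}\, (N_Q-1)_+^2.
\]
I would then perform an adaptive stopping-time subdivision of a large initial cube, halting on sub-cubes where $N_Q$ drops below a fixed threshold of order $1$. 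On leaf cubes, local uncertainty dominates and controls $\int_Q \rho_\Psi^{1+2s/d}$ by $T_Q(\Psi)$, while on ancestor cubes with $N_Q$ larger than the threshold, the second uncertainty term can be matched by local exclusion, since $\ell(Q)^{-2s} N_Q^{1+2s/d}\lesssim \ell(Q)^{-2s}(N_Q-1)_+^2$ once $N_Q$ is bounded away from $0$. Summing the resulting estimates over the partition reconstructs \eqref{eq:LT_frac}.

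The main obstacle is the non-locality of $(-\Delta)^s$ for non-integer $s$: unlike $-\Delta$, the fractional kinetic energy does not split additively across a partition, since the Gagliardo form always leaves cross terms $\iint_{Q\times Q^c}|u(x)-u(y)|^2/|x-y|^{d+2s}\,dx\,dy$ between a cube and its complement. Controlling or absorbing these non-local residues while keeping the summed local bound sharp is precisely where the new covering lemma announced by the authors is needed; my proof plan depends on such a lemma producing a partition whose geometry is adapted to $\rho_\Psi$ and whose non-local cross contributions can be reabsorbed into the local kinetic estimate, thereby fusing the two local principles into the single inequality \eqref{eq:LT_frac}.
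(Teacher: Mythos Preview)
Your overall architecture---local uncertainty plus local exclusion, glued together by an adaptive covering---is indeed the paper's strategy, and your local exclusion bound is essentially the paper's Lemma~3. However, you have misidentified both the main obstacle and the role of the covering lemma, and your plan for the many-body uncertainty step has a genuine gap for $s>1$.

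First, the non-local cross terms are \emph{not} the difficulty. Writing $s=m+\sigma$ with $m\in\N_0$ and $0\le\sigma<1$, the paper applies the Gagliardo representation to $D^\alpha u$ for $|\alpha|=m$ and then simply \emph{discards} the off-diagonal pieces $\iint_{Q\times Q'}$, which is legitimate because the integrand is non-negative. This gives $\langle u,(-\Delta)^s u\rangle \ge \sum_Q \|u\|_{\dot H^s(Q)}^2$ for any disjoint family of cubes, with no residue to reabsorb. The covering lemma has nothing to do with these cross terms.

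Second, the actual purpose of the covering lemma is combinatorial: the local uncertainty on each cube leaves a \emph{negative} error $-C|Q|^{-2s/d}\int_Q\rho_\Psi$, while the local exclusion contributes $\frac{1}{2d^s|Q|^{2s/d}}\big[(\int_Q\rho_\Psi)^2-\int_Q\rho_\Psi\big]$. The covering lemma (Lemma~6) produces a partition with $\int_Q\rho_\Psi<\Lambda$ on every cube \emph{and} the weighted inequality $\sum_Q |Q|^{-2s/d}\big[(\int_Q\rho_\Psi)^2-\tfrac{\Lambda}{a}\int_Q\rho_\Psi\big]\ge 0$, so that for $\Lambda$ large enough the summed exclusion dominates the summed negative uncertainty error. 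This is the step your stopping-time description gestures at but does not pin down; in particular, on leaf cubes you have $\int_Q\rho_\Psi<\Lambda$, which is not enough on its own to kill the negative term---you need the global balance the covering lemma provides.

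Third, your reduction to $\langle\sqrt{\rho_\Psi},(-\Delta)^s\sqrt{\rho_\Psi}\rangle$ via Hoffmann--Ostenhof fails for $s>1$ (the diamagnetic inequality underlying it is false there), and your ``spectral localization or iteration'' fallback is not a proof. The paper instead passes the many-body kinetic energy through the one-body density matrix $\gamma_\Psi^{(1)}=\sum_n u_n\overline{u_n}$, applies the one-body local uncertainty to each $u_n$, and recombines via H\"older and the triangle inequality in $L^{1+2s/d}$ (Lemma~5). This works for all $s>0$ and is the missing ingredient in your plan.
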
 
Since our result 
holds without restrictions on the symmetry of the wave function, 
and therefore in particular also for bosons,
we consider it as a 
bosonic analogue to the fermionic inequality\footnote{Throughout 
	our paper, $C$ denotes a generic positive constant. 
	Two $C$'s in different places may refer to two different constants.}
\begin{align}\label{eq:LT_frac_fermions} 
	\left\langle \Psi, \sum_{i=1}^N (-\Delta_{i})^s \Psi \right\rangle 
	\ge C \int_{\R^d} \rho_{\Psi}(x)^{1+2s/d} \,dx,
\end{align}
which holds for wave functions $\Psi$ satisfying the anti-symmetry \eqref{eq:anti-symmetry}, 
where the constant $C>0$ is independent of $N$ and $\Psi$.
The original motivation for such a fermionic fractional Lieb-Thirring inequality 
has been its usefulness in the context of 
stability of relativistic matter
(see \cite{Daubechies-83} and the recent review \cite{LieSei-10}). 
Our inequality \eqref{eq:LT_frac} for $s=1/2$ and $d=3$ is relevant to 
the physical situation of relativistic particles 
(which could be identical bosons, or even distinguishable)
with Coulomb interaction.

\begin{remark}\label{rmk:diagonal-set}
	Note that when $2s\ge d$, any wave function in the quadratic form domain 
	of the operator on the left hand side of \eqref{eq:LT_frac} must vanish smoothly 
	on the diagonal set
	$$
		\bDelta := \{(x_i)_{i=1}^N \in (\R^d)^N : x_i=x_j~\text{for some}~i\ne j\}.
	$$
	When $d=s=1$, it is well known \cite{Girardeau-60} that any symmetric wave function vanishing 
	on the diagonal set is equal to an anti-symmetric wave function up to multiplication by an 
	appropriate sign function, and hence \eqref{eq:LT_frac} boils down to a consequence of 
	\eqref{eq:LT_frac_fermions} in this particular case. In higher dimension, this correspondence between bosonic and 
	fermionic wave functions is not available and it is interesting to ask if a Lieb-Thirring 
	inequality of the form \eqref{eq:LT_frac_fermions} holds true for all wave functions vanishing on the diagonal 
	set (without the anti-symmetry assumption). We refer to Section~\ref{sec:proof_constants} 
	for a detailed discussion.
\end{remark}

\begin{remark}
	We have for simplicity fixed the interaction strength in \eqref{eq:LT_frac} to unity. 
	One may consider adding a coupling parameter $\lambda>0$ 
	to the interaction term
	and study the inequality
	\begin{equation} \label{eq:LT_frac_coupling}
			\left\langle \Psi, \left( \sum_{i=1}^N (-\Delta_{i})^s 
			+ \sum_{1\le i<j \le N} \frac{\lambda}{|x_i-x_j|^{2s}} \right) \Psi \right\rangle 
			\ge C(\lambda) \int_{\R^d} \rho_{\Psi}(x)^{1+2s/d} \,dx
	\end{equation}
	for all $N\ge 2$ and all normalized wave functions $\Psi \in H^s(\R^{dN})$,
	with a constant $C(\lambda)$ independent of $N$ and $\Psi$. 
	It is clear that $C(\lambda)>0$ for all $\lambda,s>0$ and $d\ge 1$. 
	However, since the parameter $\lambda$ cannot be removed by scaling, 
	it is interesting to ask for the behavior of the optimal constant of \eqref{eq:LT_frac_coupling}
	in the limits $\lambda \to 0$ and $\lambda \to \infty$. 
	This issue will be thoroughly discussed in Section~\ref{sec:proof_constants}.
\end{remark}

\begin{remark}\label{rem:magnetic}
	When $0<s\leq1$ we can also replace the one-body kinetic operator $(-\Delta)^{s}$ by
	$|i\nabla + A(x)|^{2s}$ with $A \in L_{\loc}^{2}(\R^d;\R^d)$ being a magnetic vector potential.
	By virtue of the
	diamagnetic inequality (see e.g. \cite[Eq.~(2.3)]{FraLieSei-07})
	\begin{align} \label{eq:diam_ineq}
		\langle u, |i\nabla+A|^{2s} u \rangle \geq \langle |u|, (-\Delta)^{s} |u| \rangle 
	\end{align}
	the inequalities \eqref{eq:LT_frac}-\eqref{eq:LT_frac_fermions}-\eqref{eq:LT_frac_coupling}
	hold with the same constants (independent of $A$).
\end{remark}

When $s \notin \N$, the Lieb-Thirring inequality \eqref{eq:LT_frac} 
cannot be obtained from a straightforward modification of the proof 
of \eqref{eq:LT-boson} in \cite{LunPorSol-14}. 
The non-local property of $(-\Delta)^s$ complicates the local uncertainty 
principle and a fractional interpolation inequality on cubes is required. 
We will follow the strategy in \cite{LunPorSol-14}, but several technical 
adjustments are presented. 
The details are provided in Section~\ref{sec:LT}.  
We believe that our 
presentation here provides a unified framework for proving Lieb-Thirring
inequalities by means of local formulations of the uncertainty and exclusion
principles, and
can be used to simplify many parts of the previous works 
\cite{LunSol-13,LunSol-14,FraSei-12,LunPorSol-14}.
For comparison, we also make a note about fermions and weaker exclusion 
principles in Section \ref{sec:fermion}.

\subsection{Hardy-Lieb-Thirring inequality} 
Recall that for every $0<s<d/2$
we have the Hardy inequality\footnote{The case $s \ge d/2$ requires
	additional boundary conditions at $x=0$ and will not be treated here.
	See \cite{Yafaev-99}, and \cite{EkhEnb-10} for corresponding 
	fermionic Lieb-Thirring inequalities.}
\cite{Herbst-77}
$$
	(-\Delta)^s - \mathcal{C}_{d,s} |x|^{-2s} \ge 0
	\quad \text{on~}L^2(\R^d),
$$
where the sharp constant is
$$
	\mathcal{C}_{d,s} := 2^{2s}\left( \frac{\Gamma((d+2s)/4)}{\Gamma((d-2s)/4)} \right)^2.
$$
We will prove the following improvement of Theorem \ref{thm:LT_frac} 
when $0<s<d/2$.

\begin{theorem}[Hardy-Lieb-Thirring inequality] \label{thm:HLT_frac} 
	For all $d\ge 1$ and $0<s<d/2$, there exists a constant $C>0$ depending only on 
	$d$ and $s$ such that for every ($L^2$-normalized) function 
	$\Psi\in H^s(\R^{dN})$ 
	and for all $N\in \N$, we have
	\begin{multline}\label{eq:HLT_frac}
		\left\langle \Psi, \left(  \sum_{i=1}^N \left( (-\Delta_{i})^s - \frac{\mathcal{C}_{d,s}}{|x_i|^{2s}} \right) 
		+ \sum_{1\le i<j\le N} \frac{1}{|x_i-x_j|^{2s}}\right) \Psi \right\rangle\\
		\ge C \int_{\R^d} \rho_\Psi(x)^{1+2s/d}\,dx.
	\end{multline}
\end{theorem}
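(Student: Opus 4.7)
The plan is to upgrade the proof of Theorem~\ref{thm:LT_frac} by replacing its one-body input, the fractional Gagliardo-Nirenberg inequality \eqref{eq:GN}, with a Hardy-improved analogue. The rest of the architecture, namely the local uncertainty, the local exclusion coming from the pair interaction, and the covering lemma, should then run essentially verbatim on the operator $K_s := (-\Delta)^s - \mathcal{C}_{d,s}|x|^{-2s}$ instead of on $(-\Delta)^s$.

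The first ingredient is the fractional Hardy-Sobolev-Maz'ya inequality of Frank \cite{Frank-09}: for $0<s<d/2$ there exists $c_{d,s}>0$ such that
\[
\langle \psi, K_s\, \psi\rangle \;\ge\; c_{d,s}\,\|\psi\|_{L^{2d/(d-2s)}(\R^d)}^2, \qquad \psi \in \dot H^s(\R^d).
\]
H\"older interpolation with the $L^2$ norm turns this Sobolev-type bound into a Hardy-improved Gagliardo-Nirenberg inequality of the form
\[
\langle \psi, K_s\, \psi\rangle \cdot \|\psi\|_{L^2(\R^d)}^{4s/d}
\;\ge\; c\,\|\psi\|_{L^{2(1+2s/d)}(\R^d)}^{2(1+2s/d)},
\]
which has exactly the same homogeneity as the left-hand side of \eqref{eq:HLT_frac}.

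The second ingredient is to localize this estimate to a cube $Q$ of side~$\ell$. Via a standard mean-subtraction (Poincar\'e-type) argument, this produces a local Hardy-improved uncertainty: the cube-wise expectation of $\sum_i K_s^{(i)}$ together with a controlled $L^2$-mass correction dominates a term of the form $\ell^{-2s}\int_Q \rho_\Psi^{1+2s/d} / N_Q^{2s/d}$, where $N_Q = \int_Q \rho_\Psi$. The mass correction is then absorbed using the same local exclusion mechanism as in Theorem~\ref{thm:LT_frac}: the pair interaction $|x_i-x_j|^{-2s}$ supplies a lower bound of order $\ell^{-2s}\, N_Q(N_Q-1)$ on $Q\times Q$. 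Assembling the cube-wise estimates through the covering lemma developed for Theorem~\ref{thm:LT_frac} then yields \eqref{eq:HLT_frac}.

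The main obstacle I anticipate is the definition and handling of the localized quadratic form of $K_s$ on a cube that is close to or contains the origin: the sharpness of the Hardy constant $\mathcal{C}_{d,s}$ leaves no slack, so any commutator-type error introduced by the cutoff must be shown to be at most comparable to the Hardy-improvement gap on the right-hand side above. A natural route is to split the covering into (i) cubes well separated from $0$, where $K_s \ge (1-\varepsilon)(-\Delta)^s$ in quadratic form sense and one may directly invoke Theorem~\ref{thm:LT_frac}, and (ii) cubes near the origin, which by the scaling $x\mapsto \lambda x$ (under which every term in \eqref{eq:HLT_frac} is invariant) can be reduced to a single universal estimate, precisely the localized Hardy-improved Gagliardo-Nirenberg inequality established in the first step.
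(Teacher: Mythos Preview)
Your overall architecture matches the paper's: local exclusion from the pair interaction, a Hardy-improved local uncertainty, and the covering lemma are exactly the three ingredients used in Section~\ref{sec:HLT}. The Hardy-improved Gagliardo-Nirenberg input you extract from \cite{Frank-09} is the same as the paper's \eqref{eq:Hardy-improved-GN}. However, two of the technical steps you wave at are not ``standard'' and constitute the real work in the paper's proof.

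First, your localization step. Passing from the global inequality $\langle\psi,K_s\psi\rangle\|\psi\|_{L^2}^{4s/d}\ge c\|\psi\|_{2(1+2s/d)}^{2(1+2s/d)}$ to a local uncertainty on a cube containing the origin is \emph{not} a mean-subtraction argument: you need to introduce cutoffs $\chi,\eta$ and control the commutator of $(-\Delta)^s$ with multiplication by $\chi$. For non-integer $s$ this is a genuine fractional IMS-type estimate (Lemma~\ref{lem:fractional-IMS} in the paper), and because the Hardy constant is sharp the error must be bounded by a \emph{lower-order} Sobolev norm so that it can be re-absorbed using the Hardy-improvement \eqref{eq:Hardy-improved}. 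Your ``comparable to the Hardy-improvement gap'' correctly identifies the issue, but the mechanism you propose does not supply it.

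Second, your dichotomy into far and near cubes. The claim that $K_s\ge(1-\varepsilon)(-\Delta)^s$ on cubes separated from the origin is false as a form inequality (there is no spectral gap); what is true, and what the paper uses, is that on such a cube $|x|^{-2s}\le C|Q|^{-2s/d}$, so the Hardy term is absorbed into the mass correction $C|Q|^{-2s/d}\!\int_Q|u|^2$ already present in Lemma~\ref{lem:local_uncert}. More importantly, to make the near/far split compatible with the covering construction you need exactly one cube to be centred at $0$ at every stage of the subdivision; the paper achieves this by running the covering lemma with $k=3$ rather than $k=2$ (the last clause of Lemma~\ref{lem:covering}), a device you have not mentioned and without which the scaling reduction to a ``single universal estimate'' does not go through.
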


For $s=1/2$ and $d=3$, the operator in \eqref{eq:HLT_frac} can be interpreted 
as the Hamiltonian of a system of $N$ equally charged relativistic particles
(bosons, fermions or distinguishable)
moving around a static `nucleus' of opposite charge located at $x=0$, where
all particles interact via Coulomb forces.

Our result \eqref{eq:HLT_frac} can be considered as the interacting bosonic 
analogue to the following Hardy-Lieb-Thirring inequality for fermions:
\begin{align} \label{eq:HLT-fermions}
	\left\langle \Psi,  \sum_{i=1}^N \left( (-\Delta_i)^s - \frac{\mathcal{C}_{d,s}}{|x_i|^{2s}} \right) \Psi \right\rangle 
	\ge C \int_{\R^d} \rho_\Psi(x)^{1+2s/d}dx,
\end{align}
which holds for every wave function $\Psi$ satisfying the anti-symmetry 
\eqref{eq:anti-symmetry}. 
The bound \eqref{eq:HLT-fermions} was proved for $s=1$ by  Ekholm and Frank 
\cite{EkhFra-06}, for $0<s\le 1$ by Frank, Lieb and Seiringer \cite{FraLieSei-07}, 
and for $0<s<d/2$ by Frank \cite{Frank-09}. 
In fact, \eqref{eq:HLT-fermions} is dually equivalent to a lower bound 
on the sum of negative eigenvalues of the one-body operator
$(-\Delta)^s - \mathcal{C}_{d,s}|x|^{-2s}+V(x)$ and 
such a bound was
proved in \cite{EkhFra-06,FraLieSei-07,Frank-09}. 
Unfortunately this duality argument 
(which has been the traditional route to proving Lieb-Thirring inequalities)
does not apply in our interacting bosonic case.

\begin{remark}
	The motivation for \eqref{eq:HLT-fermions} was critical stability 
	of relativistic matter in the presence of magnetic fields.
	In both \eqref{eq:HLT_frac} and \eqref{eq:HLT-fermions} we can, for $0 < s \leq 1$, 
	replace $(-\Delta)^s$ with a magnetic operator 
	$|i\nabla + A(x)|^{2s}$; cf. Remark~\ref{rem:magnetic}.
\end{remark}

The proof of \eqref{eq:HLT-fermions} in \cite{Frank-09} is based on the 
following powerful improvement of Hardy's inequality: 
For every $d\ge 1$ and $0<t<s<d/2$, there exists a constant $C>0$ depending 
only on $d,s,t$ such that 
\begin{align} \label{eq:Hardy-improved}
 (-\Delta)^s - \frac{\mathcal{C}_{d,s}}{|x|^{2s}} \ge C \ell^{s-t}(-\Delta)^t -\ell^{s} \quad \text{on}~ L^2(\R^d),\quad \forall \ell>0.
\end{align}
Note that
by taking the expectation against a function $u$ and optimizing over $\ell>0$, 
we can see that \eqref{eq:Hardy-improved} is equivalent to the interpolation 
inequality 
\begin{align} \label{eq:Hardy-improved-interpolation}
\left \langle u, \left( (-\Delta)^s - \frac{\mathcal{C}_{d,s}}{|x|^{2s}} \right) u \right \rangle^{t/s}\left( \int_{\R^d} |u|^2\right)^{1-t/s} \ge C \langle u, (-\Delta)^t u\rangle.
\end{align}
By Sobolev's embedding (see, e.g., 
\cite{LieLos-01,CotTav-04} for the sharp constant)
\begin{align} \label{eq:Sobolev-embedding}
	\langle u, (-\Delta)^t u\rangle \ge C \|u\|_{L^q(\R^d)}^2, 
	\quad q=\frac{2d}{d-2t},
	\quad 0<t<d/2,
\end{align}
the bound \eqref{eq:Hardy-improved-interpolation} implies the 
Gagliardo-Nirenberg type inequality
\begin{align} \label{eq:Hardy-improved-GN}
\left \langle u, \left( (-\Delta)^s - \frac{\mathcal{C}_{d,s}}{|x|^{2s}} \right) u \right \rangle^{t/s}\left( \int_{\R^d} |u|^2\right)^{1-t/s} \ge C \|u\|_{L^q(\R^d)}^2, \quad q=\frac{2d}{d-2t}.
\end{align}
The bound \eqref{eq:Hardy-improved} was first proved for $s=1/2$, $d=3$ 
by Solovej, S\o rensen and Spitzer  \cite[Lemma 11]{SolSorSpi-10} and was 
generalized to the full case $0<s<d/2$ by Frank \cite[Theorem 1.2]{Frank-09}.

In fact, \eqref{eq:Hardy-improved} is also a key ingredient of our 
proof of \eqref{eq:HLT_frac}.
The overall strategy is similar to the proof 
of the fractional Lieb-Thirring inequality \eqref{eq:LT_frac}. 
However, since the system is not translation invariant anymore, 
the local uncertainty becomes much more involved. 
We need to introduce a partition of unity and 
use \eqref{eq:Hardy-improved-interpolation} and \eqref{eq:Hardy-improved-GN} to
control the localization error caused by the non-local operator 
$(-\Delta)^s$. The details will be provided in Section \ref{sec:HLT}.

\subsection{Interpolation inequalities} \label{ssec:interpolation} 
Let us concentrate again on the case $0<s<d/2$. 
By applying the Lieb-Thirring inequality in Theorem~\ref{thm:LT_frac} to the 
product wave function $\Psi = u^{\otimes N}$ with $\|u\|_{L^2(\R^d)}=1$, we obtain  
\begin{align}\label{eq:LT-inter-0}
	N \langle u, (-\Delta)^s u \rangle + \frac{N(N-1)}{2} 
	&\iint_{\R^d \times \R^d} \frac{|u(x)|^2|u(y)|^2}{|x-y|^{2s}}\,dxdy \nn\\
	&\ge C N^{1+2s/d} \int_{\R^d} |u(x)|^{2(1+2s/d)}\,dx.
\end{align}
Since the inequality holds for all $N\in \N$, it then follows that 
\begin{align}\label{eq:LT-inter-2}
	\mu \langle u, (-\Delta)^s u \rangle 
	+ \frac{\mu^2}{2} \iint_{\R^d\times \R^d} \frac{|u(x)|^2|u(y)|^2}{|x-y|^{2s}}\,dxdy \nn\\
	\ge C \mu^{1+2s/d} \int_{\R^d} |u(x)|^{2(1+2s/d)} dx
\end{align}
for all $\mu \geq 1$ (possibly with a smaller constant). 
On the other hand, by using Sobolev's embedding
\eqref{eq:Sobolev-embedding} 
and H\"older's interpolation inequality for $L^p$-spaces, we get
\begin{align} \label{eq:GN-s}
	\langle u, (-\Delta)^s u \rangle \ge C \|u\|^2_{L^{2d/(d-2s)}} \ge C\frac{\int_{\R^d} |u|^{2(1+2s/d)}}{(\int_{\R^d} |u|^2)^{2s/d}} = C\int_{\R^d} |u|^{2(1+2s/d)}
\end{align}
which implies \eqref{eq:LT-inter-2} when $0<\mu<1$. 
Thus \eqref{eq:LT-inter-2} holds for all $\mu>0$, and 
optimizing over $\mu$ gives the 
interpolation inequality
\begin{multline} \label{eq:LT-inter-1}
	\langle u, (-\Delta)^s u \rangle^{1-2s/d} 
	\Big( \iint_{\R^d \times \R^d} \frac{|u(x)|^2|u(y)|^2}{|x-y|^{2s}}\,dxdy \Big)^{2s/d} \\
	\ge  C \int_{\R^d} |u(x)|^{2(1+2s/d)}\,dx
\end{multline} 
for $u\in H^s(\R^d)$, $\|u\|_{L^2}=1$. 
Note that in \eqref{eq:LT-inter-1} the normalization 
$\|u\|_{L^2}=1$ can be dropped by scaling.

The interpolation inequality \eqref{eq:LT-inter-1} was first proved for 
the case $s=1/2,d=3$ by Bellazzini, Ozawa and Visciglia \cite{BelOzaVis-11}, 
and was then generalized to the general case $0<s<d/2$ by 
Bellazzini, Frank and Visciglia \cite{BelFraVis-14}. 
The proofs in \cite{BelFraVis-14,BelOzaVis-11} use fractional calculus 
on the whole space and are very different from our approach using the 
Lieb-Thirring inequality. 

\begin{remark} 
	The inequality \eqref{eq:LT-inter-1} is an end-point case of a series of 
	interpolation inequalities in \cite{BelFraVis-14}. 
	The existence of optimizer in this case is open. 
	If a minimizer exists, by formally analyzing the Euler-Lagrange equation 
	we expect that it belongs to $L^{2+\eps}(\R^d)$ for any $\eps>0$ small, 
	but not $L^2(\R^d)$. Thus \eqref{eq:LT-inter-1} can be interpreted as an 
	energy bound for systems of infinitely many particles.
\end{remark}

\begin{remark}
	Note that, when $s\ge d/2$, one has 
	$$
		\iint_{\R^d \times \R^d} \frac{|u(x)|^2|u(y)|^2}{|x-y|^{2s}}\,dxdy = + \infty 
	$$
	for all $u\ne 0$ since $|x|^{-2s}$ is not locally integrable. 
	Therefore, the interpolation inequality \eqref{eq:LT-inter-1} is trivial in this case. 
	However, the Lieb-Thirring inequality \eqref{eq:LT_frac} is non-trivial 
	for all $s> 0$ because the wave 
	function $\Psi$ may vanish on the diagonal set (see Remark \ref{rmk:diagonal-set}).
\end{remark}

In principle, the implication of a one-body inequality from a many-body 
inequality is not surprising. However, in the following result we show that 
the reverse implication also holds true under certain conditions.   

\begin{theorem}\label{thm:LT_equiv}
	For $0<s<d/2$ and $s \leq 1$, the Lieb-Thirring inequality 
	\eqref{eq:LT_frac} is \emph{equivalent} to 
	the one-body interpolation inequality \eqref{eq:LT-inter-1}.
\end{theorem}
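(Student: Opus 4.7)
The forward implication---LT implies Interp---is essentially the argument sketched around \eqref{eq:LT-inter-0}--\eqref{eq:LT-inter-1}: apply \eqref{eq:LT_frac} to $\Psi = u^{\otimes N}$, extend from integer $N$ to real $\mu > 0$ using the Sobolev-type bound \eqref{eq:GN-s} for $0 < \mu < 1$, and optimize in $\mu$. My proposal therefore concentrates on the reverse implication Interp $\Rightarrow$ LT, which is the nontrivial content of Theorem~\ref{thm:LT_equiv}.

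Given a normalized $\Psi \in H^s(\R^{dN})$, the plan is to reduce both sides of \eqref{eq:LT_frac} to one-body functionals of $\rho_\Psi$ and then invoke \eqref{eq:LT-inter-1} on $\sqrt{\rho_\Psi}$. Three ingredients are needed. First, the fractional Hoffmann-Ostenhof inequality
\[
\left\langle \Psi, \sum_{i=1}^N (-\Delta_i)^s \Psi \right\rangle \ge \langle \sqrt{\rho_\Psi}, (-\Delta)^s \sqrt{\rho_\Psi}\rangle,
\]
valid exactly for $0 < s \le 1$, which is the sole origin of the $s \le 1$ restriction in the theorem. Second, a Lieb-Oxford-type lower bound for the kernel $|x|^{-2s}$,
\[
\left\langle \Psi, \sum_{1 \le i < j \le N} \frac{1}{|x_i - x_j|^{2s}} \Psi \right\rangle \ge \frac{1}{2} \iint \frac{\rho_\Psi(x)\rho_\Psi(y)}{|x-y|^{2s}}\,dxdy - C_{\mathrm{LO}} \int \rho_\Psi^{1+2s/d}\,dx,
\]
extending the classical Lieb-Oxford inequality (known for $s = 1/2$, $d = 3$) to the range $0 < 2s < d$; such a bound can be proved by the same local uncertainty/exclusion framework used for Theorem~\ref{thm:LT_frac}. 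Third, the inequality \eqref{eq:LT-inter-1} applied to $u = \sqrt{\rho_\Psi}$---its scale invariance removes the need for $L^2$-normalization---yields the density-level interpolation
\[
\langle \sqrt{\rho_\Psi}, (-\Delta)^s \sqrt{\rho_\Psi}\rangle^{1 - 2s/d} \left( \iint \frac{\rho_\Psi(x) \rho_\Psi(y)}{|x-y|^{2s}}\, dxdy\right)^{2s/d} \ge C \int \rho_\Psi^{1+2s/d}\,dx.
\]

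To conclude, abbreviate by $Y$, $Z$, $X$ and $A$ the many-body kinetic energy, the many-body interaction, $\int \rho_\Psi^{1+2s/d}\,dx$, and $\iint \rho_\Psi(x)\rho_\Psi(y)/|x-y|^{2s}\,dxdy$ respectively, and set $\theta := 2s/d$. The three ingredients give $Y \ge \langle \sqrt{\rho_\Psi}, (-\Delta)^s \sqrt{\rho_\Psi}\rangle$, $A \le 2Z + 2C_{\mathrm{LO}} X$, and $X \le C^{-1} \langle \sqrt{\rho_\Psi}, (-\Delta)^s \sqrt{\rho_\Psi}\rangle^{1-\theta} A^{\theta}$. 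Substituting yields $X \le C' Y^{1-\theta}(Z + C_{\mathrm{LO}} X)^{\theta}$, from which a two-case analysis---splitting on whether $Z \ge C_{\mathrm{LO}} X$ (so that $Z + C_{\mathrm{LO}} X \le 2Z$ and Young's inequality gives $X \lesssim Y^{1-\theta} Z^\theta \lesssim Y + Z$) or $Z < C_{\mathrm{LO}} X$ (so that $Z + C_{\mathrm{LO}} X \le 2 C_{\mathrm{LO}} X$ and the inequality becomes $X^{1-\theta} \lesssim Y^{1-\theta}$, hence $X \lesssim Y$)---produces $X \lesssim Y + Z$ in either case, which is precisely \eqref{eq:LT_frac}. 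The main obstacle I anticipate is the Lieb-Oxford-type bound: while the Coulomb case is folklore-level, establishing the sharp version above for all $0 < 2s < d$ requires careful localization, though this should parallel techniques developed in the proof of Theorem~\ref{thm:LT_frac}.
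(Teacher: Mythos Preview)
Your proposal is correct and takes essentially the same route as the paper: reduce to one-body quantities via the fractional Hoffmann-Ostenhof inequality and a Lieb-Oxford bound for the homogeneous kernel $|x|^{-2s}$, then apply \eqref{eq:LT-inter-1} to $u=\sqrt{\rho_\Psi}$. The paper closes the algebra slightly differently---introducing a small parameter $\eps\in(0,1]$ in front of the interaction and using Young's inequality to get $(C\eps^{2s/d}-C_{\rm LO}\eps)\int\rho_\Psi^{1+2s/d}$, then choosing $\eps$ small---but your two-case split works equally well.

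One correction to your anticipated obstacle: the Lieb-Oxford inequality for $|x|^{-\gamma}$, $0<\gamma<d$, is \emph{not} proved via the local uncertainty/exclusion machinery of Theorem~\ref{thm:LT_frac}. The paper establishes it (Lemma~\ref{lem:LO}, Appendix~\ref{apd:LO}) by the Fefferman--de la Llave decomposition $|x-y|^{-\gamma}=c_{d,\gamma}\int_0^\infty\int_{\R^d}\1_{B_R}(x-u)\1_{B_R}(y-u)\,du\,R^{-d-\gamma-1}dR$ together with a Hardy--Littlewood maximal function estimate, following Lieb--Solovej--Yngvason. No localization into cubes is involved.
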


As we explained above, the implication of \eqref{eq:LT-inter-1}
from \eqref{eq:LT_frac} works for all $0<s<d/2$.
The implication of \eqref{eq:LT_frac} from \eqref{eq:LT-inter-1} 
is more subtle and we obtain it from
fractional versions of the Hoffmann-Ostenhof inequality \cite{Hof-77}, 
which requires $0<s\leq1$, and a generalized version of the 
Lieb-Oxford inequality \cite{Lieb-79,LieOxf-80} for homogeneous potentials.
We will provide these details in Section~\ref{sec:LT-interpolation}.

\begin{remark}
	Unfortunately, we can not offer an exact relation between the optimal constants in 
	\eqref{eq:LT_frac} and \eqref{eq:LT-inter-1}. 
	On the other hand, from \eqref{eq:LT-inter-0} it is obvious that the 
	optimal constant in \eqref{eq:LT_frac} is not bigger than 
	the optimal constant $C_1$ in the inequality
	\begin{align*} 
		\langle f, (-\Delta)^s f \rangle + \frac{1}{2} 
		\iint_{\R^d \times \R^d} \frac{|f(x)|^2|f(y)|^2}{|x-y|^{2s}}\,dxdy \ge C_1 \int_{\R^d} |f(x)|^{2(1+2s/d)}\,dx.
	\end{align*}
	for all $f\in H^s(\R^d)$ (not necessarily normalized), which is related to the optimal 
	constant $C$ in \eqref{eq:LT-inter-1} by the exact formula 
	$$
		\frac{C_1}{C}=\inf_{t>0} \left( 1+ \frac{t}{2} \right)t^{-2s/d} 
		= \left( 1- \frac{2s}{d} \right)^{-1+2s/d} \left( \frac{d}{4s} \right)^{2s/d}.
	$$
\end{remark}

By the same proof as that
of Theorem~\ref{thm:LT_equiv}, we also obtain the 
following equivalence for the Hardy-Lieb-Thirring inequality \eqref{eq:HLT_frac}.

\begin{theorem}\label{thm:HLT_equiv}
	For $0<s<d/2$ and $s \leq 1$, the Hardy-Lieb-Thirring inequality 
	\eqref{eq:HLT_frac} is equivalent to the one-body interpolation inequality
	\begin{align} \label{eq:HLT-inter-1}
		\left\langle u, \Big((-\Delta)^s - \mathcal{C}_{d,s} |x|^{-2s} \Big)  u \right\rangle^{1-2s/d} 
		& \left( \iint_{\R^d \times \R^d} \frac{|u(x)|^2|u(y)|^2}{|x-y|^{2s}}\,dxdy \right)^{2s/d} \nn \\
		&\ge C \int_{\R^d} |u(x)|^{2(1+2s/d)}\,dx.
	\end{align}
\end{theorem}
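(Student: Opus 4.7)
The plan is to mirror the proof of Theorem~\ref{thm:LT_equiv} almost verbatim, carrying the Hardy potential $-\mathcal{C}_{d,s}|x|^{-2s}$ through each step. For the forward direction $\eqref{eq:HLT_frac}\Rightarrow\eqref{eq:HLT-inter-1}$, test \eqref{eq:HLT_frac} against the product state $\Psi=u^{\otimes N}$ with $\|u\|_{L^2}=1$. Since $\sum_i((-\Delta_i)^s-\mathcal{C}_{d,s}|x_i|^{-2s})$ is a sum of one-body operators, it contributes $N\langle u,((-\Delta)^s-\mathcal{C}_{d,s}|x|^{-2s})u\rangle$, while the interaction gives $\binom{N}{2}\iint|u(x)|^2|u(y)|^2|x-y|^{-2s}\,dx\,dy$ and the right-hand side becomes $CN^{1+2s/d}\int|u|^{2(1+2s/d)}$. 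Relaxing $N\in\mathbb{N}$ to arbitrary $\mu\ge 1$ and handling $0<\mu<1$ via the Gagliardo--Nirenberg-type bound \eqref{eq:Hardy-improved-GN} at the critical exponent $t=sd/(d+2s)$, which gives exactly $q=2(1+2s/d)$, yields $\langle u,((-\Delta)^s-\mathcal{C}_{d,s}|x|^{-2s})u\rangle\ge C\int|u|^{2(1+2s/d)}$ under $\|u\|_{L^2}=1$. Optimization in $\mu>0$ and scaling then produce \eqref{eq:HLT-inter-1}.

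For the reverse direction I would reduce the many-body expectation to a functional of $\rho_\Psi$ using two tools. First, a fractional Hoffmann--Ostenhof inequality $\langle\Psi,\sum_i(-\Delta_i)^s\Psi\rangle\ge\langle\sqrt{\rho_\Psi},(-\Delta)^s\sqrt{\rho_\Psi}\rangle$, valid for $0<s\le 1$; combined with the diagonal identity $\langle\Psi,\sum_i\mathcal{C}_{d,s}|x_i|^{-2s}\Psi\rangle=\mathcal{C}_{d,s}\int\rho_\Psi/|x|^{2s}$, this bounds the one-body part of \eqref{eq:HLT_frac} below by $\mathcal{E}_1:=\langle\sqrt{\rho_\Psi},((-\Delta)^s-\mathcal{C}_{d,s}|x|^{-2s})\sqrt{\rho_\Psi}\rangle\ge 0$ (non-negative by Herbst's inequality). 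Second, a generalized Lieb--Oxford bound $\langle\Psi,\sum_{i<j}|x_i-x_j|^{-2s}\Psi\rangle\ge\tfrac12 D_s[\rho_\Psi]-C_0\int\rho_\Psi^{1+2s/d}$, with $D_s[\rho]:=\iint\rho(x)\rho(y)|x-y|^{-2s}\,dx\,dy$. Splitting off a factor $\lambda\in(0,1)$ of the interaction and dropping the positive remainder,
$$
\text{LHS of }\eqref{eq:HLT_frac}\ \ge\ \mathcal{E}_1+\tfrac{\lambda}{2}D_s[\rho_\Psi]-\lambda C_0\int\rho_\Psi^{1+2s/d}.
$$
Applying the one-body hypothesis \eqref{eq:HLT-inter-1} to $u=\sqrt{\rho_\Psi}$ gives $\mathcal{E}_1^{1-2s/d}D_s[\rho_\Psi]^{2s/d}\ge C\int\rho_\Psi^{1+2s/d}$, and a weighted AM--GM upgrades this to $\mathcal{E}_1+\tfrac{\lambda}{2}D_s[\rho_\Psi]\ge C_1\lambda^{2s/d}\int\rho_\Psi^{1+2s/d}$. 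Since $2s/d<1$, $\lambda^{2s/d}$ dominates $\lambda$ for small $\lambda$, so the net coefficient $C_1\lambda^{2s/d}-\lambda C_0$ is positive for suitably small $\lambda$, proving \eqref{eq:HLT_frac}.

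The main obstacle is the reduction to a one-body problem via the two ingredients above. The fractional Hoffmann--Ostenhof inequality is what forces the restriction $s\le 1$, and the Lieb--Oxford bound must be sharp enough that its error matches the Lieb--Thirring exponent $1+2s/d$, since otherwise the AM--GM absorption will fail. Both inputs should already be established in the proof of Theorem~\ref{thm:LT_equiv} and transfer verbatim to the present setting: the Hardy potential is invisible to each of them, and the only genuinely new feature is that the role of the purely kinetic $\langle\sqrt{\rho_\Psi},(-\Delta)^s\sqrt{\rho_\Psi}\rangle$ is now played by $\mathcal{E}_1$, whose non-negativity and compatibility with the one-body interpolation \eqref{eq:HLT-inter-1} both rest ultimately on the improved Hardy inequality \eqref{eq:Hardy-improved}.
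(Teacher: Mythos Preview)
Your proposal is correct and follows essentially the same route as the paper, which explicitly states that Theorem~\ref{thm:HLT_equiv} is proved ``by the same proof as that of Theorem~\ref{thm:LT_equiv}''. You have correctly identified all the ingredients: the forward direction via product states $u^{\otimes N}$ with the small-$\mu$ case handled by \eqref{eq:Hardy-improved-GN} (rather than \eqref{eq:GN-s}), and the reverse direction via Hoffmann--Ostenhof plus Lieb--Oxford, together with the key observation that the Hardy potential, being a one-body multiplication operator, satisfies $\langle\Psi,\sum_i|x_i|^{-2s}\Psi\rangle=\int|x|^{-2s}\rho_\Psi$ and hence combines cleanly with \eqref{eq:Hof} to yield $\mathcal{E}_1$.
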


The interpolation inequality \eqref{eq:HLT-inter-1} seems to be new. 
Note that the implication of \eqref{eq:HLT-inter-1} from \eqref{eq:HLT_frac} 
holds for all $0<s<d/2$ (by exactly the same argument as above), 
and hence \eqref{eq:HLT-inter-1} is also valid in this maximal 
range. There might be some way to prove \eqref{eq:HLT-inter-1} directly 
(as in the proof of \eqref{eq:LT-inter-1} in \cite{BelOzaVis-11,BelFraVis-14}), 
but we have not found such a proof yet. 

Finally, we mention that our approach in this paper can be used to prove many 
other interpolation inequalities which do not really come from many-body 
quantum theory. For example, we have 

\begin{theorem}[Isoperimetric inequality with non-local term] \label{thm:iso}
	For any $d \geq 2$ and $1/2 \le s < d/2$
	there exists a constant $C>0$ depending only on $d$ and $s$, 
	such that for all functions $u \in W^{1,2s}(\R^d)$ we have
	\begin{multline}\label{eq:LT-inter-3}
		\left( \int_{\R^d} |\nabla u|^{2s} dx \right)^{1-2s/d}
		\left(\iint_{\R^d \times \R^d} \frac{|u(x)|^{2s} |u(y)|^{2s}}{|x-y|^{2s}} dxdy\right)^{2s/d} \\
		\ge C \int_{\R^d} |u|^{2s(1+2s/d)}\,dx.
	\end{multline} 
\end{theorem}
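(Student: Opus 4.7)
The plan is to prove \eqref{eq:LT-inter-3} by the same local uncertainty / local exclusion / covering strategy that the paper uses for Theorems~\ref{thm:LT_frac} and \ref{thm:HLT_frac}, applied now to a single function.

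First I would establish the \emph{local uncertainty} on an arbitrary cube $Q\subset\R^d$ of sidelength $\ell$. The hypothesis $1/2\le s<d/2$ is equivalent to $1\le 2s<d$, so the Sobolev embedding $W^{1,2s}(Q)\hookrightarrow L^{2sd/(d-2s)}(Q)$ is available at its classical level. Scaling the unit-cube inequality gives
\[
    \|u\|_{L^{2sd/(d-2s)}(Q)}^{2s}\le C\left(\int_{Q}|\nabla u|^{2s}\,dx+\ell^{-2s}\int_{Q}|u|^{2s}\,dx\right),
\]
and a H\"older split of $2s(1+2s/d)=2s+4s^{2}/d$ with conjugate exponents $d/(d-2s)$ and $d/(2s)$ yields
\[
    \int_{Q}|u|^{2s(1+2s/d)}\le C\left(\int_{Q}|\nabla u|^{2s}+\ell^{-2s}\int_{Q}|u|^{2s}\right)\!\left(\int_{Q}|u|^{2s}\right)^{\!2s/d}.
\]

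Next I would note the \emph{local exclusion}: since $|x-y|\le\sqrt{d}\,\ell$ on $Q\times Q$, one trivially has
\[
    \iint_{Q\times Q}\frac{|u(x)|^{2s}|u(y)|^{2s}}{|x-y|^{2s}}\,dx\,dy\ge(\sqrt{d}\,\ell)^{-2s}\left(\int_{Q}|u|^{2s}\right)^{\!2}.
\]
With these ingredients at hand, I would invoke the covering lemma highlighted in Section~\ref{ssec:FLT} to produce, for any parameter $c>0$, a disjoint partition $\{Q_{k}\}$ of (the essential support of) $u$ by cubes on which $\int_{Q_{k}}|u|^{2s}$ is comparable to $c$. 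Writing $A=\int|\nabla u|^{2s}$ and $B=\iint|u(x)|^{2s}|u(y)|^{2s}|x-y|^{-2s}\,dx\,dy$, summing the local uncertainty over $\{Q_{k}\}$ and using local exclusion together with the disjointness of $\{Q_{k}\times Q_{k}\}\subset\R^{d}\times\R^{d}$ to estimate $\sum_{k}\ell(Q_{k})^{-2s}\lesssim B/c^{2}$ gives
\[
    \int_{\R^d}|u|^{2s(1+2s/d)}\,dx\;\lesssim\; c^{2s/d}A+c^{1+2s/d}\,\tfrac{B}{c^{2}}\;=\;c^{2s/d}A+c^{-1+2s/d}B.
\]
Optimizing in $c$, with the balanced choice $c\sim B/A$, produces exactly the desired bound $\lesssim A^{1-2s/d}B^{2s/d}$, which is \eqref{eq:LT-inter-3}.

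The only genuine obstacle is the covering step: one needs a stopping-time cube partition in which the local $L^{2s}$-masses are bounded both \emph{above and below} by constant multiples of $c$, so that local exclusion can be inverted cube by cube to yield the clean $B/c^{2}$ estimate on $\sum_{k}\ell(Q_{k})^{-2s}$. This is precisely what the new covering lemma of the paper is designed to provide. Once it is in hand the remaining steps above are an entirely routine interpolation; in particular the hypothesis $s\ge 1/2$ enters only in the local uncertainty, to guarantee that $2s\ge 1$ so that the Sobolev embedding used is at the standard level.
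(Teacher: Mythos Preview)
Your overall plan---local Sobolev uncertainty on cubes, local exclusion, the covering lemma, and then an optimization---is exactly the paper's route to \eqref{eq:LT-inter-3}. However, your description of the covering step contains a real misconception that would block you if you tried to carry it out as written.

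You say you need a partition in which the cube masses $\int_{Q_k}|u|^{2s}$ are bounded \emph{both above and below} by constant multiples of $c$, so as to invert local exclusion cube-by-cube and obtain $\sum_k \ell(Q_k)^{-2s}\lesssim B/c^2$. No stopping-time partition provides such a per-cube lower bound in general (a leaf of the dyadic tree may carry arbitrarily small mass), and Lemma~\ref{lem:covering} makes no such claim. What the covering lemma actually gives is only the upper bound $\int_Q f<\Lambda$ together with the \emph{aggregate} inequality
\[
	\sum_Q \frac{1}{|Q|^{2s/d}}\left[\Big(\int_Q f\Big)^2-\frac{\Lambda}{a}\int_Q f\right]\ge 0,
\]
which, after local exclusion $B\ge d^{-s}\sum_Q|Q|^{-2s/d}(\int_Q f)^2$, yields
\[
	\sum_Q \frac{1}{|Q|^{2s/d}}\int_Q f \;\le\; \frac{a\,d^s}{\Lambda}\,B.
\]
This is precisely the estimate your argument needs (with $c=\Lambda$), and once you plug it into your sum the rest of your computation goes through unchanged: $\int|u|^{2s(1+2s/d)}\lesssim \Lambda^{2s/d}A+\Lambda^{2s/d-1}B$.

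Two further small points. First, the covering lemma requires $\int_{\R^d}|u|^{2s}\ge\Lambda$, so you cannot freely optimize over all $\Lambda>0$; the paper instead fixes $\Lambda=d^s a$, treats the small-mass case $\int|u|^{2s}<d^s a$ directly by the global Sobolev inequality \eqref{eq:iso-Sobolev-Rd}, and only then passes to the interpolation form by rescaling $u\mapsto\mu u$ and optimizing over $\mu$---which is equivalent to your optimization over $c$. Second, your use of the Sobolev embedding and the role of $s\ge 1/2$ are exactly as in the paper.
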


This inequality seems to be new and it could be useful in the 
context of isoperimetric inequalities with competing non-local term; 
see \cite[Lemma~7.1]{KnuMur-14}, \cite[Lemma~5.2]{KnuMur-13} 
and \cite[Lemma~B.1]{Muratov-10} for relevant results. 
The proof of Theorem \ref{thm:iso} will be given in Section~\ref{sec:LT-interpolation}.

\section{Fractional Lieb-Thirring inequality}\label{sec:LT}
In this section we prove the fractional Lieb-Thirring inequality \eqref{eq:LT_frac}.
We shall follow the overall strategy in \cite{LunPorSol-14}, 
where we localize the interaction and kinetic energies into disjoint cubes, 
but we also introduce several new tools.

\subsection{Local exclusion}
The following result is a simplified version of the local exclusion principle in 
\cite[Theorem 2 and Section~4.2]{LunPorSol-14}.

\begin{lemma}[Local exclusion]\label{lem:local-ex} 
	For all $d\geq 1$, $s > 0$, for every normalized function $\Psi\in L^2(\R^{dN})$ and 
	for an arbitrary collection of disjoint cubes $Q$'s in $\R^d$, one has
	\begin{multline}\label{eq:local-exclusion}
		\left\langle \Psi , \sum_{1\le i<j \le N} \frac{1}{|x_i-x_j|^{2s}} \Psi \right \rangle
		\ge \sum_Q \frac{1}{2d^s|Q|^{2s/d}} \left[ \Big( \int_Q \rho_\Psi \Big)^2 - \int_Q \rho_\Psi \right]_+.
	\end{multline}
\end{lemma}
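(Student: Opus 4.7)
The plan is to exploit the fact that within a single cube $Q$ of side length $|Q|^{1/d}$, any two points are at distance at most $\sqrt{d}\,|Q|^{1/d}$, so the interaction $|x_i-x_j|^{-2s}$ is bounded below by $d^{-s}|Q|^{-2s/d}$ whenever both $x_i$ and $x_j$ lie in $Q$. This reduces matters to controlling the expected number of pairs of particles inside each cube.

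First I would introduce, for each cube $Q$ in the collection, the ``number operator''
$$
n_Q := \sum_{i=1}^N \1_Q(x_i),
$$
which is a multiplication operator with non-negative integer values. Since the cubes are disjoint, the contribution of pairs with both coordinates in $Q$ to the interaction sum satisfies
$$
\sum_{1\le i<j\le N}\frac{1}{|x_i-x_j|^{2s}} \ge \sum_Q \sum_{1\le i<j\le N}\frac{\1_Q(x_i)\1_Q(x_j)}{|x_i-x_j|^{2s}} \ge \sum_Q \frac{1}{d^s|Q|^{2s/d}}\sum_{1\le i<j\le N}\1_Q(x_i)\1_Q(x_j),
$$
and the double sum equals $\tfrac{1}{2}(n_Q^2-n_Q)=\tfrac12 n_Q(n_Q-1)$.

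Taking the expectation in $\Psi$ and applying Cauchy--Schwarz (or, equivalently, the fact that $\mathrm{Var}(n_Q)\ge 0$) gives
$$
\langle \Psi, n_Q^2\Psi\rangle \ge \langle \Psi, n_Q\Psi\rangle^2 = \left(\int_Q \rho_\Psi\right)^{\!2},
$$
while $\langle \Psi, n_Q\Psi\rangle = \int_Q \rho_\Psi$ follows directly from the definition of the one-body density. Hence
$$
\left\langle \Psi, \sum_{1\le i<j\le N}\frac{\Psi}{|x_i-x_j|^{2s}}\right\rangle \ge \sum_Q \frac{1}{2d^s|Q|^{2s/d}}\left[\Big(\int_Q \rho_\Psi\Big)^{\!2} - \int_Q \rho_\Psi\right].
$$
Finally, since $n_Q(n_Q-1)\ge 0$ operator-theoretically, the expectation $\langle \Psi, n_Q(n_Q-1)\Psi\rangle$ is non-negative, so each summand on the right is non-negative whenever the bracket is, and we may freely replace the bracket with its positive part $[\,\cdot\,]_+$ to obtain \eqref{eq:local-exclusion}.

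There is really no serious obstacle here; the only subtlety is noticing that the positive part can be inserted cube by cube because the operator $n_Q(n_Q-1)$ has a non-negative expectation independently for each $Q$, and not just in the sum over $Q$.
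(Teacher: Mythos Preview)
Your proof is correct and follows essentially the same route as the paper's: drop cross-cube interactions, use the diameter bound $|x-y|\le \sqrt{d}\,|Q|^{1/d}$ inside each cube, rewrite the pair count as $\tfrac12 n_Q(n_Q-1)$, and apply Cauchy--Schwarz to $\langle n_Q^2\rangle$. Your justification for inserting the positive part cube by cube (via $n_Q(n_Q-1)\ge 0$) is in fact slightly more explicit than the paper's, which leaves that step implicit; aside from a harmless typo in your final displayed inequality (the $\Psi$ in the numerator should be outside the fraction), there is nothing to add.
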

\begin{proof} 
	The following argument goes back to Lieb's work on the indirect energy \cite{Lieb-79}.
	Since the interactions between different cubes are positive and $|x-y|\le \sqrt{d}|Q|^{1/d}$ for all $x,y\in Q$, 
	we have
	\begin{align*}  
		\sum_{1\le i<j \le N} \frac{1}{|x_i-x_j|^{2s}} 
		&\ge \sum_Q \sum_{1\le i<j \le N} \frac{\1_Q(x_i)\1_Q(x_j)}{|x_i-x_j|^{2s}}\\
		&\ge \sum_Q \frac{1}{d^s|Q|^{2s/d}} \sum_{1\le i<j \le N} \1_Q(x_i)\1_Q(x_j)\\
		&= \sum_Q \frac{1}{2d^s|Q|^{2s/d}} \left[\left(\sum_{i=1}^N \1_Q(x_i) \right)^2 - \sum_{i=1}^N \1_Q(x_i) \right]_+.
	\end{align*}
	Taking the expectation against $\Psi$ and using the Cauchy-Schwarz inequality 
	$$
		\left\langle \Psi, \Big(\sum_{i=1}^N \1_Q(x_i) \Big)^2 \Psi \right \rangle 
		\ge \left\langle \Psi, \sum_{i=1}^N \1_Q(x_i)  \Psi \right \rangle ^2 = \left( \int_Q \rho_\Psi \right)^2,
	$$
	we obtain the desired estimate.
\end{proof}

\subsection{Local uncertainty}
Now we localize the kinetic energy into disjoint cubes $Q$'s.
For every $s>0$ we can write $s=m+\sigma$ with $m\in \{0,1,2,\dots\}$ 
and $0 \le \sigma<1$. 
Then for any one-body function $u \in H^s(\R^d)$ we have
\begin{align*}
	\langle u, (-\Delta)^{s} u \rangle 
	&= \int_{\R^d} |p|^{2s} |\widehat u (p)|^2 dp 
	= \int_{\R^d} |p|^{2\sigma} \Big( \sum_{i=1}^d {p_i}^2\Big)^m |\widehat u (p)|^2 dp \nn\\
	&= \sum_{|\alpha|=m} \frac{m!}{\alpha!} \int_{\R^d} |p|^{2\sigma} \prod_{i=1}^d p_i^{2\alpha_i} |\hat u (p)|^2  dp \nn\\
	&= \sum_{|\alpha|=m} \frac{m!}{\alpha!} \langle D^\alpha u, (-\Delta)^\sigma  D^\alpha u \rangle .
\end{align*}
The last sum is taken over multi-indices $\alpha=(\alpha_1,\dots,\alpha_d)\in \{0,1,2,\dots\}^d$ with
$$
	|\alpha|=\sum_{i=1}^d \alpha_i, \quad \alpha! =\prod_{i=1}^d (\alpha_i !) \quad \text{and}\quad 
	D^\alpha = \prod_{i=1}^d \frac{\partial^{\alpha_i}}{\partial_{r_i}^{\alpha_i}}.
$$
Here we denoted by $p=(p_1,p_2,\dots,p_d)\in \R^d$ and $r=(r_1,\dots,r_d)\in \R^d$, the variables in the Fourier space and the configuration space, respectively.  

If $s=m$, we have
\bq \label{eq:s=m}
	\langle u, (-\Delta)^{s} u \rangle = \sum_{|\alpha|=m} \frac{m!}{\alpha!} \int_{\R^d} |D^\alpha u| 
	\ge \sum_{|\alpha|=m} \frac{m!}{\alpha!} \sum_Q \int_Q |D^\alpha u|
\eq
for disjoint cubes $Q$'s. On the other hand, if $m<s<m+1$, 
then using the quadratic form
representation\footnote{Note that this formula only holds for $0<\sigma<1$.} 
(see, e.g., \cite[Lemma~3.1]{FraLieSei-07})
\begin{align} \label{eq:fractional-form}
	\langle f, (-\Delta)^{\sigma} f\rangle = c_{d,\sigma}\int_{\R^d}\int_{\R^d}\frac{|f(x)-f(y)|^2}{|x-y|^{d+2\sigma}}\,dx dy,
\end{align}
where
$$
c_{d,\sigma} :=\frac{2^{2\sigma-1}}{\pi^{d/2}} \frac{\Gamma((d+2\sigma)/2)}{|\Gamma(-\sigma)|},
$$
we have
\begin{align} \label{eq:s>m}
	\langle u, (-\Delta)^{s} u \rangle 
	&= c_{d,\sigma} \sum_{|\alpha|=m} \frac{m!}{\alpha!} \int_{\R^d \times \R^d} 
	\frac{|D^\alpha u (x) -D^\alpha u (y)|^2}{|x-y|^{d+2\sigma}} dxdy \nn\\
	& \ge c_{d,\sigma} \sum_{|\alpha|=m} \frac{m!}{\alpha!} \sum_Q \int_{Q \times Q} 
	\frac{|D^\alpha u (x) -D^\alpha u (y)|^2}{|x-y|^{d+2\sigma}} dxdy
\end{align}
for disjoint cubes $Q$'s. It is convenient to combine \eqref{eq:s=m} and \eqref{eq:s>m} into a single formula
\begin{align}\label{eq:u-Hs-Ts}
	\langle u, (-\Delta)^{s} u \rangle \ge \sum_Q \norm{u}_{\dot{H}^s(Q)}^2,
\end{align}
where the semi-norm $\norm{u}_{\dot{H}^s(Q)}^2$ of $u\in L^2(Q)$ on a cube $Q$ is defined by
\[
	\norm{u}_{\dot{H}^s(Q)}^2:= \left\{\begin{array}{ll}
	\sum\limits_{|\alpha|=m} \dfrac{m!}{\alpha!} \int_{Q} |D^\alpha u|^2, &\quad \text{if }s=m,\\ 
	c_{d,\sigma}\sum\limits_{|\alpha|=m} \dfrac{m!}{\alpha!} \iint_{Q\times Q} \dfrac{|D^\alpha u (x) 
	-D^\alpha u (y)|^2}{|x-y|^{d+2\sigma}}\,dxdy, &\quad \text{if } 0<\sigma<1.
	\end{array} \right.
\]

The following estimate plays an essential role in our proof.
\begin{lemma}[Local uncertainty]\label{lem:local_uncert}
	For every $d\geq 1$, $s>0$, cube $Q\subset \R^d$ and $u\in L^2(Q)$, one has
	\bq \label{eq:local-uncertainty}
		\norm{u}_{\dot{H}^s(Q)}^2  
		\ge \frac{1}{C}\,\frac{\int_Q |u|^{2(1+2s/d)}}{\Big(\int_Q |u|^2 \Big)^{2s/d}} - \frac{C}{|Q|^{2s/d}} \int_Q |u|^2
	\eq
	for a constant $C>0$ independent of $Q$ and $u$.
\end{lemma}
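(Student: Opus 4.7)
The plan is to reduce to the unit cube by scaling and then combine a Sobolev-type embedding on the cube with an $L^p$-interpolation. Setting $\ell := |Q|^{1/d}$ and $v(y) := u(\ell y)$ for $y\in Q_1:=[0,1]^d$, a direct change-of-variables check --- using that each derivative of order $m$ contributes a factor $\ell^{-m}$, and that in the fractional case the Gagliardo kernel together with the doubled Lebesgue measure contributes $\ell^{2d-(d+2\sigma)}$ --- shows that all three quantities in \eqref{eq:local-uncertainty} are homogeneous of degree $d-2s$ under this rescaling. It therefore suffices to establish the inequality, with a constant depending only on $d$ and $s$, on $Q_1$.

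Next I would invoke the Sobolev extension theorem for Lipschitz domains, which applies uniformly both to integer orders (Stein) and to fractional Slobodeckij orders, producing $\tilde v\in H^s(\R^d)$ with
\[
\|\tilde v\|_{L^2(\R^d)}^2 + \|\tilde v\|_{\dot H^s(\R^d)}^2 \le C\bigl(\|v\|_{L^2(Q_1)}^2 + \|v\|_{\dot H^s(Q_1)}^2\bigr).
\]
Composed with the standard embedding $H^s(\R^d)\hookrightarrow L^q(\R^d)$ --- taking $q=2d/(d-2s)$ when $s<d/2$ and any sufficiently large finite $q$ when $s\geq d/2$ --- this yields
\[
\|v\|_{L^q(Q_1)}^2 \le C\bigl(\|v\|_{\dot H^s(Q_1)}^2 + \|v\|_{L^2(Q_1)}^2\bigr).
\]

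To conclude, with $p := 1+2s/d$ one checks that $2<2p<q$ (the right inequality reducing to $d^2>d^2-4s^2$), and the H\"older interpolation
\[
\|v\|_{L^{2p}} \le \|v\|_{L^2}^{\theta}\,\|v\|_{L^q}^{1-\theta}, \qquad \theta=\tfrac{2s}{d+2s},
\]
satisfies $2p\theta=4s/d$ and $2p(1-\theta)=2$. Raising to the $2p$-th power and inserting the Sobolev estimate above gives
\[
\|v\|_{L^{2p}(Q_1)}^{2p} \le C\,\|v\|_{L^2(Q_1)}^{4s/d}\bigl(\|v\|_{\dot H^s(Q_1)}^2+\|v\|_{L^2(Q_1)}^2\bigr),
\]
which rearranges exactly to \eqref{eq:local-uncertainty} on $Q_1$ (the constant term $-C\|v\|_{L^2(Q_1)}^2$ scaling back to $-C|Q|^{-2s/d}\int_Q |u|^2$, as required). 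The main obstacle I anticipate is justifying that the semi-norm $\|\cdot\|_{\dot H^s(Q)}$ --- defined here with integration confined to $Q\times Q$ in the fractional part rather than to all of $\R^d\times\R^d$ --- is comparable to the intrinsic $W^{s,2}(Q)$-norm admitting a bounded Sobolev extension; this is trivial for integer $s$, and for fractional $s$ it is the classical identification of the Slobodeckij norm with the restriction norm on Lipschitz domains.
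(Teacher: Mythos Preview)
Your approach coincides with the paper's in its overall structure: scale to the unit cube, use the equivalence $\|v\|_{\dot H^s(Q_1)}^2+\|v\|_{L^2(Q_1)}^2\asymp\|v\|_{H^s(Q_1)}^2$ (the paper establishes this via a Poincar\'e inequality for $\dot H^\sigma(Q)$ together with Sobolev's embedding for the intermediate-order derivatives, which is exactly the issue you flag at the end), extend to $\R^d$, and combine Sobolev embedding with an interpolation step. For $0<s<d/2$ your argument is complete and essentially identical to the paper's.

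The gap is in the range $s\ge d/2$. Your H\"older step needs $2p(1-\theta)=2$ to land on the required exponent, which forces $\theta=(p-1)/p=2s/(d+2s)$ and hence $q=2d/(d-2s)$; this $q$ is unavailable when $s\ge d/2$. Taking instead ``any sufficiently large finite $q$'' gives $p(1-\theta)=(p-1)q/(q-2)>p-1=2s/d\ge 1$, so you only obtain
\[
\|v\|_{L^{2p}}^{2p}\le C\,\|v\|_{L^2}^{2p\theta}\bigl(\|v\|_{\dot H^s}^2+\|v\|_{L^2}^2\bigr)^{p(1-\theta)}
\]
with the bracket raised to a power strictly larger than $1$. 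Since $\|v\|_{L^2}^2\le\|v\|_{\dot H^s}^2+\|v\|_{L^2}^2$, this cannot be traded back to the required form $\|v\|_{L^2}^{4s/d}\bigl(\|v\|_{\dot H^s}^2+\|v\|_{L^2}^2\bigr)$. The paper avoids this by interpolating on the Fourier side first: H\"older in momentum space yields
\[
\|U\|_{\dot H^s(\R^d)}^{\theta}\,\|U\|_{L^2(\R^d)}^{1-\theta}\ge\|U\|_{\dot H^{\theta s}(\R^d)},\qquad \theta=\frac{d}{d+2s},
\]
and since $\theta s=ds/(d+2s)<d/2$ for \emph{every} $s>0$, the subcritical Sobolev embedding $\dot H^{\theta s}\hookrightarrow L^{2p}$ (with $2p=2d/(d-2\theta s)$) is always available and delivers the correct exponents directly, with no case distinction.
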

Before proving Lemma~\ref{lem:local_uncert}, let us clarify a technical point 
concerning the Sobolev space $H^s(Q)=W^{s,2}(Q)$, whose intrinsic norm can be defined by 
(see e.g. \cite[Section 7.36 and Theorem 7.48]{Adams-75})
$$
	\|u\|_{H^s(Q)}^2:= \norm{u}_{\dot{H}^s(Q)}^2 + \sum_{|\alpha|\le m} \int_Q |D^\alpha u|^2.
$$
Here recall that $s=m+\sigma$ with $m\in \{0,1,2,\dots\}$ and $0\le \sigma<1$. 
By Poincar\'e's inequality for $\dot H^\sigma(Q)$ 
(see, e.g., \cite[Lemma 2.2]{HurVah-13}) and the elementary inequality 
$|a-b|^2 \ge \frac{1}{2} |a|^2 - |b|^2 $ for $a,b\in \C$, we have
\begin{equation*} 
	C \norm{u}_{\dot{H}^s(Q)}^2 
	\ge \sum_{|\alpha|=m} \Big\| D^\alpha u 
		- \frac{1}{|Q|} \int_Q D^\alpha u \Big\|_{L^2(Q)}^2 
	\ge \frac{1}{2}\| D^\alpha u \|_{L^2(Q)}^2 - \frac{\left|\int_Q D^\alpha u\right|^2}{|Q|}.
\end{equation*}  
From the latter estimate and Sobolev's embedding, it is straightforward 
to obtain the following equivalence of norms 
\begin{align}\label{eq:equiv-norm-Sobolev}
		\|u\|_{H^s(Q)}^2 \geq \norm{u}_{\dot{H}^s(Q)}^2+ \int_Q |u|^2 \ge C_Q \|u\|_{H^s(Q)}^2,
\end{align}
for a constant $C_Q>0$ depending only on the the cube $Q$. Now we provide

\begin{proof}[Proof of Lemma~\ref{lem:local_uncert}]
	By translating and dilating, that is, replacing $u(x)$ by $u(\lambda(x-x_0))$ for $\lambda>0$ 
	and $x_0\in \R^d$, it suffices to consider the unit cube $Q=[0,1]^d$. 
	Then, thanks to \eqref{eq:equiv-norm-Sobolev}, it remains to prove the 
	fractional Gagliardo-Niren\-berg inequality
	\bq\label{eq:local-uncertainty-Hs}
		\|u\|_{H^s(Q)}^{\theta} \|u\|_{L^2(Q)}^{1-\theta} \ge C \|u\|_{L^{q}(Q)},\quad q=2+ \frac{4s}{d},\quad  \theta=\frac{d}{d+2s}
	\eq 
	for a constant $C>0$ independent of $u$. Since the (unit) cube $Q$ is regular, 
	we may apply the extension theorem to $H^s(Q)$ 
	(see \cite[Theorem~7.41]{Adams-75} or \cite[Theorem~4.2.3]{Triebel-78}) 
	and obtain for any function $u\in H^s(Q)$ a function 
	$U\in H^s(\R^d)$ satisfying
	$$
		U|_{Q}= u, \quad  \|U\|^2_{L^2(\R^d)} \le C \|u\|_{L^2(Q)}, \quad \|U\|_{H^s(\R^d)} \le C \|u\|_{H^s(Q)},
	$$
	where $C>0$ depends only on $d$ and $s$. We will show that 
	\bq \label{eq:fractional-GN-Rd}
		\|U\|_{\dot H^s(\R^d)}^{\theta} \|U\|_{L^2(\R^d)}^{1-\theta} 
		\ge C \|U\|_{L^{q}(\R^d)},\quad q=2+\frac{4s}{d},\quad\theta=\frac{d}{d+2s},
	\eq
	and \eqref{eq:local-uncertainty-Hs} follows immediately. 
	By Sobolev's embedding \eqref{eq:Sobolev-embedding} 
	$$ \|U\|_{\dot H^{\theta s}(\R^d)}  \ge C\|U\|_{L^{q}(\R^d)}, \quad q=2+\frac{4s}{d}=\frac{2d}{d-2\theta s},$$
	the estimate \eqref{eq:fractional-GN-Rd} follows from the following 
	interpolation inequality
	\bq \label{eq:fractional-GN-Rd-1}
		\|U\|_{\dot H^s(\R^d)}^{\theta} \|U\|_{L^2(\R^d)}^{1-\theta} 
		\ge \|U\|_{\dot H^{\theta s}(\R^d)},\quad \forall \theta \in (0,1),
	\eq
	which is in turn a simple consequence of H\"older's inequality
	$$
		\left( \int_{\R^d} p^{2s} |\widehat{U}(p)|^2 dp\right)^{\theta} \left( \int_{\R^d} |\widehat{U}(p)|^2 dp \right)^{1-\theta} 
		\ge \int_{\R^d} p^{2\theta s} |\widehat{U}(p)|^2 dp .
	$$
\end{proof}

\begin{remark}
	Note that to the semi-norm $\norm{\,\cdot\,}_{\dot{H}^s(\Omega)}$
	there is a naturally associated operator,
	which for $s=1$ coincides with $-\Delta_\Omega^{\cN}$, 
	the Neumann Laplacian on $\Omega \subseteq \R^d$.
	It is a relevant question whether for $0<s<1$ 
	and bounded domains $\Omega$
	this operator coincides with 
	$(-\Delta_\Omega^{\cN})^s$ (defined using the spectral theorem),
	something that was 
	shown in \cite{FraGei-14} to be false
	in the case of the Dirichlet Laplacian $-\Delta_\Omega^{\cD}$
	(see also \cite{MusNaz-14,SerVal-14,Grubb-14} for related results).
	In any case, the analogue of \eqref{eq:local-uncertainty} 
	for $(-\Delta_\Omega^{\cN/\cD})^s$
	can be proved using the method in \cite{Rum-11}.
\end{remark}

We will need the following many-body version of 
Lemma~\ref{lem:local_uncert}. 
\begin{lemma}[Many-body version of local uncertainty]\label{lem:many-body-local-uncertainty}
	For any $L^2$-normalized function 
	$\Psi\in H^s(\R^{dN})$ and for an arbitrary 
	collection of disjoint cubes $Q$'s,
	the kinetic energy satisfies the estimate
	\begin{align}\label{eq:many-body-local-uncertainty}
		\left\langle \Psi, \sum_{i=1}^N (-\Delta_{i})^s \Psi \right\rangle 
		\ge \sum_Q \left[\frac{1}{C}\frac{\int_Q \rho_\Psi^{1+2s/d}}{\Big(\int_Q \rho_\Psi \Big)^{2s/d}} 
		- \frac{C}{|Q|^{2s/d}} \int_Q \rho_\Psi \right],
	\end{align}
	where $C$ is the same constant as in Lemma~\ref{lem:local_uncert}.
\end{lemma}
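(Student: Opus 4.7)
My plan is to apply the one-body Lemma~\ref{lem:local_uncert} to each variable of $\Psi$ in turn, and then convert the resulting sum of one-body bounds into the desired many-body bound on $\rho_\Psi$ via a Minkowski--H\"older argument.

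For each $i \in \{1,\ldots,N\}$ and for a.e.\ $\hat x_i := (x_1,\ldots,x_{i-1},x_{i+1},\ldots,x_N) \in \R^{d(N-1)}$, I first apply \eqref{eq:u-Hs-Ts} and then Lemma~\ref{lem:local_uncert} to the one-variable slice $x \mapsto \Psi(x_1,\ldots,x_{i-1},x,x_{i+1},\ldots,x_N)$. Writing $p := 1 + 2s/d > 1$, this gives
\[
\int_{\R^d} \big|(-\Delta_{x_i})^{s/2}\Psi\big|^2\,dx_i \ge \sum_Q \bigg[ \frac{1}{C}\frac{\int_Q |\Psi|^{2p}\,dx_i}{\big(\int_Q |\Psi|^2\,dx_i\big)^{p-1}} - \frac{C}{|Q|^{p-1}}\int_Q |\Psi|^2\,dx_i\bigg].
\]
Integrating over $\hat x_i$, summing over $i$, and using Fubini together with the identity $\sum_i \int |\Psi|^2\,d\hat x_i = \rho_\Psi$, the negative term collapses to $\sum_Q \frac{C}{|Q|^{p-1}}\int_Q \rho_\Psi$, matching the negative term of \eqref{eq:many-body-local-uncertainty}.

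The essential step is the lower bound on the positive contribution. Setting $a_i(\hat x_i) := \int_Q|\Psi|^{2p}\,dx_i$ and $b_i(\hat x_i) := \int_Q|\Psi|^2\,dx_i$, this amounts to showing, for each cube $Q$,
\[
\sum_{i=1}^N \int_{\R^{d(N-1)}} \frac{a_i(\hat x_i)}{b_i(\hat x_i)^{p-1}}\,d\hat x_i \;\ge\; \frac{\int_Q \rho_\Psi^{p}\,dx}{\big(\int_Q \rho_\Psi\,dx\big)^{p-1}}.
\]
To obtain this, I plan to decompose $\rho_\Psi = \sum_i g_i$ with $g_i(x) := \int |\Psi|^2\,d\hat x_i$, and combine Minkowski's inequality in $L^p(Q)$ (for the finite sum) with Minkowski's integral inequality (applied to each $g_i$) to deduce
\[
\|\rho_\Psi\|_{L^p(Q)} \le \sum_i \int_{\R^{d(N-1)}} a_i(\hat x_i)^{1/p}\,d\hat x_i.
\]
Splitting $a_i^{1/p} = (a_i/b_i^{p-1})^{1/p}\cdot b_i^{(p-1)/p}$ (with the convention $0/0 = 0$ when $b_i$ vanishes, which forces $a_i=0$) and invoking H\"older's inequality with conjugate exponents $p$ and $p/(p-1)$ on the product measure $\sum_i \otimes d\hat x_i$ then yields
\[
\|\rho_\Psi\|_{L^p(Q)} \le \bigg(\sum_i \int \frac{a_i}{b_i^{p-1}}\,d\hat x_i\bigg)^{1/p} \bigg(\int_Q \rho_\Psi\,dx\bigg)^{(p-1)/p},
\]
where I used $\sum_i \int b_i\,d\hat x_i = \int_Q \rho_\Psi\,dx$. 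Raising to the $p$-th power delivers the required inequality.

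The main obstacle is purely bookkeeping: arranging this chain of inequalities so that the one-body constant $C$ is preserved unchanged. Since all the steps converting one-body to many-body are of Minkowski/H\"older type and introduce no extra factors, the final estimate \eqref{eq:many-body-local-uncertainty} holds with exactly the same $C$ as in Lemma~\ref{lem:local_uncert}.
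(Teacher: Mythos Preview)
Your argument is correct and preserves the constant exactly as claimed. The route, however, differs from the paper's. The paper passes through the one-body density matrix $\gamma_\Psi^{(1)}$: it writes $\gamma_\Psi^{(1)}(x,y)=\sum_{n\ge 1} u_n(x)\overline{u_n(y)}$, so that $\rho_\Psi=\sum_n |u_n|^2$ and the kinetic energy equals $\sum_n \langle u_n,(-\Delta)^s u_n\rangle$; it then applies Lemma~\ref{lem:local_uncert} to each $u_n$ and combines via H\"older for the discrete sum in $n$ together with the triangle inequality in $L^{1+2s/d}(Q)$. Your proof instead slices $\Psi$ in each particle coordinate, applies the one-body lemma pointwise in the remaining variables, and then uses Minkowski's integral inequality followed by H\"older on the product measure $\sum_i\otimes d\hat x_i$. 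Conceptually both are the same convexity argument, but yours is more elementary in that it avoids the density-matrix formalism entirely, while the paper's version is phrased in the language that transfers immediately to mixed states (where one starts from a density matrix rather than a pure $\Psi$).
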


\begin{proof} 
	Let $\gamma_\Psi^{(1)}$ be the one-body density matrix of $\Psi$ 
	(see \cite[Section 3.1.5]{LieSei-10}), which is a non-negative 
	trace class operator on $L^2(\R^d)$ with kernel 
	\begin{align} \label{eq:1pdm}
		\gamma_\Psi^{(1)}(x,y) := \sum_{j=1}^N \int_{\R^{d(N-1)}} &\Psi(x_1,\dots,x_{j-1},x,x_{j+1},\dots,x_N) \times \nn\\
		&\times \overline{\Psi(x_1,\dots,x_{j-1},y,x_{j+1},\dots,x_N)}\, \prod\limits_{i\ne j} dx_i.
	\end{align}
	Since $\gamma_\Psi^{(1)}$ is trace class, we can write  
	$$
		\gamma_\Psi^{(1)}(x,y)= \sum_{n\ge 1} u_n(x) \overline{u_n(y)},
	$$
	where $u_n \in L^2(\R^d)$ are not necessarily normalized. 
	Then $\rho_\Psi=\sum_{n\ge 1}|u_n|^2$ and 
	\begin{align} \label{eq:Psi-Hs-Ts}
		\left\langle \Psi, \sum_{i=1}^N (-\Delta_{i})^s   \Psi \right\rangle &= \Tr\left[(-\Delta)^s \gamma_\Psi^{(1)}\right] \nn\\
		&=\sum_{n\ge 1} \langle u_n, (-\Delta)^s u_n \rangle \ge \sum_{n\ge 1} \sum_Q \norm{u_n}_{\dot{H}^s(Q)}^2,
	\end{align}
	where we have used \eqref{eq:u-Hs-Ts} in the last estimate. 
	On the other hand, from the local uncertainty \eqref{eq:local-uncertainty}
	we have
	\begin{multline*}
		\left(\int_Q |u_n|^2 \right)^{\frac{2s}{d+2s}} \left( \norm{u_n}_{\dot{H}^s(Q)}^2 
		+ \frac{C}{|Q|^{2s/d}} \int_Q |u_n|^2 \right)^{\frac{d}{d+2s}}\\
		\ge C^{-d/(d+2s)}\| |u_n|^2 \|_{L^{1+2s/d}(Q)}
	\end{multline*}
	for all $n\ge 1$. 
	Therefore, by H\"older's inequality (for sums) and the triangle 
	inequality we get
	\begin{align*}
		&\left( \int_Q \rho_\Psi  \right)^{\frac{2s}{d+2s}}\left( \sum_{n\ge 1} \norm{u_n}_{\dot{H}^s(Q)}^2 
		+ \frac{C}{|Q|^{2s/d}} \int_Q \rho_\Psi  \right)^{\frac{d}{d+2s}}\\
		&\quad= \left( \sum_{n\ge 1} \int_Q |u_n|^2  \right)^{\frac{2s}{d+2s}} \left( \sum_{n\ge 1} \left[ \norm{u_n}_{\dot{H}^s(Q)}^2 
		+ \frac{C}{|Q|^{2s/d}} \int_Q |u_n|^2 \right] \right)^{\frac{d}{d+2s}}  \\
	  	&\quad\ge \sum_{n\ge 1} \left( \int_Q |u_n|^2  \right)^{\frac{2s}{d+2s}} \left( \norm{u_n}_{\dot{H}^s(Q)}^2 
		+ \frac{C}{|Q|^{2s/d}} \int_Q |u_n|^2 \right)^{\frac{d}{d+2s}} \\
		&\quad\ge \sum_{n\ge 1} C^{-\frac{d}{d+2s}} \| |u_n|^2 \|_{L^{1+2s/d}(Q)} 
		\ge  C^{-\frac{d}{d+2s}}\big\| \sum_{n\ge 1} |u_n|^2 \big\|_{L^{1+2s/d}(Q)} \\
		&\quad= C^{-\frac{d}{d+2s}} \big\| \rho_\Psi \big\|_{L^{1+2s/d}(Q)},
	\end{align*}
	which is equivalent to  
	$$
		\sum_{n\ge 1} \norm{u_n}_{\dot{H}^s(Q)}^2 
		\ge \frac{1}{C}\frac{\int_Q \rho_\Psi^{1+2s/d}}{\Big(\int_Q \rho_\Psi \Big)^{2s/d}} - \frac{C}{|Q|^{2s/d}} \int_Q \rho_\Psi .
	$$
	The latter estimate and \eqref{eq:Psi-Hs-Ts} imply the desired inequality 
	\eqref{eq:many-body-local-uncertainty}. 
\end{proof}

\begin{remark} 
	By using the interpolation inequality \eqref{eq:GN-s} and the 
	same argument of the proof of Lemma \ref{lem:many-body-local-uncertainty} 
	(in this case one can work on the whole $\R^d$ and no partition of cubes 
	is needed), 
	we obtain the following generalization of \eqref{eq:LT-fermion-uN}: 
	\begin{align}\label{eq:LT-proof-N-small}
		\left\langle \Psi, \sum_{i=1}^N (-\Delta_{i})^s \Psi \right\rangle 
		\ge C N^{-2s/d} \int_{\R^d} \rho_\Psi^{1+2s/d}
	\end{align}
	for all normalized functions $\Psi\in H^s(\R^{dN})$ 
	and for a constant $C>0$ depending only on $d$ and $s$. 
	When $0<s \le 1$, \eqref{eq:LT-proof-N-small} can also be proved using the 
	Hoffmann-Ostenhof inequality in Lemma \ref{lem:HO} and Sobolev's embedding. 
	We will use \eqref{eq:LT-proof-N-small} to obtain the Lieb-Thirring 
	inequality \eqref{eq:LT_frac} when $N$ is small. 
\end{remark}

\subsection{A covering lemma}
To combine the local uncertainty and exclusion principles, 
we need a nice choice of the partition of cubes $Q$'s. 
The following result is inspired by the work of Lundholm and Solovej \cite[Theorem~11]{LunSol-13}. 
In fact, a similar result can be obtained by following their construction. 
However, our construction below is simpler to apply and 
results in improved constants. 
\begin{lemma}[Covering lemma] \label{lem:covering}
	Let $Q_0$ be a cube in $\R^d$ and let $\Lambda>0$. Let $0\le f\in L^1(Q_0)$ satisfy 
	$\int_{Q_0} f \ge \Lambda>0$. Then $Q_0$ can be divided into disjoint 
	sub-cubes $Q$'s such that:
	
	\begin{itemize}
	\item For all $Q$,
	$$
		\int_{Q} f < \Lambda.
	$$
	\item For all $\alpha>0$ and integer $k \ge 2$ 
	\bq \label{eq:bound-f*f-f}
		\sum_{Q} \frac{1}{|Q|^{\alpha}} \left[ \left(\int_{Q} f \right)^2 
		- \frac{\Lambda}{a} \int_{Q} f \right] \ge 0,
	\eq
	where 
	$$
		a:= \frac{k^d}{2}\left(1 + \sqrt{1 + \frac{1-k^{-d}}{k^{d\alpha}-1} } \right).
	$$
	\item If $k=3$, then the center of $Q_0$ coincides with 
	the center of
	exactly one sub-cube $Q$,
	and the distance from every other sub-cube $Q$ to the center of $Q_0$
	is not smaller than $|Q|^{1/d}/2$.
	\end{itemize}
\end{lemma}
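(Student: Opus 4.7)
My plan is to build $\mathcal{P}$ by a Calderón–Zygmund-type stopping-time recursion: starting from $Q_0$, whenever a current cube $\tilde Q$ satisfies $\int_{\tilde Q} f \ge \Lambda$ I split it into its $k^d$ congruent sub-cubes and continue, and otherwise I place $\tilde Q$ into $\mathcal{P}$. Since $f \in L^1(Q_0)$, the Lebesgue differentiation theorem shows that the recursion terminates at a.e.~point of $Q_0$, producing a countable disjoint family $\mathcal{P}$ covering $Q_0$ up to a null set; the first bullet then holds by construction.

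For the geometric bullet in the case $k=3$, I would argue by induction on the subdivision depth. When $\tilde Q$ is split into its $3^d$ congruent sub-cubes, exactly one of them (the central one) shares its center with $\tilde Q$, and every other sub-cube lies at distance at least half of its own side length from the center of $\tilde Q$. Iterating along the nested chain of central sub-cubes, which terminates in a unique element of $\mathcal{P}$ centered at the center of $Q_0$ (again by $f \in L^1$); and for any other $Q \in \mathcal{P}$, taking $\tilde Q'$ to be the first ancestor of $Q$ along this chain that is a non-central child of its parent, the parent of $\tilde Q'$ is centered at the center of $Q_0$, so $\mathrm{dist}(Q,\text{center of }Q_0) \ge |\tilde Q'|^{1/d}/2 \ge |Q|^{1/d}/2$.

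The main task is the estimate \eqref{eq:bound-f*f-f}, which I would prove by induction on the subdivision tree $\mathcal{T}$. Writing $c := \Lambda/a$ and $m_Q := \int_Q f$, I look for a quadratic potential $G(m) = A m^2 - B m - D$ satisfying (i) the leaf bound $G(m) \le m^2 - cm$ for $m \in [0,\Lambda)$ and (ii) the subadditivity $k^{d\alpha}\sum_{i=1}^{k^d} G(m_i) \ge G(\sum_i m_i)$ whenever $\sum_i m_i \ge \Lambda$. Condition (ii) is driven by Cauchy–Schwarz $\sum_i m_i^2 \ge (\sum_i m_i)^2/k^d$ combined with the pointwise bound $\mu^2 - c\mu \ge -c^2/4$ on leaf contributions. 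A downward induction on $\mathcal{T}$ then yields $\sum_{Q \in \mathcal{P}}(m_Q^2 - c m_Q)/|Q|^\alpha \ge G(m_{Q_0})/|Q_0|^\alpha \ge 0$ provided $G(m) \ge 0$ for $m \ge \Lambda$, and tuning $A, B, D$ so that (i) and (ii) both saturate at the critical value $m = \Lambda$ produces a quadratic equation in $c$ whose positive root is exactly $\Lambda/a$ for the announced value of $a$.

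The main difficulty is that the quadratic ansatz has to be tuned carefully for (ii) to close uniformly in $\alpha > 0$, since the sign of $k^{d(\alpha-1)} - 1$ coming out of Cauchy–Schwarz changes across $\alpha = 1$. The correct coefficients are forced by the extremal configuration: an infinite chain of nested central sub-cubes of mass exactly $\Lambda$, each accompanied at every level by $k^d - 1$ sibling sub-cubes kept as leaves with mass $c/2$ (the unconstrained minimizer of $\mu \mapsto \mu^2 - c\mu$). The geometric series $\sum_{\ell \ge 1} k^{d\alpha \ell}(k^d - 1)(c^2/4)$ of accumulated sibling costs is then exactly balanced by the boundary contribution $k^{d\alpha L}(\Lambda^2/k^d - c\Lambda)$ at the deepest level, and solving this balance equation yields the explicit formula $a = \frac{k^d}{2}\bigl(1 + \sqrt{1 + (1-k^{-d})/(k^{d\alpha}-1)}\bigr)$ in the statement, simultaneously pinning down the coefficients of $G$ that make the induction go through.
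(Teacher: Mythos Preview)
Your construction of $\mathcal{P}$ via the stopping-time procedure and your argument for the $k=3$ centering property are correct and match the paper. The difficulty is in your proof of \eqref{eq:bound-f*f-f}: the inductive scheme with a single quadratic potential $G(m)=Am^2-Bm-D$ cannot close with the stated constant $a$. The problem is condition (ii). Requiring $k^{d\alpha}\sum_i G(m_i)\ge G(\sum_i m_i)$ for \emph{all} $(m_i)$ with $\sum_i m_i\ge\Lambda$ means you have already replaced every child contribution---including the leaf children---by the lower bound $G(m_i)$, and this discards precisely the positivity that Cauchy--Schwarz would supply at the bottom of the tree. Concretely, for $\alpha<1$ take all $m_i=M/k^d$ with $M$ large: the left side of (ii) is $k^{d\alpha}k^d G(M/k^d)\sim Ak^{d(\alpha-1)}M^2$, which is eventually dominated by $G(M)\sim AM^2$ whenever $A>0$, so no quadratic with $A>0$ works; and $A\le 0$ is incompatible with (iii) for large $m$. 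The affine case $A=0$ does close, but only with the weaker value $a=(k^{d(\alpha+1)}-1)/(k^{d\alpha}-1)$, strictly larger than the one announced. Your extremal-configuration computation correctly identifies the target constant, but it does not by itself supply a potential $G$ that makes the induction go through.

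The paper proceeds differently: it groups the leaves into disjoint families $\mathcal{F}_i$, one for each internal node all of whose children are leaves. That node's $k^d$ children have total mass exceeding $\Lambda$, so Cauchy--Schwarz gives a positive contribution $\ge(\Lambda^2/k^d-c\Lambda)/m_i^\alpha$; the remaining members of $\mathcal{F}_i$ are at most $k^d-1$ sibling leaves at each higher level, each bounded below by $-c^2/4$ times the appropriate volume factor. Summing the resulting geometric series and balancing against the base term is exactly your extremal computation and yields the stated $a$. If you prefer to keep an inductive framework, the fix is to track a \emph{constant} lower bound $\Phi(\tilde Q)\ge E/|\tilde Q|^\alpha$ for internal nodes only, using the actual value $m_i^2-cm_i\ge -c^2/4$ (not $G(m_i)$) for leaf children in the inductive step and Cauchy--Schwarz at terminal internal nodes for the base case; the base case then forces $E\le k^{d\alpha}(\Lambda^2/k^d-c\Lambda)$, the inductive step (worst case: one internal child, $k^d-1$ leaf children) forces $E\ge (k^d-1)k^{d\alpha}c^2/\bigl(4(k^{d\alpha}-1)\bigr)$, and compatibility of these two constraints is precisely the paper's condition on $a$.
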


Note that the simplest choice is $k=2$ and it is sufficient for the 
proof of the Lieb-Thirring inequality \eqref{eq:LT_frac}. 
However the case $k=3$ will be more useful for the proof of the 
Hardy-Lieb-Thirring inequality \eqref{eq:HLT_frac} in Section~\ref{sec:HLT}.

\begin{proof}
	First, we divide $Q_0$ into $k^d$ disjoint sub-cubes with $1/k$ 
	of the original side length. 
	For every sub-cube, if the integral of $f$ over it is less than $\Lambda$, 
	then we will not divide it further; 
	otherwise we divide this sub-cube into $k^d$ disjoint smaller cubes with 
	$1/k$ of the side length, and then iterate the process. 
	Since $f$ is integrable, the procedure must stop after finitely many steps 
	and we obtain a division of $Q_0$ into finitely 
	many sub-cubes $Q$'s.

	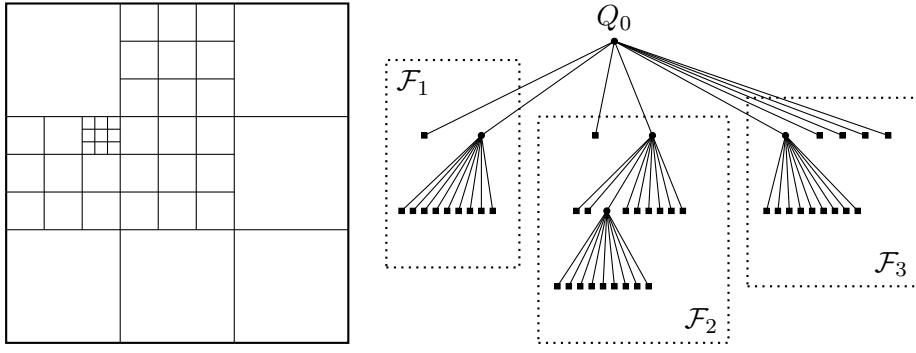
\begin{figure}
		\begin{tikzpicture}
			\draw[thick] (0,0) rectangle (4.5, 4.5);
			
			\draw (1.5,0) -- (1.5,4.5);
			\draw (3,0) -- (3,4.5);
			\draw (0,1.5) -- (4.5,1.5);
			\draw (0,3) -- (4.5,3);
			
			\draw (0,2) -- (3,2);
			\draw (0,2.5) -- (3,2.5);
			\draw (2,1.5) -- (2,4.5);
			\draw (2.5,1.5) -- (2.5,4.5);
			\draw (0.5,1.5) -- (0.5,3);
			\draw (1,1.5) -- (1,3);
			\draw (1.5,3.5) -- (3,3.5);
			\draw (1.5,4) -- (3,4);
			
			\draw (1.166,2.5) -- (1.166,3);
			\draw (1.333,2.5) -- (1.333,3);
			\draw (1,2.666) -- (1.5,2.666);
			\draw (1,2.833) -- (1.5,2.833);
			
			\draw [fill] (8,4) circle [radius = 0.04];
			\node [above] at (8,4) {$Q_0$};
			\draw (8,4) -- (5.5,2.75);
			\draw (8,4) -- (6.25,2.75);
			\draw (8,4) -- (7.75,2.75);
			\draw (8,4) -- (8.5,2.75);
			
			\draw [dotted, thick] (5,1) rectangle (6.75,3.75);
			\node [below right] at (5,3.75) {$\cF_1$};
			\filldraw ([xshift=-1pt,yshift=-1pt]5.5,2.75) rectangle ++(2pt,2pt);
			\draw [fill] (6.25,2.75) circle [radius = 0.04];
			\foreach \x in {0,...,8}
				\filldraw ([xshift=-1pt,yshift=-1pt]5.2+0.15*\x,1.75) rectangle ++(2pt,2pt);
			\foreach \x in {0,...,8}
				\draw (6.25,2.75) -- (5.2+0.15*\x,1.75);
			
			\draw [dotted, thick] (7,0) rectangle (9.5,3);
			\node [above left] at (9.5,0) {$\cF_2$};
			\filldraw ([xshift=-1pt,yshift=-1pt]7.75,2.75) rectangle ++(2pt,2pt);
			\draw [fill] (8.5,2.75) circle [radius = 0.04];
			
			\filldraw ([xshift=-1pt,yshift=-1pt]7.5,1.75) rectangle ++(2pt,2pt);
			\filldraw ([xshift=-1pt,yshift=-1pt]7.65,1.75) rectangle ++(2pt,2pt);
			\draw [fill] (7.9,1.75) circle [radius = 0.04];
			\filldraw ([xshift=-1pt,yshift=-1pt]8.15,1.75) rectangle ++(2pt,2pt);
			\filldraw ([xshift=-1pt,yshift=-1pt]8.3,1.75) rectangle ++(2pt,2pt);
			\filldraw ([xshift=-1pt,yshift=-1pt]8.45,1.75)  rectangle ++(2pt,2pt);
			\filldraw ([xshift=-1pt,yshift=-1pt]8.6,1.75) rectangle ++(2pt,2pt);
			\filldraw ([xshift=-1pt,yshift=-1pt]8.75,1.75) rectangle ++(2pt,2pt);
			\filldraw ([xshift=-1pt,yshift=-1pt]8.9,1.75) rectangle ++(2pt,2pt);
			
			\draw (8.5,2.75) -- (7.5,1.75);
			\draw (8.5,2.75) -- (7.65,1.75);
			\draw (8.5,2.75) -- (7.9,1.75);
			\draw (8.5,2.75) -- (8.15,1.75);
			\draw (8.5,2.75) -- (8.3,1.75);
			\draw (8.5,2.75) -- (8.45,1.75);
			\draw (8.5,2.75) -- (8.6,1.75);
			\draw (8.5,2.75) -- (8.75,1.75);
			\draw (8.5,2.75) -- (8.9,1.75);
			
			\foreach \x in {0,...,8}
				\filldraw ([xshift=-1pt,yshift=-1pt]7.25+0.15*\x,0.75) rectangle ++(2pt,2pt);
			\foreach \x in {0,...,8}
				\draw (7.9,1.75) -- (7.25+0.15*\x,0.75);
			
			\draw [dotted, thick] (9.75,0.75) rectangle (12,3.25);
			\node [above left] at (12,0.75) {$\cF_3$};
			\draw [fill] (10.25,2.75) circle [radius = 0.04];
			\draw (8,4) -- (10.25,2.75);
			\foreach \x in {1,...,4}
				\filldraw ([xshift=-1pt,yshift=-1pt]10.4+0.3*\x,2.75) rectangle ++(2pt,2pt);
			\foreach \x in {1,...,4}
				\draw (8,4) -- (10.4+0.3*\x,2.75);
			\foreach \x in {0,...,8}
				\filldraw ([xshift=-1pt,yshift=-1pt]10+0.15*\x,1.75) rectangle ++(2pt,2pt);
			\foreach \x in {0,...,8}
				\draw (10.25,2.75) -- (10+0.15*\x,1.75);			
		\end{tikzpicture}
		\caption{Example of a division of $Q_0$ (in $d=2$) with $k=3$.}
		\label{fig:splitting}
	\end{figure}

	It is obvious that for every sub-cube $Q$ one has  $\int_{Q} f < \Lambda$ 
	and $|Q|=k^{-\ell(Q)d}|Q_0|$ for some level $\ell(Q)\in \{0,1,2,\dots\}$. 
	By viewing the sub-cubes as the leaves of a full $k^d$-ary tree corresponding to the 
	above division (cf. Figure~\ref{fig:splitting}),
	we can distribute all sub-cubes into disjoint 
	groups $\{\cF_i\}$ such that in each group $\cF_i$:
	\begin{itemize}
		\item There are exactly $k^d$ smallest sub-cubes within $\cF_i$. 
	
		\item The integral of $f$ over the union of these $k^d$ smallest sub-cubes is greater than $\Lambda$.
	
		\item There are at most $(k^d-1)$ sub-cubes of every other volume.
	\end{itemize}
	Now we consider each group $\cF_i$. 
	Let $m_i=\inf_{Q\in \cF_i}|Q|$ denote the minimal volume occuring 
	in the group. 
	By the Cauchy-Schwarz inequality we have
	\begin{align} \label{eq:Fi-smallest}
		&\sum_{Q\in \cF_i, |Q|=m_i} \frac{1}{|Q|^\alpha} \left[ \left( \int_Q f \right) ^2 
		- \frac{\Lambda}{a} \int_Q f \right]\nn\\
		&\quad \ge \frac{1}{m_i^\alpha} \left[ \frac{1}{k^d} \left( \sum_{Q\in \cF_i, |Q|=m_i} \int_Q f \right) ^2 
		- \frac{\Lambda}{a} \sum_{Q\in \cF_i, |Q|=m_i} \int_Q f \right] \nn\\
		&\quad \ge \frac{1}{m_i^\alpha} \left( \frac{\Lambda^2}{k^d} - \frac{\Lambda^2}{a}\right).
	\end{align}
	Here in the last inequality we have used the lower bound
	$$
		\sum_{Q\in \cF_i, |Q|=m_i} \int_Q f \ge \Lambda > \frac{k^d \Lambda}{2a}
	$$
	and that the function $t\mapsto t^2/k^d - (\Lambda/a)t$ is increasing 
	when $t\ge k^d \Lambda/(2a)$.
	On the other hand, using the obvious lower bound
	$$
		\Big(\int_{Q} f \Big)^2- \frac{\Lambda}{a} \int_{Q} f \ge - \frac{\Lambda^2}{4a^2},
	$$
	we find that
	\begin{multline}\label{eq:Fi-larger}
		\sum_{Q\in \cF_i, |Q|>m_i} \frac{1}{|Q|^\alpha} \left[ \left( \int_Q f \right) ^2 
		- \frac{\Lambda}{a}  \int_Q f \right] 
		\ge - \frac{\Lambda^2}{4a^2}\sum_{Q\in \cF_i, |Q|>m_i} \frac{1}{|Q|^\alpha}\\
		\ge - \frac{\Lambda^2}{4a^2} \sum_{j \ge 1} \frac{k^d-1}{ (k^{dj}m_i)^\alpha} 
		= - \frac{\Lambda^2}{4a^2} \frac{k^d-1}{(k^{d\alpha}-1)m_i^\alpha}.
	\end{multline}
	Here in the second inequality we have used the fact that in $\cF_i$, 
	each sub-cube has volume $k^{dj} m_i$ for some $j \in \{0,1,2,\dots\}$ 
	and there are at most $(k^d-1)$ sub-cubes 
	of every volume larger than $m_i$. 
	Adding \eqref{eq:Fi-smallest} and \eqref{eq:Fi-larger}, we find that
	\begin{align*} 
		\sum_{Q\in \cF_i} \frac{1}{|Q|^\alpha} \left[ \left( \int_Q f \right) ^2 - \frac{\Lambda}{a}  \int_Q f \right] 
		\ge \frac{\Lambda^2}{m_i^\alpha} \left( \frac{1}{k^d} - \frac{1}{a} - \frac{k^d-1}{4a^2(k^{d\alpha}-1)}\right) =0,
	\end{align*} 
	where the last identity follows from the choice of $a$. Since the latter 
	inequality holds true for every group $\cF_i$, the conclusion follows 
	immediately. 
	
	For $k=3$ (or any odd integer) there is at each level in the above division exactly 
	one cube $Q$ with its center at the center of $Q_0$, 
	and the statement follows by iteration.
\end{proof}

\subsection{Proof of the Lieb-Thirring inequality}
Now we are able to give a proof of the Lieb-Thirring inequality \eqref{eq:LT_frac}.
\begin{proof}[Proof of Theorem \ref{thm:LT_frac}] 
	By a standard approximation argument we can assume that $\rho_\Psi$ is 
	supported in a finite cube $Q_0 \subset \R^d$. 
	For every $\Lambda\le \int_{\R^d} \rho_\Psi = N$, 
	by applying the Covering Lemma~\ref{lem:covering} 
	with $f=\rho_\Psi$, $k=2$ and $\alpha=2s/d$, 
	we can divide $Q_0$ into disjoint sub-cubes $Q$'s such that 
	$\int_Q \rho_\Psi \le \Lambda$ for all $Q$ and
	\begin{align} \label{eq:LT-partition}
		\sum_Q \frac{1}{|Q|^{2s/d}} \left[\left(\int_Q \rho_\Psi\right)^2 - \frac{\Lambda}{a} \int_Q \rho_\Psi \right] \ge 0,
	\end{align}
	with
	$$
		a:= \frac{2^d}{2}\left(1 + \sqrt{1 + \frac{1-2^{-d}}{2^{d\alpha}-1} } \right).
	$$
	Next, from Lemma~\ref{lem:local-ex}, 
	Lemma~\ref{lem:many-body-local-uncertainty} and \eqref{eq:LT-partition}, 
	it follows that
	\begin{align} \label{eq:LT-final}
		&\left\langle \Psi, \left( \sum_{i=1}^N (-\Delta_{i})^s 
		+ \sum_{1\le i<j \le N} \frac{1}{|x_i-x_j|^{2s}} \right) \Psi \right\rangle \nn\\
		\ge &\sum_Q \left[  \frac{1}{C} \frac{\int_Q \rho_\Psi^{1+2s/d}}{\Big(\int_Q \rho_\Psi \Big)^{2s/d}} 
		- \frac{C}{|Q|^{2s/d}}\int_Q \rho_\Psi + 
		 \frac{1}{2d^{s}|Q|^{2s/d}} \left( \Big(\int_Q \rho_\Psi \Big)^2 - \int_Q \rho_\Psi \right) \right] \nn\\
		\ge &  \frac{1}{C\Lambda^{2s/d}} \int_{\R^d} \rho_\Psi^{1+2s/d} + \left( \frac{\Lambda}{a} - 2d^{s}C-1  \right)
		\sum_Q \frac{1}{2d^{s}|Q|^{2s/d}} \int_Q \rho_\Psi,
	\end{align}
	for every $0<\Lambda\le N$ and for some constant $C>0$ depending only on 
	$d\ge 1$ and $s>0$. Here in the last inequality in \eqref{eq:LT-final} 
	we have used $\int_Q \rho_\Psi \le \Lambda$ for all cubes $Q$'s. 

Finally, using \eqref{eq:LT-final} for $\Lambda=(2d^{s}C+1)a=:\Lambda_0$ 
if $N>\Lambda_0$, and using \eqref{eq:LT-proof-N-small} if $N \le \Lambda_0$, 
we find that 
\begin{align*} 
	\left\langle \Psi, \left( \sum_{i=1}^N (-\Delta_{i})^s 
	+ \sum_{1\le i<j \le N} \frac{1}{|x_i-x_j|^{2s}} \right) \Psi \right\rangle		\ge C \int_{\R^d} \rho_\Psi^{1+2s/d} 
\end{align*}
for a constant $C>0$ depending only on $d$ and $s$. 
The proof is complete.
\end{proof}

\begin{remark}\label{rem:var_coupling}
	Note that, in the case that a coupling parameter $\lambda>0$
	is introduced as in \eqref{eq:LT_frac_coupling},
	a straightforward adaptation of \eqref{eq:LT-final} yields
	$C(\lambda) = C$ for $\lambda \ge 1$
	and $C(\lambda) \sim \lambda^{2s/d}$ for $\lambda < 1$.
\end{remark}

\begin{remark}[Explicit constant] \label{rem:explicit-constant}
	It is possible to derive an explicit constant $C$ in \eqref{eq:LT_frac}. 
	Let us consider for example the case $s=1$ and $d=3$. 
	By the Hoffmann-Ostenhof inequality (see Lemma~\ref{lem:HO}) 
	and Sobolev's inequality,
	\begin{align*}
	\left\langle \Psi, \sum_{i=1}^N -\Delta_{i} \Psi\right\rangle 
		\geq \langle\sqrt{\rho_{\Psi}},(-\Delta)\sqrt{\rho_{\Psi}}\,\rangle 
		\ge C_{\rm S} \left( \int_{\R^3}\rho_{\Psi}^3 \right)^{1/3}
		\ge C_{\rm S} \frac{ \int_{\R^3}\rho_{\Psi}^{5/3}}{\left( \int_{\R^3} \rho_\Psi \right)^{2/3}}.
	\end{align*}
	Moreover, combining the Hoffmann-Ostenhof inequality and the Poincar\'e-Sobolev inequality 
	$$ 
	\norm{\nabla u}_{L^2(Q)}^2 \ge C_{\rm{P}} \norm{u-\frac{1}{|Q|}\int_Q u}_{L^6(Q)}^2 
	$$
	as in \cite{FraSei-12}, we get
	\begin{align*}
		&\left\langle \Psi, \sum_{i=1}^N -\Delta_{i} \Psi\right\rangle 
		\geq \langle\sqrt{\rho_{\Psi}},(-\Delta)\sqrt{\rho_{\Psi}}\,\rangle
		\ge \sum_{Q} \norm{\nabla \sqrt{\rho_{\Psi}}}_{L^2(Q)}^2\\
		&\quad \geq C_{\rm{P}}\sum_{Q}\norm{\sqrt{\rho_{\Psi}}-|Q|^{-1}\int_{Q}\sqrt{\rho_{\Psi}}\,}_{L^6(Q)}^2\\
		&\quad \geq \sum_Q \left[C_{\rm{P}}(1-\eps) \left(\int_Q \rho_{\Psi}^{5/3}\right)\left(\int_Q \rho_{\Psi} \right)^{-2/3}
		- C_{\rm{P}}(\eps^{-1}-1)\frac{1}{|Q|^{2/3}}\int_{Q}\rho_{\Psi}\right]
	\end{align*}
	for any $\eps \in (0,1)$. 
	From these kinetic lower bounds, following the above proof of 
	Theorem \ref{thm:LT_frac}, we find that \eqref{eq:LT_frac} holds true with 
	$$
		C = \frac{\min \{(1-\eps)C_{\rm{P}}, C_{\rm S}\}}{\Lambda_0^{2/3}} , \quad \Lambda_0= a(1+6 C_{\rm{P}}(\eps^{-1}-1)).
	$$
	Here we can take
	$$
		C_{\rm S}= \frac{3}{4} (2\pi^2)^{2/3}, \ 
		C_{\rm P}=\frac{27}{16(1+3^{2/3})^2(2\pi)^{4/3}} 
		\quad \text{and} \quad  
		a=4 + \frac{\sqrt{186}}{3}
	$$
	(the sharp value of $C_{\rm S}$ 
	can be inferred from \cite{Aubin-76,Talenti-76} 
	and the value of $C_{\rm P}$ is obtained by following \cite[Lemma 1]{FraSei-12} 
	but it may not be optimal). 
	Then optimizing over $0<\eps<1$ shows that \eqref{eq:LT_frac} holds true with 
	$$
		C = 0.002384.
	$$
	Although this explicit constant is far from optimal, 
	it is already a significant improvement over \cite{LunPorSol-14}.
\end{remark}

\subsection{Coupling parameter and optimal constant}\label{sec:proof_constants}
Let us here consider the behavior of the optimal constant of \eqref{eq:LT_frac_coupling}
as a function of the coupling parameter $\lambda$,
$$
	C_{\rm BLT}(\lambda) 
	:= \inf_{N \ge 2} \ \inf_{\substack{\Psi \in \cH^s_{d,N} \\ \norm{\Psi}_2=1}}
	\frac{ \left\langle \Psi, \left( \sum_{i=1}^N (-\Delta_{i})^s 
	+ \lambda W_s \right) \Psi \right\rangle
	}{ \int_{\R^d} \rho_{\Psi}^{1+2s/d} },
	\quad \lambda \ge 0,
$$
where $\Psi$ is in the form domain
$$
	\cH^s_{d,N} := \left\{ \Psi \in H^{s}(\R^{dN}) : \int_{\R^{dN}} W_s|\Psi|^2 < \infty \right\}, \ 
	W_s(x) := \!\!\!\!\sum_{1\le i<j \le N}\! \frac{1}{|x_i-x_j|^{2s}}.
$$

Note that the parameter $\lambda$ cannot be removed by scaling and we are interested in the behavior of the optimal constant of \eqref{eq:LT_frac_coupling} in the limits $\lambda \to 0$ and $\lambda \to \infty$. We have

\begin{proposition} \label{prop:constants}
	The optimal constant $C_{\rm BLT}(\lambda)$ is monotone increasing and concave
	as a function of $\lambda$, and satisfies the following:
	\begin{enumerate}[i)]
		\item For all $\lambda >0$, any $d \geq 1$ and all $s > 0$ we have
		\begin{align*}
			0< C_{d,s} \min\{1,\lambda^{2s/d}\} \le C_{\BLT}(\lambda) \leq C_{\GN},
		\end{align*}
		where $C_{d,s}>0$ is a constant independent of $\lambda$ and $C_{\GN}$ 
		is the optimal constant of the one-body fractional Gagliardo-Nirenberg inequality,
		\begin{align}\label{eq:GN_frac_inf}
			C_{\GN} := \inf_{\substack{u \in H^s(\R^d)\\ \norm{u}_2 = 1}} \frac{\langle u, (-\Delta)^s u \rangle}{\int_{\R^d} |u|^{2 (1+2s/d)}}.
		\end{align}
		
		\item We have, for all $d \geq 1$ and any $s>0$,
		\begin{align*}
			\lim_{\lambda \to 0} C_{\BLT} (\lambda) = C_{\rm BLT}(0).
		\end{align*}
		Moreover, for $2s<d$ we have $C_{\BLT}(\lambda) \sim \lambda^{2s/d}$ as $\lambda \to 0$, and in particular $C_{\BLT}(0) = 0$.
	\end{enumerate}
\end{proposition}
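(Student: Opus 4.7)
\emph{Structural properties and upper bound.} Monotonicity and concavity are immediate: for each fixed admissible $\Psi$ with $\|\Psi\|_2=1$ the Rayleigh quotient
$\bigl(\langle\Psi,\sum_i(-\Delta_i)^s\Psi\rangle+\lambda\langle\Psi,W_s\Psi\rangle\bigr)\big/\int\rho_\Psi^{1+2s/d}$
is affine in $\lambda$ with non-negative slope, and an infimum of affine functions is automatically concave and non-decreasing. For the upper bound $C_{\BLT}(\lambda)\le C_{\GN}$ I would test against widely separated one-body packets: fix a compactly supported $L^2$-normalized $u\in H^s(\R^d)$ and set $\Psi_R(x_1,\ldots,x_N):=\prod_{i=1}^N u(x_i-Re_i)$. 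Translation invariance of $(-\Delta)^s$ keeps the kinetic energy equal to $N\langle u,(-\Delta)^s u\rangle$ uniformly in $R$; for $R$ large the $N$ translates of $|u|^2$ have disjoint supports, so $\int\rho_{\Psi_R}^{1+2s/d}\to N\int|u|^{2(1+2s/d)}$; and every pair of particles is separated by distance of order $R$, so the interaction decays as $R^{-2s}$. Letting $R\to\infty$ and then minimizing over $u$ yields $C_{\GN}$.

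\emph{Lower bound.} The lower bound in part~(i), which I view as the main technical obstacle of the proposition, I obtain by redoing the proof of Theorem~\ref{thm:LT_frac} with $W_s$ replaced by $\lambda W_s$, as already announced in Remark~\ref{rem:var_coupling}. Tracking the coupling through \eqref{eq:LT-final}, the coefficient in front of $\sum_Q|Q|^{-2s/d}\int_Q\rho_\Psi$ becomes non-negative exactly when $\lambda\Lambda/a\ge\lambda+2d^{s}C$; this forces the new threshold $\Lambda:=a\bigl(1+2d^{s}C/\lambda\bigr)$ and hence $C_{\BLT}(\lambda)\ge c\,\Lambda^{-2s/d}\asymp\min\{1,\lambda^{2s/d}\}$. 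The small-$N$ regime $N\le\Lambda$ is dispatched, exactly as in the $\lambda=1$ case, by \eqref{eq:LT-proof-N-small}. The delicate point is verifying that, after reweighting by $\lambda$, the three local ingredients (uncertainty, exclusion and the covering lemma) still close into a single positive bound, and that the patching between the $N>\Lambda$ and $N\le\Lambda$ regimes is uniform in $\lambda$.

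\emph{Limit at zero and sharp asymptotic.} Monotonicity gives $\liminf_{\lambda\to 0^+}C_{\BLT}(\lambda)\ge C_{\BLT}(0)$; for the reverse inequality I would fix $\eps>0$, pick an admissible $\Psi\in\cH^s_{d,N}$ whose $\lambda=0$ quotient is at most $C_{\BLT}(0)+\eps$, and use that membership in $\cH^s_{d,N}$ already guarantees $\langle\Psi,W_s\Psi\rangle<\infty$, so the quotient at coupling $\lambda$ differs by at most an extra $\eps$ once $\lambda$ is small enough. The $\lambda^{2s/d}$ scaling under $2s<d$ is bounded below by the previous step; for the matching upper bound I would test on $\Psi=u^{\otimes N}$ with compactly supported normalized $u$, giving quotient
$$
N^{-2s/d}\Bigl(\frac{\langle u,(-\Delta)^s u\rangle}{\int|u|^{2(1+2s/d)}}+\frac{\lambda(N-1)}{2}\cdot\frac{D(u)}{\int|u|^{2(1+2s/d)}}\Bigr),
$$
where $D(u):=\iint|u(x)|^2|u(y)|^2|x-y|^{-2s}\,dx\,dy<\infty$ because $2s<d$. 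Choosing $N\sim\lambda^{-1}$ balances the two $N$-dependent terms and produces an $O(\lambda^{2s/d})$ bound, which in particular forces $C_{\BLT}(0)=0$.
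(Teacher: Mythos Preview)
Your proposal is correct and follows essentially the same route as the paper's proof: monotonicity/concavity via infimum of affine functions, the upper bound $C_{\BLT}(\lambda)\le C_{\GN}$ via widely separated product states, the lower bound by rerunning \eqref{eq:LT-final} with the rescaled threshold $\Lambda=a(1+2d^sC/\lambda)$ (which is exactly the content of Remark~\ref{rem:var_coupling}), continuity at $\lambda=0$ by testing on a near-minimizer with finite $\langle\Psi,W_s\Psi\rangle$, and the $\lambda^{2s/d}$ upper bound via $u^{\otimes N}$ with $N\sim\lambda^{-1}$. One cosmetic point: your trial state $\prod_i u(x_i-Re_i)$ tacitly requires $N$ well-separated points in $\R^d$, so the notation $e_i$ (standard basis vectors) only makes literal sense for $N\le d$; the paper writes $u(x-Ry_i)$ with arbitrary points $|y_i-y_j|>1$, but since any fixed $N\ge2$ suffices for the upper bound this is harmless.
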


In addition, we believe the following to be true:

\begin{conjecture*}
	The optimal constant $C_{\BLT}(\lambda)$ also satisfies:
	\begin{enumerate}[i)]
		\item[iii)]
		$C_{\BLT}(0) > 0$ for $2s > d$.
		
		\item[iv)]
		For all $d \geq 1$ and any $s>0$ we have
		\begin{align*}
			\lim_{\lambda \to \infty} C_{\BLT}(\lambda) = C_{\GN}.
		\end{align*}
	\end{enumerate}
\end{conjecture*}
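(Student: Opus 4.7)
Both items iii) and iv) concern the extreme behavior of $\lambda \mapsto C_{\BLT}(\lambda)$ and are genuinely delicate; from the monotonicity and concavity established in the main body of the Proposition, the two limits at $\lambda = 0$ and $\lambda = \infty$ exist in $[0, C_{\GN}]$, so iii) claims strict positivity at $\lambda = 0$ in the regime $2s > d$, and iv) claims the upper bound $C_{\GN}$ is attained in the limit $\lambda \to \infty$. My plan attacks these via two different mechanisms: for iii), the diagonal vanishing forced by the form-domain condition $\Psi \in \cH^s_{d,N}$; for iv), the asymptotic separation of particles driven by the divergence of the coupling.

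For iii), the key observation is that when $2s > d$ the requirement $\int W_s|\Psi|^2 < \infty$ is a non-trivial constraint already at $\lambda = 0$: it forces $|\Psi|^2$ to vanish near every diagonal $\{x_i = x_j\}$ at a rate strictly faster than $|x_i - x_j|^{2s-d}$. My plan is to convert this vanishing into a many-body Hardy-type bound
\[
\int_{\R^{dN}} W_s \, |\Psi|^2 \le C \, \langle \Psi, {\textstyle \sum_{i=1}^N} (-\Delta_i)^s \Psi \rangle \qquad \text{for all } \Psi \in \cH^s_{d,N},
\]
which, combined with Theorem \ref{thm:LT_frac} at $\lambda = 1$, immediately yields $C_{\BLT}(0) \ge C_{d,s}/(1+C) > 0$. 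For the two-body component ($N=2$) I would pass to center-of-mass coordinates $(R, r) = ((x_1 + x_2)/2, x_1 - x_2)$ and invoke a fractional Hardy estimate in the relative coordinate $r$; summing over pairs then delivers the many-body inequality. As an alternative --- useful in low dimensions where the classical fractional Hardy has subtle critical constants --- one can try to adapt the Dyson--Lenard style local exclusion of \cite{LunSol-13} to extract positive contributions from the diagonal vanishing inside each cube, and combine these with the local uncertainty (Lemma \ref{lem:local_uncert}) and the covering argument (Lemma \ref{lem:covering}) from Section \ref{sec:LT}.

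For iv), monotonicity together with part i) of the Proposition gives $C_{\BLT}(\lambda) \nearrow L_\infty \le C_{\GN}$, so only the reverse bound $L_\infty \ge C_{\GN}$ needs work. Take a sequence $\lambda_k \to \infty$ with $L^2$-normalized near-optimizers $\Psi_k$ on $N_k$ particles; since the ratio is bounded by $C_{\GN}$, we get $\lambda_k \langle \Psi_k, W_s \Psi_k \rangle \le C_{\GN} \int \rho_{\Psi_k}^{1+2s/d}$, so the relative interaction vanishes and $|\Psi_k|^2$ must concentrate away from the diagonal set. My plan is then to partition $\R^d$ into $N_k$ regions $R_1, \dots, R_{N_k}$, one per particle, such that $\int_{R_j} \rho_{\Psi_k} \to 1$ and the contribution of cross-region configurations to the non-local kinetic form becomes negligible; this uses a covering in the spirit of Lemma \ref{lem:covering}, now driven by the smallness of $\langle W_s \rangle$ rather than by anti-symmetry or cube-wise exclusion. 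Within each region I would then apply the one-body fractional Gagliardo--Nirenberg inequality with the sharp constant $C_{\GN}$ to the eigenfunctions of the localized one-body density matrix, using the decomposition $\gamma_\Psi^{(1)}(x,y) = \sum_n u_n(x) \overline{u_n(y)}$ employed in the proof of Lemma \ref{lem:many-body-local-uncertainty}.

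The main obstacles are, I expect, the following. For iii), proving the many-body Hardy bound in the full range $2s > d$ is non-trivial because the one-dimensional case $s > 1/2$ is particularly delicate, the standard fractional Hardy constant degenerating precisely there; indeed, the companion question raised in Remark \ref{rmk:diagonal-set} is explicitly left open. For iv), the hard point is handling the non-local cross-terms of $(-\Delta)^s$ between regions when only the interaction energy, but not the kinetic energy, is known to localize; this will likely require IMS-type fractional localization estimates of the kind developed for Theorem \ref{thm:HLT_frac} in Section \ref{sec:HLT}, together with a quantitative control of the concentration of $|\Psi_k|^2$ off the diagonal. Both statements are expected to be substantially harder than Proposition parts i)--ii), consistent with their being stated as a Conjecture rather than proved.
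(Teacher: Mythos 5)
This statement is precisely the part the paper leaves \emph{unproved}: it is stated as a conjecture, and the text around Proposition \ref{prop:constants} only offers heuristics (diagonal vanishing ``should imply a non-trivial local exclusion principle''; the $\lambda\to\infty$ limit ``would require a concentration-compactness method for many-body systems which is not available to us at the moment''). Your proposal is a programme whose pivotal steps are exactly these open points, so it does not constitute a proof. Concretely, for iii) your central lemma --- the many-body Hardy bound $\int_{\R^{dN}} W_s|\Psi|^2 \le C\langle\Psi,\sum_i(-\Delta_i)^s\Psi\rangle$ on $\cH^s_{d,N}$ --- is known to fail in the critical cases $s-d/2\in\{0,1,2,\dots\}$ (e.g. $d=1$, $s=3/2$, or $d=2$, $s=2$), as the paper itself notes with reference to Solomyak and Yafaev; and even off these critical values, finiteness of $\int W_s|\Psi|^2$ does not put the relative wave function into $H^s_0(\R^d\setminus\{0\})$, so the fractional Hardy inequality in the relative coordinate is not applicable --- comparing $\cH^s_{d,N}$ with $H^s_0(\R^{dN}\setminus\bDelta)$ is precisely the difficulty left open in Section \ref{sec:proof_constants}. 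Your claimed quantitative vanishing rate of $|\Psi|^2$ near the diagonals also does not follow from mere finiteness of the interaction form. The ``alternative'' route via a Dyson--Lenard-type local exclusion extracted from diagonal vanishing is the paper's own unproved heuristic restated, not an argument.

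For iv), the gap is the structural claim in the middle of your argument: from $\lambda_k\langle\Psi_k,W_s\Psi_k\rangle$ being small (relative to $\int\rho_{\Psi_k}^{1+2s/d}$) you infer a partition of $\R^d$ into $N_k$ regions each carrying unit mass of $\rho_{\Psi_k}$, with negligible non-local cross-terms. Smallness of the expected interaction gives no such geometric separation of a near-optimizer, and Lemma \ref{lem:covering} is driven by mass thresholds of $\rho_\Psi$, not by interaction smallness, so it cannot produce this decomposition; this is exactly the missing many-body concentration-compactness. Moreover, even granting such a partition, the final step does not close: applying the sharp one-body Gagliardo--Nirenberg inequality to each eigenfunction $u_n$ of the localized density matrix and summing does not yield $C_{\GN}\int\rho_\Psi^{1+2s/d}$ unless each region hosts essentially a single orbital of full mass; summing over many orbitals gives at best a fermionic-Lieb--Thirring-type bound, whose optimal constant is expected to be \emph{strictly smaller} than $C_{\GN}$ (as the paper recalls for $d=3$, $s=1$). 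So both items remain open; your write-up correctly identifies the obstacles but does not overcome them.
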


The proof of Proposition~\ref{prop:constants} will be given below. For $2s<d$, the limit $\lambda \to 0$ 
corresponds to the situation of non-interacting bosons, and by taking the trial wave functions $\Psi=u^{\otimes N}$ 
one can see immediately that $C_{\rm BLT}(\lambda)\to 0$. However, for $2s \ge d$ the situation is more difficult 
because any wave function in $\cH^s_{d,N}$ must vanish  
on the diagonal set 
$$
	\bDelta = \{(x_i)_{i=1}^N \in (\R^d)^N : x_i=x_j~\text{for some}~i\ne j\}
$$
and in particular the trial wave functions $u^{\otimes N}$ are not allowed. 

When $d=s=1$, the operator in \eqref{eq:LT_frac_coupling}
is that of the Calogero-Sutherland model \cite{Cal-69,Suth-71}, 
and the limit $\lambda \to 0$
on the space $L^2_{\sym}$ of symmetric wave functions 
is actually equivalent to non-interacting fermions. In fact, 
$\cH^1_{1,N} \cap L^2_{\sym} = H^1_0(\R^N \setminus \bDelta) \cap L^2_{\sym}$ 
(see \cite[Theorem 2]{LunSol-14})
and it is well known \cite{Girardeau-60} that any such wave function vanishing 
on the diagonal set is equal to an anti-symmetric wave function up to multiplication 
by an appropriate sign function. Therefore, $C_{\BLT}(0)$ is exactly the optimal 
constant $C_{\LT}$ of the fermionic Lieb-Thirring inequality \eqref{eq:LT-fermion}, 
which is {\em conjectured} \cite{LieThi-76} to be $C_{\GN} = \pi^2/4$.

When $d=1$ and $s=2$, the condition of anti-symmetry is however not strong enough to 
ensure that the wave function is in the quadratic form domain $\cH^2_{1,N}$,
which can be seen readily by taking the two-body 
state $\Psi(x_1,x_2) = C(x_1-x_2)e^{-|x_1|^2-|x_2|^2} \notin \cH^2_{1,2}$. 
In this case we expect $C_{\BLT}(0) > C_{\LT}$ because of the more restricted domain.

For $d \ge 2$ the situation is yet more difficult: Because of the
connectedness of the configuration space $(\R^d)^N \setminus \bDelta$ 
there is no simple boson-fermion correspondence 
for functions vanishing on $\bDelta$, for any $s>0$.
Furthermore, if $s-d/2 \in \{0,1,2,...\}$, then the interaction operator 
$W_s$ cannot be controlled by the kinetic operator $\sum_i (-\Delta_i)^s$ 
by means of the Hardy inequality (see \cite{Solom-94,Yafaev-99}), which makes it
difficult to compare $\cH^s_{d,N}$ with $H^s_0(\R^{dN} \setminus \bDelta)$. 
It is an interesting open question to determine the complete behavior of
$C_{\BLT}(0)$ in the general case $2s \ge d$. 
We expect $C_{\BLT}(0) > 0$ for $2s>d$ because in this case 
$H^s_0(\R^{dN} \setminus \bDelta) \neq H^s(\R^{dN})$
(by Sobolev embedding),
and a smooth vanishing condition for $\Psi$ on $\bDelta$ 
should imply a non-trivial local exclusion principle.
In the critical case $2s=d$ it may happen that $C_{\BLT}(0)=0$,
as can be seen for $d=2$, $s=1$ using the ground state of 
a gas of hard disks in a dilute limit \cite{LieYng-01}.

On the other hand,
in the limit $\lambda \to \infty$ of strong interaction, we expect the inter-particle distance to go to infinity, 
and hence the optimal constant should tend to the one-body 
constant $C_{\GN}$ of \eqref{eq:GN_frac_inf}. It seems that proving this 
would require a concentration-compactness method for many-body systems 
which is not available to us at the moment. We also notice that in the physically 
most interesting case $d=3$ and $s=1$, the {\em conjectured} optimal constant 
in the fermionic Lieb-Thirring inequality \eqref{eq:LT-fermion} \cite{LieThi-76} is strictly smaller than  $C_{\GN}$. 

\begin{proof}[Proof of Proposition~\ref{prop:constants}]	
	We first note that $\lambda \mapsto C_{\BLT}(\lambda)$ is the infimum of
	monotone increasing affine functions
	(denoting $\hat{T} := \sum_{i=1}^N (-\Delta_{i})^s$)
	$$
		\lambda \mapsto
		\frac{ \langle \Psi, \hat{T} \Psi \rangle }
		{ \int_{\R^d} \rho_{\Psi}^{1+2s/d} }
		+ \lambda\frac{ \langle \Psi, W_s \Psi \rangle }
		{ \int_{\R^d} \rho_{\Psi}^{1+2s/d} },
	$$
	and hence monotone increasing and concave.
	
	\noindent
	{\bf Proof of (i).}
	From Remark~\ref{rem:var_coupling} we obviously have
	$$
		C_{\BLT}(\lambda) \geq C_{d,s}\min\{1,\lambda^{2s/d}\} > 0,
	$$ 
	so it remains to prove that $C_{\BLT}(\lambda) \le C_{\GN}$.
	Following \cite[Theorem 19]{LunSol-14}, we take a sequence of trial states 
	$$
		\Psi_{N,R}(x) 
		:= \frac{1}{\sqrt{N!}} \sum_{\sigma\in S_N} u_{\sigma(1)}(x_1)\cdots u_{\sigma(N)}(x_N)
		\ \in \cH^s_{d,N} \cap L^2_{\sym},
	$$
	with
	$$
		u_i(x) := u^R(x - Ry_i),
	$$
	where $u^R \in C^\infty_0(B(0,R/3))$ is a minimizing sequence of 
	$L^2$-normalized functions for \eqref{eq:GN_frac_inf}
	(s.t. both numerator and denominator remain finite),
	and $y_i$ are $N$ disjoint points in $\R^d$, 
	with $|y_i-y_j| > 1$ for $i \neq j$.
	Since the supports of the $u_i$'s are disjoint, 
	one readily computes that 
	\begin{align}\label{eq:uni_lambda_trial}
		C_{\rm BLT}(\lambda) 
		\leq \frac{N \left\langle u^{R}, (-\Delta)^s u^{R}\right\rangle + \lambda C N^2 R^{-2s}}
		{N \int_{\R^d}|u^R|^{2(1+2s/d)}},
	\end{align}
	and the right hand side of \eqref{eq:uni_lambda_trial} converges to 
	$C_{\GN}$ in the limit $R \to \infty$.
	Note that we could also have taken $\Psi_{N,R}$ as an anti-symmetric state (a Slater determinant). 

	\noindent
	{\bf Proof of (ii).} 
	We will first show that for any $d \geq 0$ and all $s>0$, 
	$\lim_{\lambda \to 0} C_{\BLT}(\lambda) = C_{\BLT}(0)$, with
	\begin{align}\label{eq:C_0_def}
		C_{\BLT}(0)
		:= \inf_{N \ge 2} \ \inf_{\substack{\Psi \in \cH^s_{d,N} \\ \norm{\Psi}_2 = 1}}
		\frac{ \langle \Psi, \hat{T} \Psi \rangle }{ \int_{\R^d} \rho_{\Psi}^{1+2s/d} }.
	\end{align}
	To do so, we first pick a minimizing sequence $(\Psi_{N_k})_{k \in \N}$ 
	for \eqref{eq:C_0_def} 
	(with each $\Psi_{N_k} \in \cH^s_{d,N_k}$ and normalized). 
	Next, we have
	\begin{align}\label{eq:C_0_conv}
		0 \le C_{\BLT}(\lambda) - C_{\BLT}(0) 
		&\leq \frac{ \left\langle \Psi_{N_k}, (\hat{T} + \lambda W_s) \Psi_{N_k}\right\rangle }
		{ \int_{\R^d} \rho_{\Psi_{N_k}}^{1+2s/d} } - C_{\BLT}(0)\nn\\
		&=\lambda \frac{\left\langle \Psi_{N_k}, W_s \Psi_{N_k}\right\rangle}{ \int_{\R^d} \rho_{\Psi_{N_k}}^{1+2s/d} } 
		+ \frac{ \langle \Psi_{N_k}, \hat{T} \Psi_{N_k} \rangle }{ \int_{\R^d} \rho_{\Psi_{N_k}}^{1+2s/d} } - C_{\BLT}(0).
	\end{align}
	Given any $\eps > 0$, the last term of \eqref{eq:C_0_conv} is clearly less 
	than $\eps$ for $k \in \N$ sufficiently large,
	while the first term remains bounded. 
	With such $k$ fixed, we then choose 
	$\lambda < \eps (\int_{\R^d} \rho_{\Psi_{N_k}}^{1+2s/d}) / \left\langle \Psi_{N_k}, W_s \Psi_{N_k}\right\rangle$,
	so that $C_{\BLT}(\lambda) - C_{\BLT}(0) < 2\eps$.
	
	In the case $2s<d$ we have $C_{\BLT}(\lambda) \sim \lambda^{2s/d}$ as $\lambda \to 0$, which can be seen by taking a 
	bosonic trial state $\Psi = u^{\otimes N} \in \cH_{d,N}^s$ and letting $N \sim \lambda^{-1}$. 
\end{proof}

\subsection{A note about fermions and weaker exclusion} \label{sec:fermion}
In this subsection we explain how to adapt our above proof of Theorem~\ref{thm:LT_frac}
to show the fermionic inequality \eqref{eq:LT_frac_fermions}
\begin{align*}
	\left\langle \Psi, \sum_{i=1}^N (-\Delta_{i})^s \Psi \right\rangle 
	\ge C \int_{\R^d} \rho_{\Psi}(x)^{1+2s/d} \,dx
\end{align*}
for all $d\ge 1$ and $s>0$, where the wave function $\Psi$ satisfies the 
anti-symmetry \eqref{eq:anti-symmetry}. 
In this case the kinetic energy 
not only contributes to a local uncertainty principle
as in Lemma~\ref{lem:many-body-local-uncertainty}, 
but also to a local exclusion
principle of the following weaker form:

\begin{lemma}[Local exclusion for fermions] \label{lem:local-exclusion-fermions} 
	For any $d\geq 1$, $s > 0$ 
	there is a constant $C>0$ depending only on $d$ and $s$ such that for all $N \in \N$,
	for every $L^2$-normalized function $\Psi\in H^s(\R^{dN})$ 
	satisfying the anti-symmetry \eqref{eq:anti-symmetry}, and 
	for an arbitrary collection of disjoint cubes $Q$'s in $\R^d$, 
	\begin{align}\label{eq:local-exclusion-fermions}
		\left\langle \Psi , \sum_{i=1}^N (-\Delta_i)^s \Psi \right \rangle
		\ge \sum_Q \frac{C}{|Q|^{2s/d}} \left[ \int_Q \rho_\Psi(x)\,dx - q \right]_+,
	\end{align}
	where
	$q := \# \{ \textup{multi-indices}\ \alpha : 0 \le |\alpha|<s \}$.
\end{lemma}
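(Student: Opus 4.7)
The plan is to combine a \emph{local Poincar\'e inequality} on each cube with the \emph{Pauli principle} in the form $0\le \gamma_\Psi^{(1)}\le 1$, mirroring the structure of the proof of Lemma~\ref{lem:many-body-local-uncertainty}. For any cube $Q\subset \R^d$, let $V(Q)\subset L^2(Q)$ denote the finite-dimensional subspace of polynomials on $Q$ of degree strictly less than $s$, and let $\Pi_Q\colon L^2(Q)\to V(Q)$ be the $L^2(Q)$-orthogonal projection. A direct inspection shows that $V(Q)$ is exactly the kernel of the quadratic form $\norm{\cdot}_{\dot{H}^s(Q)}^2$ on $H^s(Q)$: for $s=m\in \N$ one has $\norm{u}_{\dot H^s(Q)}^2=0$ iff $D^\alpha u=0$ for all $|\alpha|=m$, while for $s=m+\sigma$ with $0<\sigma<1$ the representation \eqref{eq:fractional-form} shows that the kernel of $\norm{\cdot}_{\dot H^\sigma(Q)}^2$ consists of constants only, and hence $\norm{u}_{\dot H^s(Q)}^2=0$ forces $D^\alpha u$ to be constant for every $|\alpha|=m$. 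In both cases $\dim V(Q)=\#\{\alpha:|\alpha|<s\}=q$.

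The first step is to prove the following local Poincar\'e inequality: there exists $C>0$ depending only on $d$ and $s$ such that for every cube $Q\subset \R^d$ and every $u\in H^s(Q)$,
\begin{equation*}
    \norm{u}_{\dot{H}^s(Q)}^2 \;\ge\; \frac{C}{|Q|^{2s/d}} \Bigl( \norm{u}_{L^2(Q)}^2 - \norm{\Pi_Q u}_{L^2(Q)}^2 \Bigr).
\end{equation*}
By the scaling $u(x)\mapsto u(\lambda(x-x_0))$ it suffices to treat the unit cube, where the norm equivalence \eqref{eq:equiv-norm-Sobolev} together with the Rellich-type compact embedding $H^s(Q)\hookrightarrow L^2(Q)$ ensures that the self-adjoint operator associated to $\norm{\cdot}_{\dot{H}^s(Q)}^2$ has compact resolvent. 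Its kernel equals $V(Q)$ and the spectral gap above the kernel yields the stated inequality.

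Next, I would reduce the many-body problem via the one-body density matrix in the same way as in Lemma~\ref{lem:many-body-local-uncertainty}: writing $\gamma_\Psi^{(1)}=\sum_{n\ge 1}|u_n\rangle\langle u_n|$ and applying \eqref{eq:u-Hs-Ts} cube-wise, we obtain
\begin{equation*}
    \left\langle \Psi, \sum_{i=1}^N (-\Delta_i)^s \Psi \right\rangle
    \ge \sum_Q \sum_{n\ge 1} \norm{u_n}_{\dot{H}^s(Q)}^2
    \ge \sum_Q \frac{C}{|Q|^{2s/d}} \Bigl( \int_Q \rho_\Psi - \Tr\bigl[\Pi_Q\, \1_Q \gamma_\Psi^{(1)} \1_Q\, \Pi_Q\bigr] \Bigr),
\end{equation*}
where the trace is taken on $L^2(Q)$ and I used $\sum_n \norm{u_n}_{L^2(Q)}^2 = \int_Q \rho_\Psi$ together with $\sum_n \norm{\Pi_Q u_n}_{L^2(Q)}^2 = \Tr[\Pi_Q\,\1_Q\gamma_\Psi^{(1)}\1_Q\,\Pi_Q]$.

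Finally, anti-symmetry of $\Psi$ enters through the Pauli principle: $0\le \gamma_\Psi^{(1)}\le 1$ as an operator on $L^2(\R^d)$ (see e.g.\ \cite[Section 3.1.5]{LieSei-10}), whence $\1_Q \gamma_\Psi^{(1)} \1_Q\le \1_Q$ on $L^2(Q)$, and therefore
\begin{equation*}
    \Tr\bigl[\Pi_Q\, \1_Q \gamma_\Psi^{(1)} \1_Q\, \Pi_Q\bigr] \le \Tr[\Pi_Q] = q.
\end{equation*}
Since the left-hand side of \eqref{eq:local-exclusion-fermions} is non-negative, each cube contribution can be replaced by its positive part, yielding the claimed bound. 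The main technical obstacle is the local Poincar\'e inequality in the genuinely fractional regime $s\notin \N$, where one must verify both that the kernel of $\norm{\cdot}_{\dot H^s(Q)}^2$ is no larger than $V(Q)$ (handled via \eqref{eq:fractional-form}) and that the associated operator has compact resolvent (handled via \eqref{eq:equiv-norm-Sobolev} and the standard Rellich embedding).
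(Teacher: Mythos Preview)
Your proposal is correct and follows essentially the same approach as the paper: a local Poincar\'e inequality on each cube (with a $q$-dimensional defect space) combined with the Pauli principle $0\le\gamma_\Psi^{(1)}\le 1$ via the one-body density matrix decomposition. The only differences are presentational: you identify the defect space as the kernel $V(Q)$ of the seminorm (polynomials of degree $<s$) and obtain the Poincar\'e inequality by an abstract spectral-gap/compact-resolvent argument, whereas the paper works with the subspace $\mathcal{V}_s=\Span\{T_\alpha^*1:|\alpha|<s\}$ defined by the linear constraints $\int_Q D^\alpha u=0$ and obtains the inequality by iterating the ordinary and fractional Poincar\'e inequalities explicitly; and you phrase the Pauli bound as $\Tr[\Pi_Q\,\1_Q\gamma_\Psi^{(1)}\1_Q\,\Pi_Q]\le q$, while the paper equivalently says the eigenfunctions $u_n$ have $L^2$-norm at most one.
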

\begin{proof}
	First, consider one-body functions
	$u \in H^s(Q)$ where $s=m+\sigma$, $m \in \N$, 
	$\sigma \in [0,1)$.
	In the case that $0<\sigma<1$, we have the 
	fractional Poincar\'e inequality
	(see, e.g., \cite[Lemma 2.2]{HurVah-13})
	$$
		\|u\|_{\dot H^s(Q)}^2
		\ge \frac{C}{|Q|^{2\sigma/d}} \sum_{|\alpha|=m} \left\|D^\alpha u 
			- \frac{1}{|Q|}\int_{Q} D^\alpha u\right\|_{L^2(Q)}^2,
	$$
	while for $|\alpha|=m$ we have 
	(by iteration of Poincar\'e's inequality) 
	$$
		\|D^\alpha u\|_{L^2(Q)}^2
		\ge \frac{C}{|Q|^{2m/d}} \|u\|_{L^2(Q)}^2,
		\quad \text{if $\int_Q D^\beta u = 0$ for all $0 \le |\beta|<m$}. 
	$$
	Note that 
	$\int_Q D^\alpha u = \langle 1, T_\alpha u \rangle 
		= \langle T_\alpha^* 1, u\rangle$,
	where the operator $u \mapsto T_\alpha(u) := D^\alpha u$,
	$|\alpha|\le m$,
	is relatively bounded w.r.t. the form domain $H^s(Q)$.
	Hence we can treat these orthogonality conditions by considering
	the $q$-dimensional subspace 
	$\V_s := \Span \{ T_\alpha^* 1 : 0 \le |\alpha| < s \}$.
	On $H^s(Q) \cap \V_s^\perp$ we then have
	$$
		\|u\|_{\dot H^s(Q)}^2 \ge \frac{C}{|Q|^{2s/d}} \|u\|_{L^2(Q)}^2,
	$$
	and in general, by taking out the projection onto $\V_s$,
	$$
		(-\Delta)^s|_{H^s(Q)} \ge \frac{C}{|Q|^{2s/d}}(\1 - P_{\V_s}).
	$$
	
	Now we proceed as in Lemma \ref{lem:many-body-local-uncertainty}, 
	although because of the anti-symmetry of $\Psi$, 
	the one-body functions $u_n$ all have norm less than unity
	(again, see e.g. \cite{LieSei-10}).
	We then obtain
	$$
		\left\langle \Psi, \sum_{i=1}^N (-\Delta_{i})^s \Psi \right\rangle 
		\ge \sum_{n\ge 1} \sum_Q \norm{u_n}_{\dot{H}^s(Q)}^2
		\ge \sum_Q \frac{C}{|Q|^{2s/d}}\left[ \sum_{n\ge 1} \|u_n\|_{L^2(Q)}^2 - q \right]_+,
	$$
	which proves the lemma.
\end{proof}

We note that the Covering Lemma~\ref{lem:covering} can be also adapted to 
apply to the weaker form of the exclusion principle.
This could be useful not only for fermions but also in situations when 
other types of interactions are present
(cf. \cite{LunSol-13,FraSei-12,LunSol-14,LunPorSol-14}).

\begin{lemma}[Covering lemma with weaker exclusion] \label{lem:covering-weak} 
	Let $Q_0$ be a cube in $\R^d$ and let $0\le f\in L^1(Q_0)$ satisfy 
	$\int_{Q_0} f \ge \Lambda>0$. Then $Q_0$ can be divided into disjoint 
	sub-cubes $Q$'s such that
	\begin{itemize}
		\item For all $Q$,
		$$
			\int_{Q} f < \Lambda.
		$$
		
		\item For all $\alpha>0$, $q \ge 0$ and integer $k \ge 2$, 
		\bq \label{eq:bound-f*f-f-weak}
			\sum_{Q} \frac{1}{|Q|^{\alpha}} \left( \left[\int_{Q} f - q \right]_+ 
				- b \int_{Q} f \right) \ge 0,
		\eq
		where 
		$$
			b:= \left( 1- \frac{qk^d}{\Lambda} \right) \frac{k^{d\alpha}-1}{k^{d\alpha} + k^d - 2}.
		$$
		
		\item If $k=3$, then the center of $Q_0$ coincides with exactly one sub-cube $Q$,
		and the distance from every other sub-cube $Q$ to the center of $Q_0$
		is not smaller than $|Q|^{1/d}/2$.
	\end{itemize}
\end{lemma}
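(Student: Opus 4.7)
The plan is to recycle verbatim the stopping-time subdivision and grouping construction of Lemma~\ref{lem:covering}: divide $Q_0$ into $k^d$ sub-cubes of $1/k$ the side length, keep any sub-cube whose integral of $f$ is below $\Lambda$, and iterate on the rest; then organize the resulting sub-cubes into disjoint families $\cF_i$ so that each $\cF_i$ contains exactly $k^d$ smallest cubes (the children of a common parent whose integral was $\ge \Lambda$) and at most $k^d-1$ cubes of every larger volume $k^{dj} m_i$, $j\ge 1$, where $m_i$ denotes the smallest volume occurring in $\cF_i$. The centering statement for $k=3$ and the stopping criterion $\int_Q f<\Lambda$ for every $Q$ are then inherited immediately; only the per-group lower bound needs to be reworked, since the relevant quantity is now linear-with-threshold in $\int_Q f$ rather than quadratic.

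The key new ingredient is a convexity/Jensen argument. The function $g(t):=[t-q]_+ - bt$ is convex and piecewise affine, with slope $-b$ on $[0,q]$ and slope $1-b$ on $[q,\infty)$. Denoting by $x_1,\dots,x_{k^d}$ the integrals of $f$ over the smallest cubes of $\cF_i$, we have $0\le x_j<\Lambda$ and $\sum_j x_j\ge\Lambda$. In the informative regime $\Lambda\ge qk^d$ (in which $b\ge 0$), the mean $\tfrac{1}{k^d}\sum_j x_j \ge \Lambda/k^d \ge q$ lies on the right branch of $g$, so Jensen's inequality yields
\begin{equation*}
	\sum_{j=1}^{k^d} g(x_j) \;\ge\; k^d\, g\Big(\tfrac{1}{k^d}\sum_j x_j\Big) \;=\; (1-b)\sum_j x_j - k^d q \;\ge\; (1-b)\Lambda - k^d q,
\end{equation*}
where the final step uses that the expression is non-decreasing in $\sum_j x_j$ (since $b<1$). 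For the sub-cubes in $\cF_i$ of volume larger than $m_i$, we simply drop the positive part and use $\int_Q f<\Lambda$ together with the geometric series
\begin{equation*}
	\sum_{Q \in \cF_i,\ |Q|>m_i} \frac{1}{|Q|^\alpha}\int_Q f \;<\; \Lambda \sum_{j\ge 1}\frac{k^d-1}{(k^{dj}m_i)^\alpha} \;=\; \frac{(k^d-1)\Lambda}{(k^{d\alpha}-1)\,m_i^\alpha}.
\end{equation*}

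Combining these two estimates, the contribution of each $\cF_i$ to the left-hand side of \eqref{eq:bound-f*f-f-weak} is bounded below by
\begin{equation*}
	\frac{1}{m_i^\alpha}\left[(1-b)\Lambda - k^d q - \frac{b(k^d-1)\Lambda}{k^{d\alpha}-1}\right],
\end{equation*}
and a direct rearrangement shows that the announced value of $b$ is precisely the one making this bracket vanish. Summing over the disjoint families $\cF_i$ completes the proof. The main (and essentially only) obstacle, beyond careful bookkeeping of the geometric series and the strict inequality $\int_Q f<\Lambda$, is the degenerate regime $qk^d>\Lambda$: there $b\le 0$ and \eqref{eq:bound-f*f-f-weak} is either trivially true or vacuous, so the derivation under the working hypothesis $\Lambda\ge qk^d$ already covers the range in which the inequality carries content.
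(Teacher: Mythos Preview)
Your proof is correct and follows the same route as the paper: the identical subdivision and $\cF_i$-grouping from Lemma~\ref{lem:covering}, the same geometric-series estimate for the larger cubes, and the same final bracket that vanishes at the stated value of $b$. The only cosmetic difference is that for the $k^d$ smallest cubes you invoke Jensen's inequality for the convex function $g(t)=[t-q]_+-bt$, whereas the paper simply uses $[t-q]_+\ge t-q$ directly to reach the same bound $(1-b)\Lambda-qk^d$; your explicit handling of the degenerate regime $qk^d>\Lambda$ (where $b\le 0$ and every summand is non-negative) is a welcome clarification that the paper leaves implicit.
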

\begin{proof}
	We proceed with the same division procedure as in the proof of
	Lemma~\ref{lem:covering}.
	Instead of \eqref{eq:Fi-smallest} we have
	\begin{equation} \label{eq:Fi-smallest-weak}
		\sum_{Q\in \cF_i, |Q|=m_i} \frac{1}{|Q|^\alpha} 
			\left( \left[ \int_Q f -q \right]_+ - b \int_Q f \right) 
		\ge \frac{1}{m_i^\alpha} \left( (1-b)\Lambda - qk^d \right),
	\end{equation}
	and instead of \eqref{eq:Fi-larger} we have
	\begin{multline} \label{eq:Fi-larger-weak}
		\sum_{Q\in \cF_i, |Q|>m_i} \frac{1}{|Q|^\alpha} \left( 
			\left[ \int_Q f -q \right]_+ - b \int_Q f \right) 
		\ge - b\Lambda \sum_{Q\in \cF_i, |Q|>m_i} \frac{1}{|Q|^\alpha}\\
		\ge - b\Lambda \sum_{j \ge 1} \frac{k^d-1}{ (k^{dj}m_i)^\alpha} 
		= - \frac{b\Lambda}{m_i^\alpha} \frac{k^d-1}{k^{d\alpha}-1}.
	\end{multline}
	Hence,
	\begin{align*} 
		\sum_{Q\in \cF_i} \frac{1}{|Q|^\alpha} \left( 
			\left[ \int_Q f -q \right]_+ - b \int_Q f \right)
		\ge \frac{1}{m_i^\alpha} \left( 
			\Lambda - qk^d - b\Lambda\left(1 + \frac{k^d-1}{k^{d\alpha}-1}\right)
			\right),
	\end{align*} 
	from which the lemma follows.
\end{proof}

From the local uncertainty in Lemma~\ref{lem:many-body-local-uncertainty}, 
the local exclusion in Lemma \ref{lem:local-exclusion-fermions} and the 
Covering Lemma \ref{lem:covering-weak}, one can prove the fermionic 
Lieb-Thirring inequality \eqref{eq:LT_frac_fermions} by proceeding similarly 
as in the proof of Theorem \ref{thm:LT_frac}. 
The details are left to the reader.

\begin{remark}
	From Lemma \ref{lem:local-ex} and the elementary inequality 
	$(a^2-a)_+ \ge (a-1)_+$, $a\ge 0$, we obtain  
	the following analogue of \eqref{eq:local-exclusion-fermions}
	for pair-interactions:
	\begin{align} \label{eq:exclusion-DL-1}
		\left\langle \Psi , \sum_{1\le i<j \le N} \frac{1}{|x_i-x_j|^{2s}} \Psi \right \rangle 
		\ge \sum_Q\frac{1}{2d^s|Q|^{2s/d}} \left[ \int_Q \rho_\Psi - 1 \right]_+
	\end{align} 
	for every normalized function $\Psi\in L^2(\R^{dN})$. In our proofs of the Lieb-Thirring 
	inequality \eqref{eq:LT_frac} and the Hardy-Lieb-Thirring inequality \eqref{eq:HLT_frac} presented later, 
	we can certainly use \eqref{eq:exclusion-DL-1} instead of \eqref{eq:local-exclusion}  
	(we then obtain similar inequalities but with worse constants).     
\end{remark}

\section{Hardy-Lieb-Thirring inequality} \label{sec:HLT}
In this section we prove Theorem~\ref{thm:HLT_frac}. 
We will need to strengthen the local uncertainty principle in 
Section~\ref{sec:LT} to account for the Hardy term,
and to do this we also need a localization method for fractional kinetic energy.

\subsection{Local uncertainty for centered cubes}
The following local uncertainty principle is crucial for our proof. 
\begin{lemma}[Local uncertainty for centered cubes] \label{lem:HLT-uncertainty} 
	For every cube $Q\subset \R^d$ centered at $0$, we have 
	\begin{align} \label{eq:HLT-uncertainty}
		\|u\|_{\dot H^s(Q)}^2 - \mathcal{C}_{d,s}\int_Q \frac{|u(x)|^2}{|x|^{2s}}\,dx 
		\ge \frac{1}{C}\frac{\int_Q |u|^{2(1+2s/d)}}{\Big(\int_Q |u|^2\Big)^{2s/d}} - \frac{C}{|Q|^{2s/d}} \int_Q |u|^2   
	\end{align}
	for a constant $C>0$ depending only on $d\ge 1$ and $s>0$. 
\end{lemma}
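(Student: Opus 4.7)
By a standard scaling reduction ($u(x)\mapsto u(\ell x)$ with $\ell:=|Q|^{1/d}$) it suffices to treat the centered unit cube $Q=[-\tfrac12,\tfrac12]^d$. The plan is to split $u$ via a smooth partition of unity into a piece $v$ concentrated near the origin, to which the improved Hardy inequality \eqref{eq:Hardy-improved-GN} on $\R^d$ can be applied, and a piece $w$ supported away from the origin, where the Hardy singularity is harmless and the basic local uncertainty of Lemma~\ref{lem:local_uncert} suffices.

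Concretely, fix smooth $\chi,\widetilde\chi\ge 0$ with $\chi^2+\widetilde\chi^2\equiv 1$, $\chi\equiv 1$ on $\tfrac14 Q$ and $\chi\equiv 0$ outside $\tfrac34 Q$, and set $v:=\chi u$, $w:=\widetilde\chi u$. For $0<s<1$, the algebraic identity
\[
1-\chi(x)\chi(y)-\widetilde\chi(x)\widetilde\chi(y) = \tfrac12\big[(\chi(x)-\chi(y))^2+(\widetilde\chi(x)-\widetilde\chi(y))^2\big],
\]
inserted into the Gagliardo double integral defining $\|u\|_{\dot H^s(Q)}^2$, yields an IMS-type lower bound
\[
\|u\|_{\dot H^s(Q)}^2 \ge \|v\|_{\dot H^s(Q)}^2 + \|w\|_{\dot H^s(Q)}^2 - C\|u\|_{L^2(Q)}^2,
\]
where the error is controlled by the boundedness of $H(x):=c_{d,\sigma}\int_Q\big[(\chi(x)-\chi(y))^2+(\widetilde\chi(x)-\widetilde\chi(y))^2\big]|x-y|^{-d-2\sigma}\,dy$ uniformly in $x\in Q$. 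For $s\ge 1$ one applies the same identity to each top-order derivative $D^\alpha u$, $|\alpha|=m$, absorbing the Leibniz cross-terms via interpolation between $\dot H^s(Q)$ and $L^2(Q)$.

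Since $v$ is supported in $\tfrac34 Q$, extending by zero changes the kinetic semi-norm only by a boundary tail controlled by $\|u\|_{L^2(Q)}^2$, i.e.\ $\|v\|_{\dot H^s(\R^d)}^2\le \|v\|_{\dot H^s(Q)}^2 + C\|u\|_{L^2(Q)}^2$. Applying \eqref{eq:Hardy-improved-GN} to $v$ on $\R^d$ with $t=sd/(d+2s)$ (so that $q=2+4s/d$) then gives
\[
\|v\|_{\dot H^s(\R^d)}^2 - \mathcal{C}_{d,s}\int_{\R^d}\frac{|v|^2}{|x|^{2s}}\,dx \ge C\,\frac{\int|v|^{2+4s/d}}{\|v\|_{L^2}^{4s/d}}.
\]
For $w$, Lemma~\ref{lem:local_uncert} yields the analogous Gagliardo--Nirenberg lower bound on $\|w\|_{\dot H^s(Q)}^2$, while $\int_Q|w|^2/|x|^{2s}\,dx\le C\|u\|_{L^2(Q)}^2$ since $w$ vanishes on $\tfrac14 Q$. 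Combining everything with the pointwise identity $|v|^2+|w|^2=|u|^2$ (so the interior Hardy terms reassemble into $\mathcal{C}_{d,s}\int_Q|u|^2/|x|^{2s}$ up to an $L^2$ error) and the bound $\chi^{p}+\widetilde\chi^{p}\ge (1/\sqrt2)^{p}$ for $p=2+4s/d\ge 2$ (one of $\chi,\widetilde\chi$ is always $\ge 1/\sqrt2$) produces the desired inequality on the unit cube; undoing the scaling restores the $|Q|^{-2s/d}$ factor on the error.

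The principal obstacle is to secure the IMS-type localization with a remainder of order $\|u\|_{L^2(Q)}^2$: for non-integer $s$ this relies on a manifestly nonnegative kernel appearing in the Gagliardo identity, so that the only unfavourable term is $\int H|u|^2$, and for $s\ge 1$ one has to iterate on top-order derivatives and use interpolation to subsume Leibniz cross-terms into $L^2$. A further technical point is to arrange $\chi$ to vanish with sufficient margin near $\partial Q$ so that the extension-by-zero tail of $\|v\|_{\dot H^s(\R^d)}^2$ remains $O(\|u\|_{L^2(Q)}^2)$ independently of $u$.
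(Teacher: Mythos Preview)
Your overall strategy---scale to the unit cube, split $u$ by a smooth partition $\chi^2+\widetilde\chi^2\equiv 1$, apply the improved Hardy inequality \eqref{eq:Hardy-improved-GN} to the inner piece $v=\chi u$ (extended by zero to $\R^d$), and apply the basic local uncertainty of Lemma~\ref{lem:local_uncert} to the outer piece $w=\widetilde\chi u$---is exactly the paper's approach. For $0<s\le 1$ your sketch is essentially complete and correct.

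The gap is in your treatment of the IMS localization error when $s>1$. You claim
\[
\|u\|_{\dot H^s(Q)}^2 \ge \|v\|_{\dot H^s(Q)}^2 + \|w\|_{\dot H^s(Q)}^2 - C\|u\|_{L^2(Q)}^2,
\]
but for $s>1$ the Leibniz cross-terms genuinely produce lower-order derivatives of $u$: already for $s=2$ one picks up a term of the form $\int_Q(|\nabla\chi|^2+|\nabla\widetilde\chi|^2)|\nabla u|^2$, which cannot be bounded by $\|u\|_{L^2}^2$ alone. What one can prove (this is Lemma~\ref{lem:fractional-IMS} in the paper) is that the localization error is controlled by $\|\chi u\|_{H^t(Q)}^2+\|\widetilde\chi u\|_{H^t(Q)}^2$ for some $t\in[0,s)$. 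Your proposed remedy, ``absorb via interpolation between $\dot H^s(Q)$ and $L^2(Q)$,'' would cost a fraction $\varepsilon$ of the kinetic seminorm, leaving only $(1-\varepsilon)\|v\|_{\dot H^s}^2$ against the full Hardy term $\mathcal{C}_{d,s}\int|v|^2/|x|^{2s}$. Since $\mathcal{C}_{d,s}$ is the \emph{sharp} Hardy constant, the resulting expression can be arbitrarily negative, and the argument collapses.

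The paper circumvents this by \emph{not} reducing the IMS error to $L^2$. Instead it keeps the $H^t$ error explicit and uses a second consequence of the improved Hardy inequality, namely \eqref{eq:Hardy-improved-interpolation} in the form
\[
\|\chi u\|_{\dot H^s(\R^d)}^2-\mathcal{C}_{d,s}\int\frac{|\chi u|^2}{|x|^{2s}}
\;\ge\; C_1\|\chi u\|_{H^t}^2 - C\|\chi u\|_{L^2}^2,
\]
so that the Hardy remainder itself pays for the $H^t$ localization error on the inner piece; an analogous (Hardy-free) interpolation handles $\|\widetilde\chi u\|_{H^t}^2$. This is the missing ingredient in your sketch for $s>1$. (The same $H^t$ bookkeeping also absorbs the discrepancy between $\|v\|_{\dot H^s(Q)}^2$ and $\|v\|_{\dot H^s(\R^d)}^2$, which for non-integer $s>1$ is of order $\|v\|_{H^m}^2$ rather than $\|v\|_{L^2}^2$.)
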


Note that this local uncertainty principle is significantly stronger than 
the one in Lemma \ref{lem:local_uncert} because the left side of 
\eqref{eq:HLT-uncertainty} can even be negative. 
Our strategy is to replace $u$ by $\chi u$ where $\chi$ is a smooth function 
supported in a neighborhood of the origin, and then apply the Hardy 
inequality with remainder term for $\chi u \in H^s(\R^d)$. 
To implement the localization procedure, we also need the following lemma
which controls the error terms.

\begin{lemma}[A fractional IMS localization formula] \label{lem:fractional-IMS} 
	Let $\Omega$ be a bounded open domain in $\R^d$ with $d\ge 1$. 
	Let $\chi,\eta:\R^d\to [0,1]$ be two smooth functions such that 
	$\chi(x)^2+\eta(x)^2\equiv 1$ and $\chi$ is supported in 
	a compact subset of $\Omega$. 
	Then for every $s>0$, there exists $t\in [0,s)$ and a constant $C>0$ 
	such that for every $u\in H^s(\Omega)$,
	\begin{align}  \label{eq:fractional-IMS}
		\left| \|u\|_{\dot H^s(\Omega)}^2- \|\chi u\|_{\dot H^s(\Omega)}^2 
			- \|\eta u\|_{\dot H^s(\Omega)}^2 \right| 
		\le C \left( \|\chi u\|_{H^{t}(\Omega)}^2 
			+ \|\eta u\|_{ H^{t}(\Omega)}^2 \right).
	\end{align}
\end{lemma}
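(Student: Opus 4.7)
The plan is to split $s = m + \sigma$ with $m \in \{0,1,2,\dots\}$ and $\sigma \in [0,1)$, handle the purely fractional case $s=\sigma\in(0,1)$ first by a direct computation on the Gagliardo integral, and then build up the general case by a Leibniz expansion together with a commutator estimate.

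For $s=\sigma\in(0,1)$, the identity $\chi^2+\eta^2\equiv 1$ yields the pointwise relation
\begin{equation*}
|\chi(x)u(x) - \chi(y)u(y)|^2 + |\eta(x)u(x) - \eta(y)u(y)|^2 - |u(x)-u(y)|^2 = K_{\chi,\eta}(x,y)\,\operatorname{Re}[u(x)\overline{u(y)}],
\end{equation*}
with $K_{\chi,\eta}(x,y) := (\chi(x)-\chi(y))^2+(\eta(x)-\eta(y))^2$, obtained by expanding both squares and using $1-\chi(x)\chi(y)-\eta(x)\eta(y) = \tfrac12 K_{\chi,\eta}(x,y)$. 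Since $\chi,\eta$ are smooth with bounded derivatives, $K_{\chi,\eta}(x,y)\le C\min(|x-y|^2,1)$. Inserting this into \eqref{eq:fractional-form}, bounding $2|u(x)||u(y)|\le |u(x)|^2+|u(y)|^2$, and using $\int_{\R^d}\min(|x-y|^2,1)|x-y|^{-d-2\sigma}\,dy \le C$ uniformly in $x$ (which holds precisely because $\sigma\in(0,1)$), I would obtain the base-case estimate
\begin{equation*}
\bigl|\,\|u\|^2_{\dot H^\sigma(\Omega)} - \|\chi u\|^2_{\dot H^\sigma(\Omega)} - \|\eta u\|^2_{\dot H^\sigma(\Omega)}\bigr| \le C\|u\|^2_{L^2(\Omega)},
\end{equation*}
i.e.\ the lemma with $t=0$.

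For $s=m+\sigma$ with $m\ge 1$, I would Leibniz-expand $D^\alpha(\chi u) = \chi D^\alpha u + R^\alpha_\chi u$, where $R^\alpha_\chi u$ contains only derivatives of $u$ of order at most $m-1$ with smooth coefficients built from $\chi$ (and analogously for $\eta$). Plugging this into $\|D^\alpha(\chi u)\|^2_{\dot H^\sigma(\Omega)}$ and summing over $|\alpha|=m$ produces three contributions: (i) the $\sigma$-level IMS error applied to $D^\alpha u$, bounded by $C\|u\|^2_{H^m(\Omega)}$ via the previous step (trivial when $\sigma=0$); (ii) the squared remainders $\|R^\alpha_{\chi/\eta}u\|^2_{\dot H^\sigma(\Omega)}$, bounded by $C\|u\|^2_{H^{s-1}(\Omega)}$; and (iii) cross terms $\operatorname{Re}\langle \chi D^\alpha u, R^\alpha_\chi u\rangle_{\dot H^\sigma(\Omega)}$ and the $\eta$-analogue.

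The cross terms in (iii) are where the main difficulty lies, since a naive Cauchy--Schwarz bound would reintroduce the full $\|u\|_{H^s(\Omega)}$. My plan is to notice that, after summation and symmetrization, the whole localization error equals the quadratic form of the sum of double commutators $[\chi,[(-\Delta)^s,\chi]] + [\eta,[(-\Delta)^s,\eta]]$, modulo boundary contributions that can be absorbed after first extending $u$ from $H^s(\Omega)$ to $H^s(\R^d)$ via the bounded extension theorem used in the proof of Lemma~\ref{lem:local_uncert}. Each commutator $[(-\Delta)^s,\chi]$ has pseudo-differential order $2s-1$, so the double commutators have order $2s-2$, and their quadratic forms are controlled by $C\|u\|^2_{H^{s-1}(\Omega)}$ (either by symbol calculus or by integration by parts that redistributes derivatives evenly between the two copies of $u$). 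Combining with (i) yields the overall bound $C\|u\|^2_{H^t(\Omega)}$ with $t=\max\{0,s-1\}\in[0,s)$. To convert this into the localized form stated in the lemma, I would apply the IMS identity at the lower order $t<s$ (available inductively on $\lfloor s\rfloor$) together with the trivial pointwise identity $|u|^2=|\chi u|^2+|\eta u|^2$, which passes $\|u\|^2_{H^t(\Omega)}$ to $\|\chi u\|^2_{H^t(\Omega)}+\|\eta u\|^2_{H^t(\Omega)}$ at the price of strictly lower-order terms.
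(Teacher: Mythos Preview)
Your base case $0<s<1$ matches the paper exactly, and your integer-case plan via the double-commutator identity is simply the abstract reformulation of what the paper does by hand: the key cancellation $\chi\partial_k\chi+\eta\partial_k\eta=\tfrac12\partial_k(\chi^2+\eta^2)=0$, which the paper exploits explicitly to kill the top-order cross term before integrating by parts, is precisely what makes the double commutator drop two orders rather than one. For the genuinely fractional case $s=m+\sigma$ with $m\ge 1$ and $0<\sigma<1$, however, the paper takes a different and more self-contained route: it never extends to $\R^d$ or invokes symbol calculus, but works directly with the Gagliardo double integral on $\Omega\times\Omega$, controlling the cross term by Cauchy--Schwarz with unbalanced weights $|x-y|^{\pm 2\eps}$ for $0<\eps<\min\{\sigma,1-\sigma\}$, which throws the top-order piece onto $\|D^\alpha(\chi u)\|_{\dot H^{\sigma-\eps}(\Omega)}^2$ and yields $t=s-\eps$ rather than your $t=s-1$.

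Your route is plausible and, if completed, would even give the sharper $t$; but two steps you wave away need genuine work. First, $|\xi|^{2s}$ is not in $S^{2s}_{1,0}$ (its $\xi$-derivatives blow up at the origin for non-integer $s$), so the assertion that $[\chi,[(-\Delta)^s,\chi]]$ has order $2s-2$ requires splitting off the low-frequency part as a smoothing operator or working in a larger symbol class --- standard, but it must be said. Second, and more seriously, the ``boundary contributions that can be absorbed'' are not automatic in the non-local case: unlike for integer $s$, the Gagliardo integrand of the $\R^d$ localization error does \emph{not} vanish pointwise on $(\R^d)^2\setminus\Omega^2$, and you have to use ${\rm dist}({\rm supp}\,\chi,\Omega^c)>0$ together with a \emph{universal} (Stein-type) extension bounded simultaneously on $H^m$ to show that the discrepancy between the $\R^d$ and $\Omega$ errors is controlled by $C\|u\|_{H^m(\Omega)}^2$. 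Once you invoke the global double-commutator identity for the full error, your separate bookkeeping of (i), (ii), (iii) also becomes redundant.
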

\begin{remark} 
	It will be clear from the proof of Lemma \ref{lem:fractional-IMS} 
	(provided below) that if $s\in \N$ then $t=s-1$, 
	and if $s=m+\sigma$ with $m\in \{0,1,2,\dots\}$ and $0<\sigma<1$
	then we can take $t=s-\eps$ for any $0<\eps<\min\{\sigma, 1-\sigma\}$.
\end{remark}

Note that such a localization bound is well known when $0<s\le 1$.
In the simplest case $s=1$, thanks to the IMS formula 
(cf. \cite[Theorem 3.2]{CycFroKirSim-87})
\begin{align*}
	|\nabla u|^2 = |\nabla(\chi u)|^2 + |\nabla (\eta u)|^2 - (|\nabla \chi|^2 + |\nabla \eta|^2)|u|^2,
\end{align*}
we obtain the estimate \eqref{eq:fractional-IMS} (with $t=0$) immediately:
\begin{align*} 
	\left| \|u\|^2_{\dot H^1(\Omega)} - \|\chi u\|^2_{\dot H^1(\Omega)} - \|\eta u\|^2_{\dot H^1(\Omega)} \right|
	=\int_\Omega(|\nabla \chi|^2 + |\nabla \eta|^2)|u|^2 \le  C \int_\Omega |u|^2.
\end{align*}
When $0<s<1$, the estimate
\begin{align*} 
	\left| \|u\|_{\dot H^s(\Omega)}^2- \|\chi u\|_{\dot H^s(\Omega)}^2 - \|\eta u\|_{\dot H^s(\Omega)}^2 \right| 
	\le C \int_\Omega |u|^2
\end{align*}
follows from the representation \eqref{eq:fractional-form}
$$
	\|u\|_{\dot H^s(\Omega)}^2=c_{d,s} \iint_{\Omega\times \Omega} \frac{|u(x)-u(y)|^2}{|x-y|^{d+2s}}\,dxdy
$$
and the elementary identity (which goes back to a suggestion of Michael Loss 
and was used in \cite{LieYau-88})
\begin{multline}\label{eq:Loss-formula}
	|\chi(x)u(x)-\chi(y)u(y)|^2 + |\eta(x)u(x)-\eta(y)u(y)|^2 -|u(x)-u(y)|^2 \\
	= \left[(\chi(x)-\chi(y))^2 + (\eta(x)-\eta(y))^2 \right] \Re[\overline{u(x)}u(y)].
\end{multline}
However, the proof of \eqref{eq:fractional-IMS} for  $s>1$ is rather involved 
and we defer it to the next subsection. 
In the following, we will give a proof of Lemma~\ref{lem:HLT-uncertainty} 
using Lemma~\ref{lem:fractional-IMS}. 

\begin{proof}[Proof of Lemma \ref{lem:HLT-uncertainty}]
	Since the inequality \eqref{eq:HLT-uncertainty} 
	that we wish to prove is dilation invariant, 
	we can assume without loss of generality that $|Q|=1$. 
	Let $\chi,\eta:\R^d \to [0,1]$ be two smooth functions such that 
	$\chi^2(x)+\eta^2(x)\equiv 1$, $\chi(x)\equiv 1$ when $|x|\le 1/4$ 
	and $\chi(x)\equiv 0$ when $|x|\ge 1/3$. 
	By using $\eta^2 |u|^2/|x|^{2s}\le 3^{2s} \eta^2 |u|^2$ and 
	Lemma~\ref{lem:fractional-IMS} we obtain
	for some $t \in [0,s)$
	\begin{align} \label{eq:HLT-proof-1}
		\|u\|_{\dot H^s(Q)}^2 - \mathcal{C}_{d,s} \int_Q\frac{|u|^2}{|x|^{2s}}\,dx
		&\ge \|\chi u \|_{\dot H^s(Q)}^2  - \mathcal{C}_{d,s} \int_Q\frac{|\chi u|^2}{|x|^{2s}}\,dx \nn\\
		&\quad + \|\eta u \|_{\dot H^s(Q)}^2 -  C_1\|\chi u \|_{H^{t}(Q)}^2 - C_1 \|\eta u\|_{H^{t}(Q)}^2
	\end{align}
	for some constant $C_1>0$ depending only on $d,s,t$ (and $\chi$).
	
	Since $\chi$ has compact support, $\chi u$ can be considered as a function in 
	$H^s(\R^d)$. Therefore, by the Gagliardo-Nirenberg type inequality 
	\eqref{eq:Hardy-improved-GN} (there taking $t = s/(1+2s/d)$),
	\begin{align} \label{eq:HLT-proof-2}
		\frac{1}{2} \left( \|\chi u \|_{\dot H^s(\R^d)}^2 - \mathcal{C}_{d,s} \int_{\R^d} \frac{|\chi u|^2}{|x|^{2s}}\,dx \right)
		\ge \frac{1}{C} \frac{ \int |\chi u|^{2(1+2s/d)}}{\Big( \int |\chi u|^2\Big)^{2s/d}}. 
	\end{align}
	Moreover, by using the improved Hardy inequality \eqref{eq:Hardy-improved} 
	and the norm-equivalence \eqref{eq:equiv-norm-Sobolev}, we find
	\begin{multline*}
		\left( \|\chi u \|_{\dot H^s(\R^d)}^2  
		- \mathcal{C}_{d,s} \int_{\R^d} \frac{|\chi u|^2}{|x|^{2s}}\,dx \right)^{t/s} \|\chi u \|_{L^2(\R^d)}^{2(1-t/s)} \\
		\ge  \frac{1}{C} \| \chi u \|_{\dot H^{t}(\R^d)}^2 \ge \frac{1}{C} \|\chi u\|^2_{H^t(\R^d)} - C \|\chi u \|_{L^2(\R^d)}^2,
	\end{multline*}
	which by Young's inequality implies that
	\begin{align} \label{eq:HLT-proof-3}
		\frac{1}{2}\left( \|\chi u \|_{\dot H^s(\R^d)}^2 - \mathcal{C}_{d,s} \int_{\R^d} \frac{|\chi u|^2}{|x|^{2s}}\,dx \right) 
		\ge C_1 \| \chi u \|_{H^{t}(\R^d)}^2 - C \|\chi u \|_{L^2(\R^d)}^2,
	\end{align}
	with $C_1$ as in \eqref{eq:HLT-proof-1} and
	a (large) constant $C>0$ depending only on $d,s,t$.  
	
	For the function $\eta u$, by the local uncertainty in Lemma~\ref{lem:local_uncert},
	\begin{align}  \label{eq:HLT-proof-4}
		\frac{1}{2} \|\eta u \|_{\dot H^s(Q)}^2 
		\ge \frac{1}{C} \frac{ \int_Q |\eta u|^{2(1+2s/d)}}{\Big( \int_Q |\eta u|^2\Big)^{2s/d}} -  C \|\eta u\|^2_{L^2(Q)}.
	\end{align}
	By using the extension and interpolation arguments as in the proof of 
	Lemma~\ref{lem:local_uncert}, we obtain
	\begin{align*}
	\| \eta u \|_{H^{s}(Q)}^{t/s} \|\eta u\|^{1-t/s}_{L^2(Q)} \ge C \|\eta u\|_{H^t(Q)},
	\end{align*}
	which, together with the norm-equivalence \eqref{eq:equiv-norm-Sobolev}, 
	gives the estimate
	\begin{align} \label{eq:HLT-proof-5}
		\frac{1}{2}\| \eta u \|_{\dot H^{s}(Q)}^2 \ge C_1 \|\eta u\|_{H^t(Q)}^2 - C \|\eta u\|_{L^2(Q)}^2
	\end{align}
	for a (large) constant $C>0$ depending only on $d,s,t$.  
	
	By summing inequalities 
	\eqref{eq:HLT-proof-1}-\eqref{eq:HLT-proof-2}-\eqref{eq:HLT-proof-3}-\eqref{eq:HLT-proof-4}-\eqref{eq:HLT-proof-5},
	using
	$$
		\|\chi u\|_{L^2(Q)}^2 + \|\eta u \|_{L^2(Q)}^2 = \|u\|_{L^2(Q)}^2
	$$
	and estimating the denominators, we arrive at 
	\begin{align*}
		\|u\|_{\dot H^s(Q)}^2 - \mathcal{C}_{d,s} \int_Q\frac{|u|^2}{|x|^{2s}}\,dx 
		\ge \frac{1}{C} \frac{\int_Q \Big( |\chi u|^{2(1+2s/d)} + |\eta u|^{2(1+2s/d)}\Big) }{\Big(\int_Q |u|^2 \Big)^{2s/d}} - C \|u\|_{L^2(Q)}^2 
	\end{align*}
	for a (large) constant $C>0$ depending only on $d,s$. 
	The final conclusion then follows from the elementary inequality
	$$
		\chi^{2p}+\eta^{2p} \ge 2 \left( \frac{\chi^2+\eta^2}{2}\right)^p = 2^{1-p}, \quad p=1+\frac{2s}{d}>1.
	$$
\end{proof}

\subsection{Proof of the fractional IMS localization formula}
\begin{proof}[Proof of Lemma \ref{lem:fractional-IMS}] 
	{\bf Step 1.} We start with the case $s=m\in \N$. 
	Recall that in our conventions
	$$
		\|u\|_{\dot H^m(\Omega)}^2 = \sum_{|\alpha|=m} \frac{m!}{\alpha!} \int_\Omega |D^\alpha u|^2.
	$$
	Let us consider an arbitrary multi-index $\alpha$ with $|\alpha|=m$. Using
	\begin{align} \label{eq:D(chiu)}
		D^\alpha (\chi u) 
		= \chi D^\alpha u + \sum_{\beta< \alpha} \frac{\alpha!}{\beta!(\alpha-\beta)!}D^{\alpha-\beta} \chi D^{\beta} u
	\end{align}
	and a similar formula for $D^\alpha(\eta u)$, we find that
	\begin{align} \label{eq:D(chiu)-square}
		&|D^\alpha (\chi u)|^2 + |D^\alpha (\eta u)|^2 = (\chi^2 +\eta^2) |D^\alpha u|^2 \nn\\
		&\quad + \left| \sum_{\beta<\alpha} \frac{\alpha!}{\beta!(\alpha-\beta)!}  D^{\alpha-\beta} \chi D^\beta u \right|^2 
		+ \left| \sum_{\beta<\alpha} \frac{\alpha!}{\beta!(\alpha-\beta)!} D^{\alpha-\beta} \eta  D^\beta u \right|^2 \nn\\
		&\quad + 2 \Re \sum_{\beta<\alpha} \frac{\alpha!}{\beta!(\alpha-\beta)!} (\chi D^{\alpha-\beta} \chi 
		+ \eta D^{\alpha-\beta} \eta )  D^{\alpha}\overline{u} D^\beta u.
	\end{align}
	Here, for two multi-indices $\alpha=(\alpha_1,\dots,\alpha_d)$ and 
	$\beta=(\beta_1,\dots,\beta_d)$, the notation $\beta<\alpha$ means 
	$\beta\le \alpha$, namely $\beta_j\le \alpha_j$ for all $1\le j \le d$, 
	and $\beta\ne \alpha$. 
	The first term of the right side of \eqref{eq:D(chiu)-square} is nothing but 
	$|D^\alpha u|^2$ since $\chi^2+\eta^2=1$. 
	The next two terms can be bounded using the Cauchy-Schwarz inequality
	\begin{align*}
		\left| \sum_{\beta<\alpha} \frac{\alpha!}{\beta!(\alpha-\beta)!} D^{\alpha-\beta} \chi D^\beta u \right|^2 
	 	+ \left| \sum_{\beta<\alpha} \frac{\alpha!}{\beta!(\alpha-\beta)!} D^{\alpha-\beta} \eta  D^\beta u \right|^2 
		\le C\sum_{\beta<\alpha}|D^\beta u|^2.
	\end{align*}
	Therefore, by integrating \eqref{eq:D(chiu)-square} and using the triangle 
	inequality we get 
	\begin{align} \label{eq:D(chiu)-square-1}
		&\left| \|\chi u\|_{\dot H^m(\Omega)}^2 + \|\eta u\|_{ \dot H^m(\Omega)}^2 - \|u\|_{\dot H^m(\Omega)}^2 \right| 
		\le C \|u\|_{H^{m-1}(\Omega)}^2 \nn\\
		&\quad + 2 \sum_{|\alpha|=m} \sum_{\beta<\alpha} \frac{\alpha!}{\beta!(\alpha-\beta)!} 
		\left| \iint_{\Omega\times \Omega} (\chi D^{\alpha-\beta} \chi + \eta D^{\alpha-\beta} \eta ) 
		 D^{\alpha}\overline{u} D^\beta u \right|.
	\end{align}
	Now we estimate the last term of \eqref{eq:D(chiu)-square-1}. 
	For every $\alpha$ with $|\alpha|=m$, we can find $0\le \alpha'<\alpha$ 
	and $1\le j\le d$ such that $D^\alpha=\partial_j D^{\alpha'}$. 
	Note that $\chi D^{\alpha-\beta} \chi + \eta D^{\alpha-\beta} \eta$ has support 
	in a compact subset of $\Omega$, so by using integration by parts with respect 
	to the $j$-th coordinate we find that
	\begin{align*}
		&\int_\Omega (\chi D^{\alpha-\beta} \chi + \eta D^{\alpha-\beta} \eta )  D^{\alpha}\overline{u} D^\beta u\\
		&\quad =- \int_\Omega D^{\alpha'} \overline{u} \partial_j \left( (\chi D^{\alpha-\beta} \chi + \eta D^{\alpha-\beta} \eta )  
		D^\beta u \right)\\
		&\quad =- \int_\Omega D^{\alpha'} \overline{u} \left(  \partial_j (\chi D^{\alpha-\beta} \chi + \eta D^{\alpha-\beta} \eta )  
		D^\beta u + (\chi D^{\alpha-\beta} \chi + \eta D^{\alpha-\beta} \eta ) \partial_j D^\beta u \right). 
	\end{align*}
	Therefore, when $|\beta|\le m-2$, by the Cauchy-Schwarz inequality 
	we can estimate
	\begin{align*}
		\left| \int_\Omega  (\chi D^{\alpha-\beta} \chi + \eta D^{\alpha-\beta} \eta )  D^{\alpha}\overline{u} D^\beta u  \right|  
		\le C\|u\|_{H^{m-1}(\Omega)}^2.
	\end{align*}
	On the other hand, if $\beta<\alpha$ and $|\beta|=m-1=|\alpha|-1$, 
	then $D^{\alpha-\beta}=\partial_k$ for some $1\le k \le d$, and hence 
	$$
		\chi D^{\alpha-\beta} \chi+\eta D^{\alpha-\beta} \eta= \frac{1}{2} \partial_{k} \left( \chi^2+\eta^2  \right)=0.
	$$
	Summarizing, \eqref{eq:D(chiu)-square-1} can be simplified to  
	\begin{align}\label{eq:D(chiu)-square-2}
		\left| \|\chi u\|_{\dot H^m(\Omega)}^2 + \|\eta u\|_{\dot H^m(\Omega)}^2 - \|u\|_{\dot H^m(\Omega)}^2 \right| 
		\le  C \|u\|_{H^{m-1}(\Omega)}^2. 
	\end{align}
	Since $\|u\|_{H^{m-1}(\Omega)}^2 \asymp \sum_{0\le n\le m-1} \|u\|_{\dot H^n(\Omega)}^2$, 
	we can continue estimating the right side of \eqref{eq:D(chiu)-square-2} 
	by induction and finally arrive at
	\begin{multline*}
		\left| \|\chi u\|_{\dot H^m(\Omega)}^2 + \|\eta u\|_{ \dot H^m(\Omega)}^2 - \|u\|_{\dot H^m(\Omega)}^2 \right|
		\le C \left( \|\chi u\|_{H^{m-1}(\Omega)}^2 + \|\eta u\|_{H^{m-1}(\Omega)}^2 \right). 
	\end{multline*}
	This ends the proof when $s=m\in \N$.
	
	\noindent
	{\bf Step 2.} Now we consider the case when $s=m+\sigma$ with $m\in \N$ 
	and $0<\sigma<1$. Let us start by considering
	\begin{align*}
		&\|\chi u\|^2_{\dot H^s(\Omega)} 
		= c_{d,\sigma}\sum\limits_{|\alpha|=m} \dfrac{m!}{\alpha!} \iint_{\Omega\times \Omega} \dfrac{|D^\alpha (\chi u) (x) 
		-D^\alpha (\chi u) (y)|^2}{|x-y|^{d+2\sigma}}\,dxdy.
	\end{align*}
	We will always denote by $\alpha$ an arbitrary multi-index with $|\alpha|=m$. 
	Using \eqref{eq:D(chiu)} and the identity 
	$|a+b|^2=|a|^2 + 2\Re (\overline{(a+b)}b)-|b|^2$ 
	(with complex numbers $a$ and $b$), we have 
	\begin{align} \label{eq:HLT-IMS-s>1-a}
		&|D^\alpha(\chi u)(x)-D^\alpha(\chi u)(y)|^2 \nn\\
		&\quad = \Big| \chi(x)D^\alpha u(x) -\chi(y) D^\alpha u(y) \nn\\
		&\qquad + \sum_{\beta<\alpha} \frac{\alpha!}{\beta!(\alpha-\beta)!} 
		\left( D^{\alpha-\beta} \chi(x) D^{\beta} u (x)- D^{\alpha-\beta} \chi(y) D^{\beta} u(y)\right) \Big|^2\nn\\
		&\quad =|\chi(x) D^\alpha u (x) - \chi(y) D^\alpha u (y)|^2  \nn\\
		&\qquad - \Big| \sum_{\beta<\alpha} \frac{\alpha!}{\beta!(\alpha-\beta)!} \left( D^{\alpha-\beta} \chi(x) D^{\beta} 
		u (x)- D^{\alpha-\beta} \chi(y) D^{\beta} u(y)\right) \Big|^2  \nn \\
		&\qquad +  2\Re \sum_{\beta< \alpha} \frac{\alpha!}{\beta!(\alpha-\beta)!} \Big(D^\alpha (\chi \overline{u})(x)- 
		D^\alpha (\chi \overline{u})(y)\Big)\times \nn\\
		&\qquad \times \Big( D^{\alpha-\beta} \chi(x) D^{\beta} u(x)- D^{\alpha-\beta} \chi(y) D^{\beta} u(y) \Big).
	\end{align}
	Now we estimate the right side of \eqref{eq:HLT-IMS-s>1-a} with the help of the 
	Cauchy-Schwarz inequality. We have 
	\begin{align*}
		&\left| D^{\alpha-\beta} \chi(x) D^{\beta} u(x)- D^{\alpha-\beta} \chi(y) D^{\beta} u(y) \right|^2 \\
		&\quad = \left| D^{\alpha-\beta} \chi(x) (D^{\beta} u(x)-D^\beta u(y)) 
		+ (D^{\alpha-\beta} \chi(x)-D^{\alpha-\beta} \chi(y)) D^{\beta} u(y) \right|^2 \\
		&\quad \leq 2|D^{\alpha-\beta} \chi(x)|^2 |D^{\beta} u(x)-D^\beta u(y)|^2\\
		&\qquad + 2 |D^{\alpha-\beta} \chi(x)-D^{\alpha-\beta} \chi(y)|^2 |D^{\beta} u(y)|^2 \\
		&\quad \leq C \left( |D^{\beta} u(x)-D^\beta u(y)|^2 + |x-y|^2 |D^{\beta} u(y)|^2 \right)
	\end{align*}
	for the second term and 
	\begin{align*}
		&\left| 2\Big(D^\alpha (\chi \overline{u})(x)- D^\alpha (\chi \overline{u})(y)\Big)\Big( D^{\alpha-\beta} \chi(x) 
		D^{\beta} u(x)- D^{\alpha-\beta} \chi(y) D^{\beta} u(y) \Big) \right| \\
		&\quad \leq |x-y|^{2\eps} |D^\alpha (\chi u)(x)- D^\alpha (\chi u)(y)|^2\\
		&\qquad+|x-y|^{-2\eps} \left| D^{\alpha-\beta} \chi(x) D^{\beta} u(x)- D^{\alpha-\beta} \chi(y) D^{\beta} u(y) \right|^2\\
		&\quad \leq |x-y|^{2\eps} |D^\alpha (\chi u)(x)- D^\alpha (\chi u)(y)|^2\\
		&\qquad+  C |x-y|^{-2\eps}
		\left( |D^{\beta} u(x)-D^\beta u(y)|^2 + |x-y|^2 |D^{\beta} u(y)|^2 \right)
	\end{align*}
	for the third term. Here we are choosing $0<\eps<\min\{\sigma, 1-\sigma\}$. 
	When inserting these estimates into \eqref{eq:HLT-IMS-s>1-a} we find
	\begin{align*}
		&\left| |D^\alpha(\chi u)(x)-D^\alpha(\chi u)(y)|^2 - |\chi(x) D^\alpha u (x) - \chi(y) D^\alpha u (y)|^2 \right| \\
		&\quad \le C |x-y|^{2\eps} |D^\alpha (\chi u)(x)- D^\alpha (\chi u)(y)|^2 \\
		&\qquad + C \sum_{\beta<\alpha}(1+|x-y|^{-2\eps}) \left( |D^{\beta} u(x)-D^\beta u(y)|^2 + |x-y|^2 |D^{\beta} u(y)|^2 \right).
	\end{align*}
	Integrating second part of the above inequality against the weight $|x-y|^{-(d+2\sigma)}$ 
	leads to
	\begin{align*}
		&\iint_{\Omega\times \Omega} 
		\frac{\left| |D^\alpha (\chi u)(x)- D^\alpha (\chi u)(y)|^2 - |\chi(x) D^\alpha u (x) - \chi(y) D^\alpha u (y)|^2 \right| }
		{|x-y|^{d+2\sigma}}\,dxdy \\
		&\leq \iint_{\Omega\times \Omega} \frac{|D^\alpha (\chi u)(x)- D^\alpha (\chi u)(y)|^2}{|x-y|^{d+2(\sigma-\eps)}} dxdy \\
		& + C\sum_{\beta<\alpha} \iint_{\Omega\times \Omega} \frac{(1+|x-y|^{-2\eps}) 
		\left( |D^{\beta} u(x)-D^\beta u(y)|^2 + |x-y|^2 |D^{\beta} u(y)|^2 \right)}{|x-y|^{d+2\sigma}}\,dxdy \\
	 	&\leq C\|  D^\alpha (\chi u)\|_{\dot H^{\sigma-\eps}(\Omega)}^2 + C\|u\|_{H^m(\Omega)}^2,
	\end{align*}  
	where we also estimated difference quotients involving $D^\beta u$
	in terms of $D^{\alpha'}u$, $|\alpha'|=m$.
	Combining the above with a similar inequality for $D^\alpha(\eta u)$, we find that
	\begin{align} \label{eq:HLT-IMS-s>1-b}
		&\iint_{\Omega\times \Omega} 
		\frac{\left| |D^\alpha (\chi u)(x)- D^\alpha (\chi u)(y)|^2 - |\chi(x) D^\alpha u (x) - \chi(y) D^\alpha u (y)|^2 \right| }
		{|x-y|^{d+2\sigma}}\,dxdy \nn\\
		& + \iint_{\Omega\times \Omega} 
		\frac{\left| |D^\alpha (\eta u)(x)- D^\alpha (\eta u)(y)|^2 - |\eta (x) D^\alpha u (x) - \eta(y) D^\alpha u (y)|^2 \right| }
		{|x-y|^{d+2\sigma}}\,dxdy \nn\\ 
		& \leq C\|  D^\alpha (\chi u)\|_{\dot H^{\sigma-\eps}(\Omega)}^2 
		+ C\|  D^\alpha (\eta u)\|_{\dot H^{\sigma-\eps}(\Omega)}^2 + C\|u\|_{H^m(\Omega)}^2.
	\end{align}  
	On the other hand, note that as in \eqref{eq:Loss-formula},
	\begin{align*}
		&\Big| |\chi(x) D^\alpha u (x) - \chi(y) D^\alpha u (y)|^2 
		+ |\eta(x) D^\alpha u (x) - \eta(y) D^\alpha u (y)|^2 \\
		&\quad\qquad- |D^\alpha u(x)- D^\alpha u(y)|^2 \Big| \\
		&\quad =  \left| \Big( (\chi(x)-\chi(y))^2 + (\eta(x)-\eta(y))^2 \Big) \Re D^\alpha \overline{u}(x)D^\alpha u(y) \right| \\
		&\quad \leq C |x-y|^2 \Big( |D^\alpha u(x)|^2 + |D^\alpha u(y)|^2 \Big).
	\end{align*}
	Integrating the latter inequality against the weight $|x-y|^{-(d+2\sigma)}$ 
	we get
	\begin{multline}\label{eq:HLT-IMS-s>1-c}
		\bigg| \iint_{\Omega\times \Omega} 
		\frac{|\chi(x) D^\alpha u (x) - \chi(y) D^\alpha u (y)|^2 + |\eta(x) D^\alpha u (x) - \eta(y) D^\alpha u (y)|^2}
		{|x-y|^{d+2\sigma}}\,dxdy\\
		- \iint_{\Omega\times \Omega} \frac{|D^\alpha u (x)-D^\alpha u(y)|^2}{|x-y|^{d+2\sigma}}\,dxdy \bigg| 
		\le C \int_\Omega |D^\alpha u|^2.
	\end{multline}
	From \eqref{eq:HLT-IMS-s>1-b}-\eqref{eq:HLT-IMS-s>1-c} and the 
	triangle inequality, it follows that 
	\begin{align*}
		&\bigg| \iint_{\Omega\times \Omega} 
		\frac{|D^\alpha (\chi u)(x)- D^\alpha (\chi u)(y)|^2 + |D^\alpha (\eta u)(x)- D^\alpha (\eta u)(y)|^2 }{|x-y|^{d+2\sigma}}\,dxdy\\
		& \qquad - \iint_{\Omega\times \Omega} \frac{|D^\alpha u (x)-D^\alpha u(y)|^2}{|x-y|^{d+2\sigma}} dxdy \bigg| \\
		&\quad \leq C\|  D^\alpha (\chi u)\|_{\dot H^{\sigma-\eps}(\Omega)}^2 
		+ C\|  D^\alpha (\eta u)\|_{\dot H^{\sigma-\eps}(\Omega)}^2 + C\|u\|_{H^m(\Omega)}^2
	\end{align*}  
	for all $|\alpha|=m$. By taking the sum over all $\alpha$'s with $|\alpha|=m$, 
	we get
	\begin{multline*}
		\left| \|\chi u\|_{\dot H^s(\Omega)}^2 + \|\eta u\|_{\dot H^s(\Omega)}^2 - \|u\|_{\dot H^s(\Omega)}^2 \right| \\
		\le C\Big( \|  \chi u\|_{\dot H^{s-\eps}(\Omega)}^2 + \|  \eta u\|_{\dot H^{s-\eps}(\Omega)}^2 + \|u\|_{H^m(\Omega)}^2 \Big).
	\end{multline*}
	Combining this with the estimate  
	$$
		\|u\|_{H^m(\Omega)}^2\le C (\|\chi u\|_{H^m(\Omega)}^2 + \|\eta u\|_{H^m(\Omega)}^2),
	$$
	which follows from the integer case in Step 1, we can conclude that
	\begin{align*}
		\left| \|\chi u\|_{\dot H^s(\Omega)}^2 + \|\eta u\|_{\dot H^s(\Omega)}^2 - \|u\|_{\dot H^s(\Omega)}^2 \right| 
		\leq C\Big( \|  \chi u\|_{\dot H^{s-\eps}(\Omega)}^2 + \|\eta u\|_{\dot H^{s-\eps}(\Omega)}^2 \Big).
	\end{align*}
	This is the desired inequality. 
\end{proof}

\subsection{Proof of the Hardy-Lieb-Thirring inequality}

\begin{proof}[Proof of Theorem~\ref{thm:HLT_frac}]
	By a standard approximation argument we can assume that $\rho_\Psi$ is 
	supported in a finite cube $Q_0 \subset \R^d$ which centers at $0$. 
	Let an arbitrary $0<\Lambda\le N$. 
	By Lemma~\ref{lem:covering} with $f=\rho_\Psi$, $k=3$ and $\alpha=2s/d$, there exists a division of $Q_0$
	into disjoint sub-cubes $Q$'s such that $\int_Q \rho_\Psi \le \Lambda$
	and
	\bq \label{eq:bound-f*f-f-3d}
		\sum_{Q} \frac{1}{|Q|^{\alpha}} \left[ \left(\int_{Q} f \right)^2   - \frac{\Lambda}{b} \int_{Q} f \right] \ge 0,
	\eq
	with
	$$
		b := \frac{3^d}{2}\left(1 + \sqrt{1 + \frac{1-3^{-d}}{3^{d\alpha}-1} } \right).
	$$
	Moreover, for every sub-cube $Q$ we have either that $Q$ centers at $0$ 
	or that $\inf_{x\in Q}|x|\ge |Q|^{1/d}/2$. 
		
	Now we claim that there exists a constant $C_1>0$ depending only on $d\ge 1$ 
	and $s>0$ such that for every sub-cube $Q$ and for every function $u\in H^s(Q)$ 
	we have the uncertainty relation
	\begin{align} \label{eq:HLT-uncertainty-proof}
		\|u\|_{\dot H^s(Q)} - \mathcal{C}_{d,s}\int_Q \frac{|u(x)|^2}{|x|^{2s}}\,dx 
		\ge \frac{1}{C_1}\frac{\int_Q |u|^{2(1+2s/d)}}{\Big(\int_Q |u|^2\Big)^{2s/d}} - \frac{C_1}{|Q|^{2s/d}} \int_Q |u|^2.
	\end{align}
	In fact, if $Q$ centers at $0$, then \eqref{eq:HLT-uncertainty-proof} is 
	covered by Lemma \ref{lem:HLT-uncertainty}. 
	On the other hand, if $0\notin Q$, then using $|x|\ge |Q|^{1/d}/2$ we have
	$$
		\int_Q \frac{|u|^2}{|x|^{2s}}\,dx \le \frac{2^{2s}}{|Q|^{2s/d}} \int_Q |u(x)|^2 dx
	$$
	and \eqref{eq:HLT-uncertainty-proof} is covered by Lemma~\ref{lem:local_uncert}. 
	Using \eqref{eq:HLT-uncertainty-proof} 
	and arguing in exactly the same way as in the 
	proof of Lemma~\ref{lem:many-body-local-uncertainty}, 
	we obtain the many-body estimate
	\begin{multline}\label{eq:many-body-local-uncertainty-HLT}
		\left\langle \Psi, \sum_{i=1}^N \Big((-\Delta_{i})^s - \mathcal{C}_{d,s}|x|^{-2s} \Big) \Psi \right\rangle
		\ge \sum_Q \left[ \frac{1}{C_1}\frac{\int_Q \rho_\Psi^{1+2s/d}}{\Big(\int_Q \rho_\Psi \Big)^{2s/d}} 
		- \frac{C_1}{|Q|^{2s/d}} \int_Q \rho_\Psi \right]\\
		\ge\frac{1}{C_1 \Lambda^{2s/d}} \int_{\R^d}  \rho_\Psi^{1+2s/d} - \sum_Q \frac{C_1}{|Q|^{2s/d}} \int_Q \rho_\Psi .
	\end{multline}
	Here in the last inequality of \eqref{eq:many-body-local-uncertainty-HLT} we 
	have used the bound $\int_Q \rho_\Psi \le \Lambda$ for all $Q$. 
	Combining \eqref{eq:many-body-local-uncertainty-HLT}, 
	Lemma~\ref{lem:local-ex} and \eqref{eq:bound-f*f-f-3d}, we find that
	\begin{align}\label{eq:HLT-final}
		&\left\langle \Psi, \left( \sum_{i=1}^N \Big((-\Delta_{i})^s - \mathcal{C}_{d,s}|x|^{-2s} \Big)
		+\sum_{1\le i<j \le N} \frac{1}{|x_i-x_j|^{2s}} \right) \Psi \right\rangle\nn\\
		&\ge \frac{1}{C_1 \Lambda^{2s/d}} \int_{\R^d}  \rho_\Psi^{1+2s/d}  
		+ \sum_Q \frac{1}{2d^s|Q|^{2s}} \left( \Big(\int_Q \rho_\Psi \Big)^2 - (2d^sC_1+1) \int_Q \rho_\Psi \right)\nn\\
		&\ge \frac{1}{C_1 \Lambda^{2s/d}} \int_{\R^d}  \rho_\Psi^{1+2s/d}  
		+ \left( \frac{\Lambda}{b}- 2d^sC_1 - 1 \right) \sum_Q \frac{1}{2d^s|Q|^{2s}} \int_Q \rho_\Psi
	\end{align}
	for all $0<\Lambda \le N$. 
	
	On the other hand, using the interpolation inequality 
	\eqref{eq:Hardy-improved-GN} with
	$$ 
		q=\frac{2d}{d-2t}=2 \left( 1+\frac{2s}{d}\right), 
		\quad \text{that is} \ \ t=\frac{ds}{d+2s},
	$$
	and the same argument of the proof of Lemma \ref{lem:many-body-local-uncertainty}, 
	we obtain the following strengthened version of \eqref{eq:LT-proof-N-small}:
	\begin{align}\label{eq:HLT-proof-N-small}
		\left\langle \Psi, \sum_{i=1}^N \Big( (-\Delta_{i})^s -\mathcal{C}_{d,s}|x|^{2s}\Big) \Psi \right\rangle 
		\ge C N^{-2s/d} \int_{\R^d} \rho_\Psi^{1+2s/d},
	\end{align}
	for a constant $C>0$ depending only on $d$ and $s$. 
	
	Finally, using \eqref{eq:HLT-final} with $\Lambda=(2d^sC_1 + 1)b=:\Lambda_0$ 
	if $N>\Lambda_0$, and using \eqref{eq:HLT-proof-N-small} if $N\le \Lambda_0$, 
	we find the desired inequality.
\end{proof}

\begin{remark}
	Also in this case it is possible to add a coupling parameter $\lambda>0$
	as in \eqref{eq:LT_frac_coupling},
	and a straightforward adaptation of \eqref{eq:HLT-final} yields
	for the corresponding constant
	$C(\lambda) \sim \min\{1,\lambda^{2s/d}\}$.
\end{remark}

\section{Interpolation inequalities}\label{sec:LT-interpolation}
\subsection{Equivalence for the Lieb-Thirring inequality} 
In this subsection, we provide a proof of Theorem~\ref{thm:LT_equiv},
i.e. the equivalence of the Lieb-Thirring inequality \eqref{eq:LT_frac}
and the one-body interpolation inequality \eqref{eq:LT-inter-1}. 
The implication of \eqref{eq:LT-inter-1} from \eqref{eq:LT_frac} was already 
explained in Section~\ref{ssec:interpolation} and it holds for all $0<s<d/2$. 
In the following, we show that the interpolation inequality 
\eqref{eq:LT-inter-1} implies the Lieb-Thirring inequality \eqref{eq:LT_frac} 
when $0<s<d/2$ and $s\le 1$. 
 
We will use the Hoffmann-Ostenhof and Lieb-Oxford inequalities, 
which reduce the kinetic and interaction energies of a many-body state to those of its density.

\begin{lemma}[Hoffmann-Ostenhof inequality]\label{lem:HO}
	For every $0<s\le 1$ and every normalized function $\Psi\in L^2((\R^d)^N)$, one has
	\bq \label{eq:Hof}
		\left\langle \Psi, \sum_{i=1}^N (-\Delta_{i})^s\Psi \right\rangle 
		\ge \langle  \sqrt{\rho_\Psi}, (-\Delta)^s \sqrt{\rho_\Psi} \rangle.
	\eq
\end{lemma}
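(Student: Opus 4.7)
The plan is to reduce both sides to sums/integrals over the spectral decomposition of the one-body density matrix, and then establish the key pointwise Cauchy--Schwarz inequality that controls the square root of the density. Concretely, I would write
$$
\gamma_\Psi^{(1)}(x,y) = \sum_{n\ge 1} u_n(x)\overline{u_n(y)},
\qquad \rho_\Psi(x) = \sum_{n\ge 1} |u_n(x)|^2,
$$
as in the proof of Lemma~\ref{lem:many-body-local-uncertainty}, so that
$$
\left\langle \Psi, \sum_{i=1}^N (-\Delta_{i})^s \Psi \right\rangle
= \Tr\bigl[(-\Delta)^s \gamma_\Psi^{(1)}\bigr]
= \sum_{n\ge 1} \langle u_n, (-\Delta)^s u_n\rangle.
$$
It therefore suffices to prove $\sum_n \langle u_n,(-\Delta)^s u_n\rangle \ge \langle \sqrt{\rho_\Psi},(-\Delta)^s\sqrt{\rho_\Psi}\,\rangle$.

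For $0<s<1$, I would use the double-integral representation \eqref{eq:fractional-form}, which reduces the desired bound to the pointwise inequality
$$
\sum_{n\ge 1} |u_n(x)-u_n(y)|^2 \;\ge\; \bigl|\sqrt{\rho_\Psi(x)}-\sqrt{\rho_\Psi(y)}\bigr|^2
\quad \text{for a.e. } x,y\in\R^d.
$$
Expanding both sides and cancelling the diagonal contributions $\sum_n|u_n(x)|^2 = \rho_\Psi(x)$ and $\sum_n|u_n(y)|^2 = \rho_\Psi(y)$, what is left to show is
$$
\Re \sum_{n\ge 1} \overline{u_n(x)}\,u_n(y) \;\le\; \sqrt{\rho_\Psi(x)\rho_\Psi(y)},
$$
which is the Cauchy--Schwarz inequality for $\ell^2$-sequences.

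For the endpoint $s=1$, the reduction gives $\sum_n \|\nabla u_n\|_{L^2}^2 \ge \|\nabla\sqrt{\rho_\Psi}\|_{L^2}^2$, and this follows from the pointwise estimate $|\nabla\sqrt{\rho_\Psi}|^2 \le \sum_n |\nabla u_n|^2$ on $\{\rho_\Psi>0\}$. This in turn comes from $\nabla\rho_\Psi = 2\Re\sum_n \overline{u_n}\nabla u_n$ and Cauchy--Schwarz:
$$
|\nabla\rho_\Psi|^2 \le 4\Bigl(\sum_n |u_n||\nabla u_n|\Bigr)^2
\le 4\,\rho_\Psi\sum_n|\nabla u_n|^2,
$$
so that $|\nabla\sqrt{\rho_\Psi}|^2 = |\nabla\rho_\Psi|^2/(4\rho_\Psi) \le \sum_n|\nabla u_n|^2$; on $\{\rho_\Psi = 0\}$ one has $\nabla\sqrt{\rho_\Psi}=0$ a.e., so the bound extends to all of $\R^d$.

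The step I expect to be the main obstacle is not the algebraic inequalities themselves, but handling the technical subtleties: justifying the spectral decomposition and the termwise identity $\Tr[(-\Delta)^s\gamma_\Psi^{(1)}] = \sum_n\langle u_n,(-\Delta)^s u_n\rangle$ for $\Psi$ in the quadratic form domain, ensuring the pointwise Cauchy--Schwarz applies for a.e.\ $(x,y)$, and verifying that $\sqrt{\rho_\Psi}\in H^s(\R^d)$ so that the right-hand side of \eqref{eq:Hof} is well defined. These are standard but require a density argument (e.g.\ truncating the sum to $N$-body states with finite-rank $\gamma_\Psi^{(1)}$ approximating $\Psi$) together with Fatou's lemma applied to the double integral in \eqref{eq:fractional-form}. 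Crucially, the restriction $0<s\le 1$ enters only through the availability of a nonnegative kernel representation (or the Bochner subordination writing $(-\Delta)^s = c\int_0^\infty (1-e^{t\Delta})t^{-1-s}dt$, from which the inequality could alternatively be deduced by applying the $s=1$ case to the heat-smoothed states); for $s>1$ no such pointwise comparison is available.
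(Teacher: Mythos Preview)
Your argument is correct and is essentially the standard proof. Note, however, that the paper does not actually prove this lemma: it only remarks that \eqref{eq:Hof} is equivalent to the one-body diamagnetic-type inequality $\langle u, (-\Delta)^s u\rangle \ge \langle |u|, (-\Delta)^s |u|\rangle$ and refers to \cite[Lemma~8.4]{LieSei-10} for a proof and further discussion.

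Your route via the spectral decomposition of $\gamma_\Psi^{(1)}$ and the pointwise $\ell^2$ Cauchy--Schwarz inequality is exactly the argument one finds in that reference. The paper's remark packages the same computation slightly differently: the pointwise inequality you prove,
\[
\sum_{n\ge 1} |u_n(x)-u_n(y)|^2 \;\ge\; \Bigl|\sqrt{\textstyle\sum_n |u_n(x)|^2}-\sqrt{\textstyle\sum_n |u_n(y)|^2}\Bigr|^2,
\]
is nothing but the reverse triangle inequality in $\ell^2$, and its scalar case $|u(x)-u(y)|^2 \ge \bigl||u(x)|-|u(y)|\bigr|^2$ is precisely what underlies the one-body inequality via \eqref{eq:fractional-form}. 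So the two formulations differ only in whether one first reduces to a one-body statement or works directly with the sequence $(u_n)$; the analytic content is identical. Your discussion of the technical points (form-domain issues, a.e.\ validity, $\sqrt{\rho_\Psi}\in H^s$) is appropriate and these are indeed handled by approximation and Fatou, as you indicate.
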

The non-relativistic case $s=1$ of \eqref{eq:Hof} was first discovered by 
M. \& T. Hoffman-Ostenhof \cite{Hof-77}. 
In fact, \eqref{eq:Hof} is equivalent to the one-body inequality 
$\langle u, (-\Delta)^s u\rangle \ge \langle |u|, (-\Delta)^s |u|\rangle $ 
(cf. the diamagnetic inequality \eqref{eq:diam_ineq}) 
and it is false when $s>1$.
See e.g. \cite[Lemma 8.4]{LieSei-10} for a proof of \eqref{eq:Hof} 
and further discussions. 

\begin{lemma}[Lieb-Oxford inequality for homogeneous potentials]\label{lem:LO}
	For every $0<\gamma<d$ and for every normalized function $\Psi\in L^2((\R^d)^N)$, one has
	\bq \label{eq:LO}
		\left\langle \Psi, \sum_{1\le i<j \le N} \frac{1}{|x_i-x_j|^{\gamma}}  \Psi \right\rangle 
		\ge \frac{1}{2} \iint \frac{\rho_\Psi(x)\rho_\Psi(y)}{|x-y|^{\gamma}}\,dxdy -C_{\rm LO} \int\rho_\Psi^{1+\gamma/d}
	\eq
	for a constant $C_{\rm LO}>0$ depending only on $d$ and $\gamma$.
\end{lemma}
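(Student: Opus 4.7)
The plan is to adapt Lieb's original Lieb-Oxford argument \cite{Lieb-79} to general homogeneous kernels $V(x):=|x|^{-\gamma}$, $0<\gamma<d$, using the crucial fact that $V$ is positive definite in this range, since its Fourier transform is proportional to $|k|^{\gamma-d}\ge 0$. The idea is to compare each point charge $\delta_{x_i}$ with a smeared version $\sigma_i$ -- e.g.\ the uniform probability measure on $B(x_i,\ell_i)$ with $\ell_i:=c_0\rho_\Psi(x_i)^{-1/d}$ for some small constant $c_0>0$ -- tuned so that the self-energy of each smeared charge is of the same order as the local value of $\rho_\Psi^{1+\gamma/d}$.

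I would first apply positive definiteness of $V$ to the signed measure $\sum_i\sigma_i-\rho_\Psi\,dx$, which pointwise in the configuration $(x_i)_{i=1}^N$ gives
\begin{equation*}
\sum_{i\neq j}\iint V\,d\sigma_i\,d\sigma_j \ge 2\sum_i\int(V*\sigma_i)(y)\rho_\Psi(y)\,dy - D(\rho_\Psi,\rho_\Psi) - \sum_i\iint V\,d\sigma_i\,d\sigma_i,
\end{equation*}
where $D(\rho,\rho):=\iint V(x-y)\rho(x)\rho(y)\,dx\,dy$. Next I would replace the smeared pair interactions by the point ones through inequalities of the form $\iint V\,d\sigma_i\,d\sigma_j\le V(x_i-x_j)+\mathrm{err}_{ij}$ and $(V*\sigma_i)(y)\ge V(x_i-y)-\mathrm{err}'_i(y)$, obtained via Fourier analysis of the Riesz kernel (or a rearrangement argument). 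The self-energy bound $\iint V\,d\sigma_i\,d\sigma_i\lesssim \ell_i^{-\gamma}=c_0^{-\gamma}\rho_\Psi(x_i)^{\gamma/d}$ combined with expectation against $|\Psi|^2$ produces
\begin{equation*}
\sum_i\langle\rho_\Psi(x_i)^{\gamma/d}\rangle_\Psi = \int\rho_\Psi(x)\,\rho_\Psi(x)^{\gamma/d}\,dx = \int\rho_\Psi^{1+\gamma/d}\,dx,
\end{equation*}
which is exactly of the desired order. Choosing $c_0$ small enough to absorb the error terms $\mathrm{err}_{ij}$, $\mathrm{err}'_i$ into a constant multiple of $\int\rho_\Psi^{1+\gamma/d}$ then yields \eqref{eq:LO}.

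The hardest step is the comparison between the smeared pair interaction $\iint V\,d\sigma_i\,d\sigma_j$ and the point value $V(x_i-x_j)$. In Lieb's original Coulomb case ($\gamma=1$, $d=3$) this is exact by Newton's theorem when the balls are disjoint, but no such identity is available for general $\gamma\in(0,d)$; one therefore needs a sufficiently sharp two-sided inequality whose total error, summed over pairs, is controlled by $\int\rho_\Psi^{1+\gamma/d}$. An alternative route avoiding the Newton-type comparison would be to use the Fefferman-de la Llave representation
\begin{equation*}
|x-y|^{-\gamma}=C_{d,\gamma}\int_0^\infty\!\!\int_{\R^d}\mathbf{1}_{B(z,r)}(x)\mathbf{1}_{B(z,r)}(y)\,r^{-d-\gamma-1}\,dz\,dr,
\end{equation*}
together with Cauchy-Schwarz at each scale: writing $N_{B(z,r)}:=\sum_i\mathbf{1}_{B(z,r)}(x_i)$ and $M(z,r):=\int_{B(z,r)}\rho_\Psi$, one obtains $\langle\Psi,\sum_{i<j}V\Psi\rangle\ge\tfrac{C_{d,\gamma}}{2}\iint(M^2-M)r^{-d-\gamma-1}dz\,dr$, whose $M^2$-part reproduces $\tfrac{1}{2}D(\rho_\Psi,\rho_\Psi)$. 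In this approach the difficulty instead becomes taming the formally divergent $-M$ contribution at small $r$, which can be handled by using the trivial pointwise bound $N_B(N_B-1)\ge 0$ for $r$ below the local scale $\rho_\Psi(z)^{-1/d}$ and absorbing the residual error into $\int\rho_\Psi^{1+\gamma/d}$ via a Hardy-Littlewood maximal-type estimate.
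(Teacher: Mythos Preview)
Your ``alternative route'' via the Fefferman--de la Llave decomposition is exactly the approach the paper takes, and your sketch of it is essentially correct. The paper writes both the pair interaction and the direct term as integrals over scales $R$ and centers $u$, applies Cauchy--Schwarz to obtain $g_R(u)\ge \tfrac12 f_R(u)^2-\tfrac12 f_R(u)$ (your $M^2-M$), and then --- rather than introducing a hard cutoff at the local scale $\rho_\Psi(z)^{-1/d}$ as you suggest --- combines this with the trivial bound $g_R\ge 0$ to get $g_R\ge \tfrac12 f_R^2-\tfrac12\min\{f_R,f_R^2\}$. The resulting error $\iint \min\{f_R,f_R^2\}\,R^{-d-\gamma-1}\,du\,dR$ is then bounded pointwise in $u$ by splitting the $R$-integral at $R_*=(|B_1|\rho^*(u))^{-1/d}$, where $\rho^*$ is the Hardy--Littlewood maximal function of $\rho_\Psi$, and the maximal inequality finishes the job. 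Your description captures all of this.

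Regarding your primary approach (smearing each point charge and using positive definiteness of the Riesz kernel): the paper explicitly discards this route, remarking that the argument in \cite{Lieb-79,LieOxf-80} relies on Newton's theorem and hence only works for the Coulomb kernel. You correctly identify the obstruction --- the comparison $\iint V\,d\sigma_i\,d\sigma_j\le V(x_i-x_j)$ is simply false for general $\gamma\in(0,d)$ when the balls overlap, and controlling the resulting error terms summed over pairs is not obviously reducible to $\int\rho_\Psi^{1+\gamma/d}$ without further ideas. So this is a genuine gap in the first approach, and the Fefferman--de la Llave route is the clean way around it.
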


The case $\gamma=1$ and $d=3$ of \eqref{eq:LO} was first studied in 
\cite{Lieb-79,LieOxf-80}. 
The case $\gamma=1$ and $d=2$ was proved in \cite[Lemma 5.3]{LieSolYng-95}. 
A proof of Lemma \ref{lem:LO} following the strategy in \cite{LieSolYng-95} 
is provided in Appendix \ref{apd:LO}.  

We are now in a position to complete the proof of equivalence.
\begin{proof}[Proof of Theorem~\ref{thm:LT_equiv}] 
	We prove that \eqref{eq:LT-inter-1} implies \eqref{eq:LT_frac} 
	when $0<s<d/2$ and $s\le 1$. 
	By the Hoffmann-Ostenhof inequality \eqref{eq:Hof} and the 
	Lieb-Oxford inequality \eqref{eq:LO}, one has
	\begin{multline*} 
		\left\langle \Psi, \left( \sum_{i=1}^N (-\Delta_{i})^{s} + \sum_{1\le i<j \le N} \frac{1}{|x_i-x_j|^{2s}} \right) \Psi \right\rangle \\
		\ge \langle  \sqrt{\rho_\Psi}, (-\Delta)^s \sqrt{\rho_\Psi} \rangle 
		+ \frac{\eps}{2} \iint_{\R^d \times \R^d} \frac{\rho_\Psi(x)\rho_\Psi(y)}{|x-y|^{2s}}\,dxdy 
		-  \eps  C_{\rm LO} \int_{\R^d} \rho_\Psi^{1+2s/d}
	\end{multline*}
	for every $\eps\in (0,1]$. On the other hand, by using Young's inequality and the interpolation 
	inequality \eqref{eq:LT-inter-1} with $u = \sqrt{\rho_\Psi}$, we obtain
	\begin{align*} 
		&\left(1-\frac{2s}{d} \right)\langle  \sqrt{\rho_\Psi}, (-\Delta)^s \sqrt{\rho_\Psi} \rangle 
		+ \eps \frac{2s}{d} \iint_{\R^d \times \R^d} \frac{\rho_\Psi(x)\rho_\Psi(y)}{|x-y|^{2s}}\,dxdy \\
		&\quad \ge \eps^{2s/d}\langle  \sqrt{\rho_\Psi}, (-\Delta)^s \sqrt{\rho_\Psi} \rangle^{1-2s/d}
		 \left( \iint_{\R^d \times \R^d} \frac{\rho_\Psi(x)\rho_\Psi(y)}{|x-y|^{2s}}\,dxdy \right)^{2s/d} \\
		&\quad \ge C \eps^{2s/d} \int \rho_\Psi^{1+2s/d}
	\end{align*}  
	for a constant $C >0$ depending only on $d$ and $s$. Thus 
	\begin{align*} 	
		\left\langle \Psi, \left( \sum_{i=1}^N (-\Delta_{i})^s +  \sum_{1\le i<j \le N} \frac{1}{|x_i-x_j|^{2s}} \right) \Psi \right\rangle 
		\ge  \Big( C \eps^{2s/d}  -  C_{\rm LO} \eps \Big) \int \rho_\Psi^{1+2s/d}
	\end{align*}
	for all $\eps\in (0,1]$. As $2s/d<1$, we can choose $\eps>0$ small enough such that 
	$$
		C \eps^{2s/d}  -  C_{\rm LO} \eps>0.
	$$
	Then the Lieb-Thirring inequality \eqref{eq:LT_frac} follows.
\end{proof}

\subsection{Isoperimetric inequality with non-local term}

In the following we show how to use our local approach to Lieb-Thirring
inequalities to prove the one-body interpolation inequality in
Theorem~\ref{thm:iso}.

\begin{proof}[Proof of Theorem~\ref{thm:iso}]
	By a standard approximation argument, we can assume that $u$ is supported in 
	a finite cube $Q_0 \subset \R^d$. 
	Let $f(x) := |u(x)|^{2s}$.
	For an arbitrary $0<\Lambda \le \int_{\R^d} f$, 
	we divide $Q_0$ into disjoint 
	sub-cubes $Q$'s by applying Covering Lemma \ref{lem:covering} with $k=2$ and $\alpha=2s/d$. 
	Thus we have $\int_Q f \le \Lambda$ for all cubes $Q$'s and
	\begin{align} \label{eq:iso-covering-gen1}
		\sum_{Q} \frac{1}{|Q|^{\alpha}} \left[ \left(\int_{Q} f \right)^2 
			- \frac{\Lambda}{a} \int_{Q} f \right] \ge 0, 
		\quad a:= \frac{2^d}{2}\left(1 + \sqrt{1 + \frac{1-2^{-d}}{2^{2s}-1} } \right).
	\end{align}
	
	Similarly to the proof of Lemma~\ref{lem:local-ex}, 
	by ignoring the interaction energy between different cubes and using 
	$|x-y|\le \sqrt{d}|Q|^{1/d}$ for $x,y\in Q$, we have
	\begin{align} \label{eq:interaction-lower-gen1}
		&\iint_{\R^d\times \R^d} \frac{f(x)f(y)}{|x-y|^{2s}} 
		\ge \sum_{Q} \iint_{Q\times Q} \frac{f(x)f(y)}{|x-y|^{2s}} 
		\ge \sum_{Q} \frac{1}{d^s|Q|^{2s/d}} \left( \int_{Q} f \right)^2 .
	\end{align}
	On the other hand, by the Sobolev inequality 
	(recall that $1 \le 2s < d$)
	\begin{align}  \label{eq:iso-Sobolev-Q-gen1}
		\|u\|_{W^{1,2s}(Q)} \ge C\|u\|_{L^q(Q)},
		\quad q = \frac{2sd}{d-2s} > 2s,
	\end{align} 
	we have
	$$
		\|u\|_{W^{1,2s}(Q)}^{2s} 
		\ge C\|f\|_{L^{\frac{d}{d-2s}}(Q)}
		\ge C\frac{\int_Q f^{1+2s/d}}{\left(\int_Q f\right)^{2s/d}}.
	$$
	Hence,
	\begin{multline*}
		\int_{\R^d} |\nabla u|^{2s} 
			+ \sum_Q \frac{1}{|Q|^{2s/d}} \int_Q |u|^{2s}
		= \sum_Q \left( \int_Q |\nabla u|^{2s} 
			+ \frac{1}{|Q|^{2s/d}} \int_Q |u|^{2s} \right) \\
		\ge \sum_Q 2^{1-2s} \|u\|_{W^{1,2s}(Q)}^{2s}
		\ge C\sum_Q \frac{\int_Q |u|^{2s(1+2s/d)}}{\left(\int_Q f\right)^{2s/d}},
	\end{multline*}
	and, combining with \eqref{eq:interaction-lower-gen1}
	and \eqref{eq:iso-covering-gen1},
	\begin{align*}
		&\int_{\R^d} |\nabla u|^{2s} dx
		+ \iint_{\R^d\times \R^d} \frac{|u(x)|^{2s} |u(y)|^{2s}}{|x-y|^{2s}}\,dxdy\\
		&\quad \ge \frac{C_1}{\Lambda^{2s/d}} \int_{\R^d} |u|^{2s(1+2s/d)} 
		+ \sum_Q \frac{1}{|Q|^{2s/d}} \left(\frac{1}{d^s}\left(\int_Q f\right)^2 - \int_Q f \right) \\
		&\quad \ge \frac{C_1}{\Lambda^{2s/d}} \int_{\R^d} |u|^{2s(1+2s/d)} 
		+ \left( \frac{\Lambda}{d^s a} - 1 \right) \sum_Q \frac{1}{|Q|^{2s/d}} \int_Q f.
	\end{align*}
	Thus, if $\int_{\R^d} f \ge d^s a$, then we can simply choose 
	$\Lambda=d^s a$ and conclude that
	\begin{align*}
		& \int_{\R^d} |\nabla u|^{2s} 
			+ \iint_{\R^d\times \R^d} \frac{|u(x)|^{2s} |u(y)|^{2s}}{|x-y|^{2s}}\,dxdy 
		\ge \frac{C_1}{(d^s a)^{2s/d}} \int_{\R^d} |u|^{2s(1+2s/d)}.
	\end{align*}
	On the other hand, if $\int_{\R^d} f \le d^s a$, then 
	using Sobolev's inequality
	\begin{align} \label{eq:iso-Sobolev-Rd}
		\|\nabla u\|_{L^{2s}(\R^d)} \ge C_2\|u\|_{L^{2sd/(d-2s)}(\R^d)},
		\quad \forall u\in W^{1,2s}(\R^d)
	\end{align}
	and H\"older's inequality we have
	\begin{align*}
		& \int_{\R^d} |\nabla u|^{2s} \ge C_2 \|f\|_{L^{d/(d-2s)}(\R^d)} 
		\ge C_2 \frac{\int_{\R^d} f^{1+2s/d}}{\Big(\int_{\R^d} f\Big)^{2s/d}} 
		\ge \frac{C_2}{(d^s a)^{2s/d}} \int_{\R^d} |u|^{2s(1+2s/d)}.
	\end{align*}
	In summary, it always holds that
	\begin{align} \label{iso-LT-type}
		& \int_{\R^d} |\nabla u|^{2s} dx
			+ \iint_{\R^d\times \R^d} \frac{|u(x)|^{2s} |u(y)|^{2s}}{|x-y|^{2s}}\,dxdy 
		\ge \frac{\min\{C_1,C_2\}}{(d^s a)^{2s/d}} \int_{\R^d} |u|^{2s(1+2s/d)}.
	\end{align}
	
	By proceeding as for the Lieb-Thirring inequality in Section~\ref{ssec:interpolation},
	that is rescaling $u \mapsto \mu u$ and optimizing over $\mu>0$,
	we obtain the interpolation inequality \eqref{eq:LT-inter-3}.
\end{proof}

\appendix

\section{Lieb-Oxford inequality for homogeneous potentials} \label{apd:LO}
In this appendix we prove Lemma \ref{lem:LO}. Note that the argument in the original papers \cite{Lieb-79,LieOxf-80} uses Newton's theorem and hence only works with 
the standard Coulomb interaction. The following proof is based on the strategy of Lieb, Solovej and Yngvason \cite[Lemma 5.3]{LieSolYng-95}. 

\begin{proof} [Proof of Lemma \ref{lem:LO}]
	We start with the Fefferman-de la Llave representation 
	 \[
	 	\frac{1}{{|x - y|^\gamma}} 
		= c_{d,\gamma}\int_0^{\infty}\int_{\R^d}\1_{B_R}(x - u)\1_{B_R}(y - u)du\,\frac{dR}{R^{d+\gamma+1}},
	 \]
	where $B_R=\overline{B(0,R)}$ is the closed ball in $\R^d$ and $c_{d,\gamma}$ is a constant 
	depending only on $d$ and $\gamma$ (see \cite{FefLla-86} for Coulomb potential, 
	\cite[Theorem 9.8]{LieLos-01} for homogeneous potentials and \cite[Theorem 1]{HaiSei-02} for more general cases). 
	Consequently,
	\bq\label{eq:LO-eq1}
		\iint_{\R^d \times \R^d} \frac{\rho_\Psi (x)\rho_\Psi (y)}{|x - y|^\gamma}\,dxdy
		= \int_0^{\infty} \int_{\R^d} f_R(u)^2du\,\frac{dR}{R^{d+\gamma+1}},
	\eq
	where
	$$
		f_R :=  \rho_\Psi*\1_{B_R}
	$$
	and  
	\begin{align} \label{eq:LO-eq2}
		\left\langle \Psi, {  \sum\limits_{1 \leqslant i < j \leqslant n} {\frac{1}{{|x_i  - x_j |^\gamma}}} } \Psi \right\rangle 
		=  c_{d,\gamma} \int_0^\infty \int_{\R^d}g_R(u)du\,\frac{dR}{R^{d+\gamma+1}}
	\end{align}
	where
	$$
		g_R(u): = \left \langle \Psi, \sum\limits_{1 \leqslant i < j \leqslant N} \1_{B_R} (x_i  - u)\1_{B_R} (x_j  - u) \Psi \right \rangle .
	$$
	Using the Cauchy-Schwarz inequality we find that 
	\begin{align*}
		g_R(u) &= \frac{1}{2}\left \langle \Psi, \Big(\sum\limits_{i=1}^N \1_{B_R} (x_i  - u) \Big)^2 \Psi \right \rangle 
		-\frac{1}{2} \left \langle \Psi, \sum\limits_{i=1}^N \1_{B_R} (x_i  - u) \Psi \right \rangle\\
	    	&\ge \frac{1}{2} \left \langle \Psi, \sum\limits_{i=1}^N \1_{B_R} (x_i  - u) \Psi \right \rangle ^2 
		- \frac{1}{2} \left \langle \Psi, \sum\limits_{i=1}^N \1_{B_R} (x_i  - u) \Psi \right \rangle \\
		& = \frac{1} {2}f^2_R(u) - \frac{1} {2}f_R(u).
	\end{align*}
	Combining with the obvious inequality $g_R(u)\ge 0$ we get
	 $$
		g_R(u) \ge \frac{1}{2}f^2_R(u) - \frac{1}{2}\min \{ f_R(u),f^2_R(u)\}.
	 $$
	 Inserting the latter inequality into \eqref{eq:LO-eq2} and using \eqref{eq:LO-eq1}, we conclude that
	\begin{align}\label{eq:LO-eq3}
		\left\langle \Psi, \sum_{1 \leqslant i < j \leqslant n} \frac{1}{{|x_i  - x_j |^\gamma}} \Psi \right\rangle 
		&\ge \frac{1}{2}\iint_{\R^d \times \R^d} {\frac{{\rho_\Psi (x)\rho_\Psi (y)}}{{|x - y|^2}}dxdy} \\
		&\quad -\frac{c_{d,\gamma}}{2} \int_0^\infty  {\int_{\R^d } \min \{ f_R(u),f^2_R(u)\} 
		du\frac{{dR}}  {{R^{d+\gamma+1} }}} \nn.
	\end{align}
	To estimate the second term of the right side, we introduce the Hardy-Littlewood maximal function of $\rho_\Psi$:
	 $$
	 	\rho ^* (u) := \sup_{R > 0} \frac{1}{{|B(0,R)|}}\int_{|x-u|\le R } {\rho_\Psi (x)dx}  = |B_1|^{-1} \mathop {\sup }
		\limits_{R > 0} \frac{{ f_R(u)}}{{R^d }}.
	 $$
	Using $f_R(u)\le |B_1| R^{d} \rho^*(u)$, we find that 
	\begin{align*}
	  	&\int_{0}^\infty  \min\{ f_R^2 (u),f_R (u)\} \frac{dR}{R^{d+\gamma+1}} 
		\le \int_0^{R_*} f_R^2(u)\,\frac{dR}{R^{d+\gamma+1}}  
		\,+\, \int_{R_* }^\infty f_R (u)\,\frac{dR}{R^{d+\gamma+1}}\\
	    	&\quad \le \int_0^{R_* } \left( |B_1| R^d \rho^*(u)\right)^2 \frac{dR}{R^{d+\gamma+1}}
		+ \int_{R_* }^\infty  |B_1|R^{d} \rho^*(u)\,\frac{dR}{R^{d+\gamma+1}}\\
	     	&\quad =\frac{|B_1|^2}{d-\gamma} R_*^{d-\gamma} (\rho ^* (u))^2  
		+\frac{|B_1|}{\gamma} R_*^{-\gamma} \rho ^* (u)
	\end{align*}
	for all $u\in \R^d$ and for all $R_*>0$. Choosing $R_* =( |B_1| \rho^*(u) )^{-1/d}$, we get 
	\begin{align*}
		\int_{0}^\infty  \min \{ f_R^2 (u),f_R (u)\}\,\frac{dR}{R^{d+\gamma+1}}
		\le \frac{d}{\gamma(d-\gamma)} |B_1|^{1+\gamma/d} (\rho^*(u))^{1+\gamma/d}
	\end{align*}
	for all $u\in \R^d$. Finally, by the maximal inequality (see, e.g. \cite[p.58]{SteWei-71})
	$$
		\int_{\R^d} (\rho^*(u))^{1+\gamma/d} du 
		\le M_{d,\gamma} \int_{\R^d} \rho_\Psi(u)^{1+\gamma/d} du,
	$$
	where $M_{d,\gamma}$ is a constant depending only on $d$ and $\gamma$, we conclude from \eqref{eq:LO-eq3} that
	\begin{multline*}
		\left\langle \Psi, {  \sum\limits_{1 \leqslant i < j \leqslant n} {\frac{1}{{|x_i  - x_j |^\gamma}}} } \Psi \right\rangle\\
		\ge \frac{1}{2}\iint_{\R^d \times \R^d} {\frac{{\rho_\Psi (x)\rho_\Psi (y)}}{{|x - y|^\gamma}}} dxdy
		- \frac{dc_{d,\gamma} 
		M_{d,\gamma} }{2\gamma(d-\gamma)} |B_1|^{1+\gamma/d} 
		\int_{\R^d} \rho_\Psi ^{1+\gamma/d}.
	\end{multline*}
	This is the desired inequality.
\end{proof}

\bibliographystyle{siam}

\end{document}